\documentclass[
notitlepage 
,floatfix,
aps,
prx,
reprint,
superscriptaddress,
]{revtex4-2}
\usepackage[utf8]{inputenc}
\usepackage[most]{tcolorbox}
\usepackage[english]{babel}
\usepackage{graphicx}    %% Including graphics into  latex (any images)
\usepackage{amsmath,amssymb}
\usepackage{braket}
\usepackage{algorithm2e}
\usepackage{amsthm}
\usepackage{xcolor}

\newcommand{\smc}[0]{MConnect}
\newcommand{\pcm}[0]{PrepCM}

\definecolor{algorithmBackground}{RGB}{248,251,254}
\definecolor{algorithmTitle}{RGB}{224,238,249}
\definecolor{algorithmBorder}{RGB}{155,185,207}

\definecolor{algorithmBackground}{RGB}{248,249,250}
\definecolor{algorithmTitle}{RGB}{232,235,238}
\definecolor{algorithmBorder}{RGB}{185,190,195}

\newtcolorbox{algorithmBox}[1]{
  enhanced,
  colback=algorithmBackground,
  colbacktitle=algorithmTitle,
  colframe=algorithmBorder,
  coltitle=black,
  fonttitle=\normalfont\normalsize,
  title={#1},
  boxrule=0.4pt,
  arc=1pt,
  left=5pt,
  right=5pt,
  top=3pt,
  bottom=3pt,
  toptitle=1.5pt,
  bottomtitle=1.5pt,
  before skip=8pt,
  after skip=8pt
}

\usepackage[dvipsnames]{xcolor}

\DeclareMathOperator{\trace}{tr}

\theoremstyle{plain}
\newtheorem{definition}{Definition}

\newtheorem{result}{Result}
\newtheorem{lemma}{Lemma}

\newenvironment{restatement}[1]{\textbf{Restatement of #1}}

\begin{document}

\title{Quantum Circuit Fragments and Link Products in Continuous Variables}

\author{Amalina Lai}
\email{nura0089@e.ntu.edu.sg}
\affiliation{Nanyang Quantum Hub, School of Physical and Mathematical Sciences, Nanyang Technological University, Singapore}
\affiliation{Centre for Quantum  Technologies, Nanyang Technological University, Singapore}

\author{Graeme D. Berk}
\email{graemedean.berk@ntu.edu.sg}
\affiliation{Nanyang Quantum Hub, School of Physical and Mathematical Sciences, Nanyang Technological University, Singapore}
\affiliation{Centre for Quantum  Technologies, Nanyang Technological University, Singapore}

\author{Minjeong Song}
\affiliation{Centre for Quantum Technologies, National University of Singapore, Singapore}

\author{Mile Gu}
\email{gumile@ntu.edu.sg}
\affiliation{Nanyang Quantum Hub, School of Physical and Mathematical Sciences, Nanyang Technological University, Singapore}
\affiliation{Centre for Quantum  Technologies, Nanyang Technological University, Singapore}
\affiliation{Centre for Quantum Technologies, National University of Singapore, Singapore}

\begin{abstract}
Quantum circuits are often drawn as complete processes, with fixed inputs and outputs. In many quantum-information tasks, however, the natural object is only a fragment of such a circuit: an unknown source of non-Markovian noise to be probed, a subroutine to be inserted into a larger algorithm, or an agent implementing an adaptive strategy. In finite dimensions, the link product provides a systematic means to analyze how such circuit fragments interact and compose. Here, we develop the corresponding framework for continuous-variable systems. We introduce continuous-variable circuit fragments and associated link products that stitch such fragments together into larger processes. We show that the formalism simplifies substantially in the Gaussian regime, where link products can be evaluated efficiently using covariance-matrix representations. We use the formalism to construct non-Markovian processes and adaptive agent-environment interactions from modular components. This extends the quantum-comb toolkit to continuous variables, providing systematic methods for adaptive sensing, non-Markovian noise mitigation, and higher-order quantum circuit design.
\end{abstract}

\maketitle

\setlength{\parskip}{0.6em}
\setlength{\parindent}{0pt}

Many quantum-information tasks do not neatly fit the picture of a single quantum process acting on an input state to produce an output to be measured. A probe of a system need not be a one-shot measurement: powerful strategies can be interactive, with an agent actively intervening, sending probes into a process, and adapting future actions depending on earlier outcomes~\cite{yang2019memory, yunlong2023uncertainty, liu2024efficient} (see Fig.~\ref{fig:general-cv-comb}). The dynamics being probed may also be non-Markovian, such that successive channel uses are correlated and the effect of a later intervention can depend on earlier interventions~\cite{diosi2014general, camasca2021memory, benedetti2014non, vasile2011quantifying}. Indeed, the input to a quantum algorithm is often not encoded in an input state, but rather in a small quantum circuit to be processed by inserting it as a subroutine within a larger circuit~\cite{chiribella2008transforming,chiribella2008quantum,thompson2018quantum,quintino2019probabilistic}. In all these settings, the relevant object is not a closed quantum channel, but an open circuit fragment; a central question is then how such fragments interact and compose into larger quantum processes.
\begin{figure}[t!]
\centering
\includegraphics[width=\linewidth]{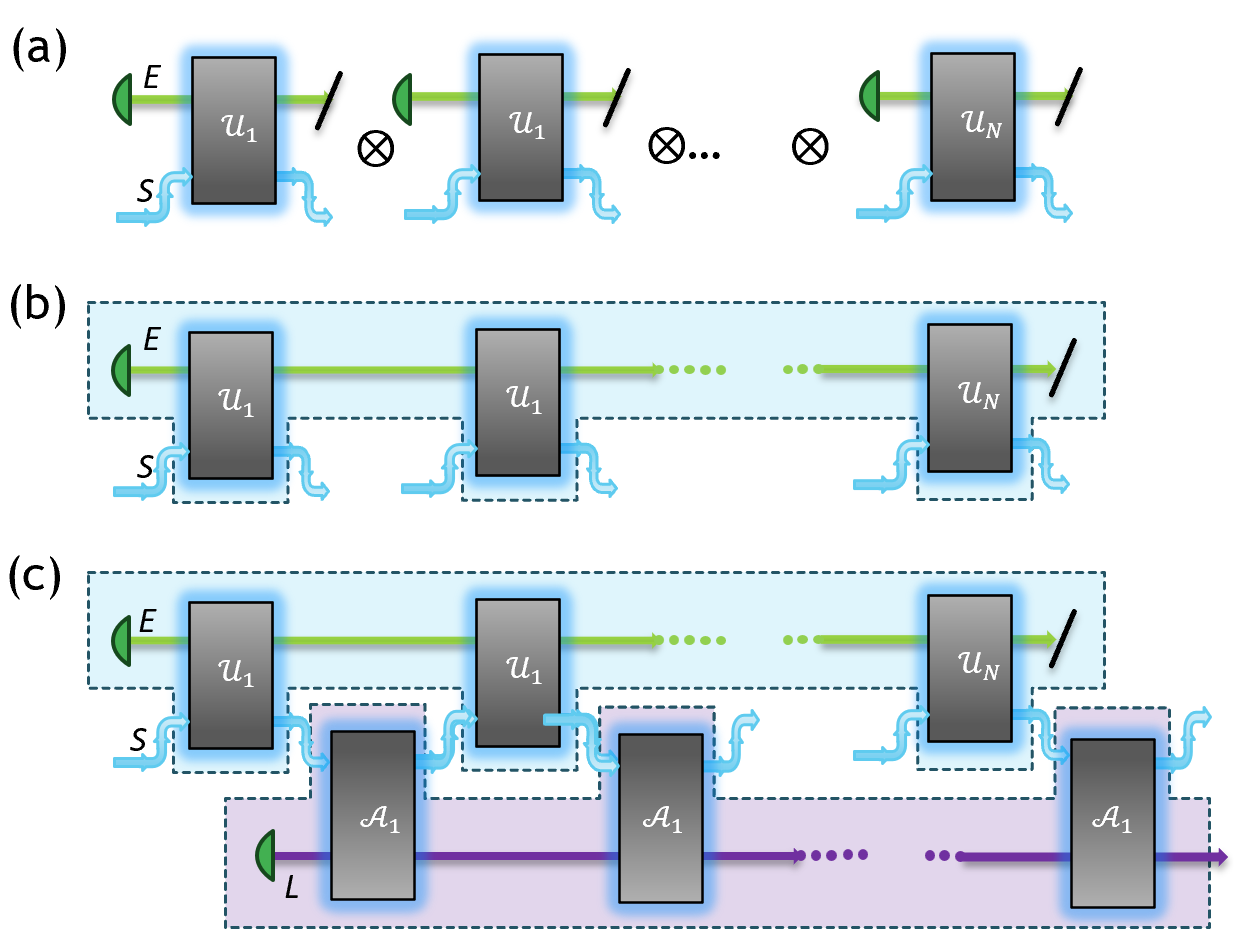}
\caption{\textbf{Quantum circuit fragments and link products}. In traditional settings, we often treat a quantum protocol as the application of some quantum process to a system $S$ - mathematically equivalent to a unitary interaction with an environment $E$ that is subsequently traced out (a). In many settings, however, the environment is non-Markovian, such that outputs at future times can depend on past inputs. Such objects are naturally treated as multi-time quantum circuit fragments with open ports (b). The quantum comb formalism provides a systematic means to describe such fragments and to connect them through the link product, as illustrated by the blue and violet fragments in (c). This allows us to describe interactive measurements in which an agent probes a process over multiple time steps. The same formalism is also central to higher-order quantum operations, where the inputs to a protocol may themselves be circuits or devices.}
\label{fig:general-cv-comb}
\end{figure}

In finite dimensions, quantum circuit fragments and their associated link products provide systematic tools for answering such questions~\cite{chiribella2008quantum,gutoski2007toward,chiribella2009theoretical,taranto2025higher}. In this picture, the input to an algorithm, the evolution of a non-Markovian quantum process, and the experimenter's probing strategy are all modular circuit fragments - intuitively understood as building blocks with open slots that can connect to a larger quantum circuit~\cite{ried2015quantum,xing2023fundamental}. An associated link product then formalizes how these circuit fragments interconnect~\cite{chiribella2009theoretical}. This toolkit enjoys remarkable success, underlying the use of quantum combs to optimize circuit architectures and describe adaptive metrology~\cite{chiribella2008quantum,chiribella2009theoretical,yang2019memory,kurdzialek2025quantum,liu2024efficient}, quantum circuit fragments for discerning causal structure~\cite{yunlong2023uncertainty, ried2015quantum}, higher-order quantum operations that transform one circuit to another~\cite{quintino2019probabilistic,taranto2025higher}, quantum strategies to play quantum games~\cite{gutoski2007toward}, the process tensor formalism for non-Markovian open systems~\cite{milz2017introduction,pollock2018non,milz2021quantum} and its applications in suppressing non-Markovian noise~\cite{milz2018reconstructing,white2022non,white2023filtering,tanggara2024strategic,kobayashi2024tensor}. 

Despite this vast diversity in application, such studies have so far focused on finite-dimensional quantum systems. Yet in many of these settings, continuous-variable information is also natural~\cite{braunstein2005continuous}. In sensing and communication, photonic systems are standard physical platforms~\cite{grosshans2003quantum,tan2008quantum,nair2020fundamental,PhysRevA.87.012107,PhysRevLett.127.010502} and multi-time measurements are gaining interest~\cite{PhysRevLett.126.230401,zhu2026twotoothbosonicquantumcomb}. Meanwhile, CV non-Markovianity has been experimentally observed using opto-mechanical systems~\cite{groeblacher2015observation,chen2011non}. 
Moreover, CV systems are particularly compelling in the Gaussian regime, where large-scale entanglement is both succinct to describe in theory, and relatively accessible in experiment~\cite{weedbrook_gaussian_2012}. These welcome properties propelled CV quantum information as a key platform for testing quantum protocols - underpinning landmark experiments in quantum teleportation, quantum key distribution, and generating entangled CV cluster states of record-breaking size~\cite{vaidman1994teleportation,furusawa1998unconditional,grosshans2003quantum,menicucci2006universal}. A formalism for CV quantum circuit fragments and link products could thus be of immediate relevance in a diverse range of applicative settings.

Here, we develop such a formalism. We introduce continuous-variable quantum circuit fragments and define a  corresponding CV link product. We then show that these objects admit a particularly efficient simplification in the Gaussian regime. There, Gaussian circuit fragments are represented by covariance matrices, and their link products can be evaluated directly by a covariance-matrix algorithm, avoiding explicit manipulation of infinite-dimensional density operators. We illustrate the framework through state-channel and channel-channel composition, and then use it to characterize Gaussian non-Markovian processes and complex agent-environmental interactions. This provides a route to importing the operational tools of quantum combs into continuous-variable and Gaussian quantum information.

\section{Preliminaries}

\textbf{Quantum Combs in Finite Dimensions}. We begin with a brief review of the quantum comb formalism for discrete variables~\cite{chiribella2008quantum,chiribella2009theoretical}. Consider a quantum channel $\mathcal {C} $ on a quantum system of dimension $d$. The Choi state $\Upsilon_{\mathcal C}$ is the quantum state resulting from applying $\mathcal C$ on one arm of a $d$-dimensional maximally entangled state $\ket{\Phi} = \frac{1}{\sqrt{d}}\sum_k \ket{k}\ket{k}$. $\Upsilon_{\mathcal C}$ is thus a bipartite state. Commonly, we label the partition that passes through $\mathcal {C}$ as the output $o$, while the other is labeled as input $i$. The rationale is that a projective measurement of $\Upsilon_{\mathcal C}$ that yields some state $\ket{\psi}$ collapses $o$ onto $\mathcal{C}(\ket{\psi}\bra{\psi})$, up to the usual transposition and normalization conventions.

The quantum comb formalism extends this idea to \emph{general quantum circuit fragments} - black-box objects with $N$ inputs and $M$ output wires that can be inserted within a quantum circuit. The corresponding Choi state $\Upsilon_{\mathcal C}$ is formed by feeding one arm of a maximally entangled state $\ket{\Phi}$ into each of the $N$ input wires of $\mathcal C$ (see Fig.~\ref{fig:circuit-fragment} for an example). This results in a multipartite quantum state  $\Upsilon_{\mathcal C}$ with $N + M$ partitions. Here, we assume all circuit fragments have a well-defined causal order. Like the Choi formalism for channels, $\Upsilon_{\mathcal C}$ captures all black-box properties of $\mathcal{C}$, such that if two circuit fragments have the same Choi state, there exists no experiment we can do on its inputs and outputs to distinguish one from the other. The quantum comb formalism thus converts the dynamical properties of a quantum process into the observable properties of its Choi state.

\begin{figure}[tp]
\centering
\includegraphics[width=1\linewidth]{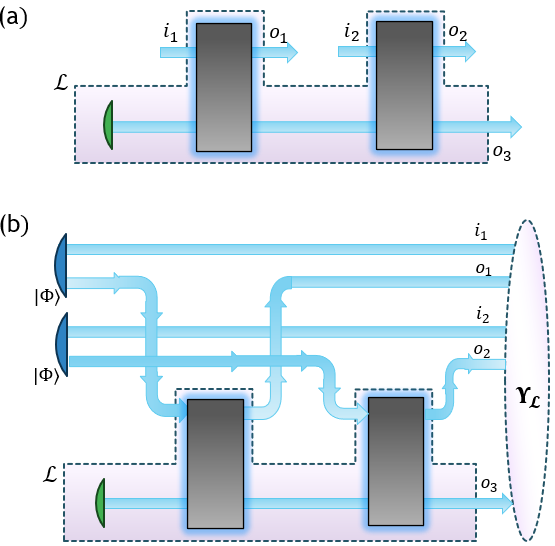}
\makeatother % justified caption
\caption{\textbf{Choi States of Circuit Fragments}. Consider a circuit fragment $\mathcal{L}$ with $N = 2$ input qudit wires and $M = 3$ output qudit wires (a). Its black-box properties can then be fully captured by the corresponding $N + M = 5$ qudit Choi state $\Upsilon_L$ prepared by the circuit in (b). Here, each blue disk segment denotes a maximally entangled qudit pair $\ket{\Phi}$, while the green disk segment denotes the initial state of an ancilla qudit that can interact with qudits input at $i_1$ and $i_2$.}
\label{fig:circuit-fragment}
\end{figure}

A key benefit of the Choi formalism is the existence of the link product, which provides a mathematical tool for understanding how different circuit fragments compose. Consider now two quantum circuit-fragments $\mathcal K$, $\mathcal L$. We can compose $\mathcal K$ and $\mathcal L$ together to form a larger circuit fragment $\mathcal{N}$ (for example, see Fig.~\ref{fig:general-cv-comb}c) by pairing some input wires from one fragment with the output wires of the other, or vice versa. The link product then provides a systematic means of determining the Choi state of $\mathcal{N}$ from the Choi states of $\mathcal K$ and $\mathcal L$. That is, we can define an operator $\star_J$ such that 
\begin{align}
\Upsilon_{\mathcal N} = \Upsilon_{\mathcal K} \star_J \Upsilon_{\mathcal L},
\label{eqn: dv-link-prod}
\end{align}
where $J$ defines the specific way in which we join the wires of $\mathcal K$ and $\mathcal L$. Here, we omit the mathematical details of this operation in finite-dimensional settings for conciseness as we will focus on the continuous setting~\footnote{For those interested, please see \cite{chiribella2008quantum}}. The key message here is that our goal is to build a comb formalism for infinite dimensions, which would naturally involve identifying a continuous variable link product.

\textbf{Continuous Variable Quantum Systems}. Here, we briefly review relevant background in continuous variable quantum information~\cite{braunstein2005continuous,adesso2014continuous}. Instead of qubits, the fundamental system in continuous variables is a quantum mode (qumode) - representing the state of a quantized harmonic oscillator. We set $\hbar = 1$, such that the position and momentum quadrature operators of each mode $k$ (denoted respectively by $\hat X_k$ and $\hat P_k$) satisfy the commutation relation $[\hat X_l, \hat P_k] = i \delta_{lk}$. We then denote the state-space of an $n$-mode system, (i.e., the space of density operators) by $\mathcal D(\mathcal H) =\mathcal D( \bigotimes_{k=1}^n \mathcal H^k)$ where $\mathcal H^k$ is the infinite-dimensional Hilbert space of the $k^{th}$ mode. 

Characteristic functions are a useful means of representing general $n$-qumode states. Let
$$\Omega = \bigoplus^n_{j=1} \begin{pmatrix}0 & 1 \\ -1 & 0\end{pmatrix}$$
be the symplectic form of phase space. Let $\hat R = (\hat X_1,\hat P_1,\dots, \hat X_n,\hat P_n)$, $\vec\alpha = (x_1,p_1,\dots x_n,p_n)^T \in \mathbb R^{2n}$, such that $\hat D(\vec \alpha) = \exp(i \hat R \Omega\vec\alpha)$ represents the Weyl (displacement) operator~\footnote{The Weyl operator displaces states in phase space, such that $\hat D(\vec\alpha)^\dagger \hat  X_j \hat D(\vec\alpha)
= \hat X_j+x_j$ and $\hat D(\vec\alpha)^\dagger \hat P_j \hat D(\vec\alpha) = \hat P_j+p_j$.}. The characteristic function 
\begin{align}
\chi_\rho(\vec \alpha) := \trace[\rho \hat D(\vec \alpha)]
\label{eqn: characteristic-function-general}
\end{align}
then provides an alternative and complete characterization of an $n$-mode quantum state $\rho$.

We will frequently use the partial transpose. Consider a state $\rho$ on $n$ modes, and let $J\subseteq \{1,\ldots,n\}$ be a subset of these modes. The partial transpose $\rho^{T_J}$ is the transpose on $\bigotimes_{j\in J}\mathcal H^j$ and the identity elsewhere. In our characteristic-function convention, partial transposition on mode $j$ acts as
\begin{align}
\chi_{\rho^{T_j}}(\dots,x_j,p_j,\dots)
=
\chi_{\rho}(\dots,-x_j,p_j,\dots),
\end{align}
with all other phase-space variables unchanged. Applying this transformation for all $j\in J$ gives $\chi_{\rho^{T_J}}$. 

When $\chi_\rho(\vec \alpha)$ is Gaussian, $\rho$ is a Gaussian state. Such states carry particular operational significance because their quadrature statistics are Gaussian and are particularly common in laboratory environments. Such states can then be entirely characterized by their quadrature expectation values $d_j = \braket{\hat R_j}$ and covariances $\Gamma_{ij}=\braket{\hat R_i \hat R_j + \hat R_j \hat R_i} - 2 \braket{\hat R_i}\braket{\hat R_j}$, whereby the resulting characteristic function has the form 
 \begin{align}
     \chi_{\rho_G}(\vec\alpha) = e^{-\frac{1}{4}\vec\alpha^T\Omega^T \Gamma \Omega\vec\alpha}e^{i\vec \alpha^T\Omega^T \vec{d}},
     \label{eqn: characteristic-func-gaussian-form}
 \end{align}
 where $\vec{d} = (d_1,d_2,\ldots d_{2n})^T$ is the vector of quadrature means and $\Gamma$ (with matrix elements $\Gamma_{ij}$) is the associated covariance matrix~\cite{adesso2014continuous}.

The two-mode squeezed vacuum (TMSV) state then provides a continuous-variable analog of Bell states in discrete-variable settings. Consider two qumodes $A$ and $B$. The idealized state of maximal entanglement $\sum_k \ket{k}_A\ket{k}_B$ is clearly unphysical as it cannot be normalized.  Instead, we can define a family of states $\ket{\Phi(r)} = \sum_{k=0}^\infty \tanh(r/2)^k /\cosh {(r/2)} \, \ket{k}_A\ket{k}_B$ whose entanglement grows with squeezing level $r$. Such states are Gaussian with $\vec d = \vec 0$, and covariances
\begin{align}
\Gamma_{\Phi} (r) = \begin{pmatrix}
\cosh r I_2 & \sinh r \sigma_z \\
\sinh r \sigma_z & \cosh r I_2
\end{pmatrix}
, \label{eqn: covmatrix-tmsv}
\end{align}
where each of the four block matrices in $\Gamma_{\Phi}$ is a $2\times 2$ square matrix, and $\sigma_z = \text{diag}(1,-1)$.  In the limit $r \rightarrow \infty$, the quadratures become perfectly correlated, and such states exhibit quadrature statistics $\braket{(\hat X_A - \hat X_B)^2} \rightarrow 0$ and $\braket{(\hat P_A + \hat P_B)^2} \rightarrow 0$. $r$ is referred to as the squeezing parameter, and any $r > 0$ implies entanglement.

\section{CV Quantum Combs}
We begin by extending the Choi states of circuit fragments and their corresponding link products to continuous variables. In analogy with finite-dimensional systems, we define a CV circuit fragment as any physically allowed quantum operation with designated input qumodes ($I$) and output qumodes ($O$) that map valid input states to valid output states. 

Formally, a continuous-variable analog of the Choi construction is obtained by feeding one half of the maximally entangled state \(\sum_{k=0}^{\infty}\ket{k}\ket{k}\) into each input qumode and retaining the other half as a reference. However, this state is nonphysical. We therefore instead use the TMSV state \(\Phi(r)\) with squeezing parameter \(r\) (as defined in preliminaries):

\begin{definition}[Choi state of a CV circuit fragment]
\label{definition: choi-state-cv-fragment}
Let \(\mathcal C\) be a CV circuit fragment with input qumodes \(I=\{i_1,\dots,i_m\}\) and output qumodes \(O=\{o_1,\dots,o_\ell\}\). Consider the following procedure:
\begin{enumerate}
\item[(a)] For each input \(i\in I\), prepare a TMSV state $\ket{\Phi(r_i)}$ of squeezing level $r_i$.
\item[(b)] Feed one arm of each TMSV state into the corresponding input wire of \(\mathcal C\).\end{enumerate}
Let $\mathbf{r} = (r_1, r_2, \ldots)$ denote the ordered list of squeezing levels used. We define the resulting state $\Upsilon_{\mathcal{C}}(\mathbf{r})$ as the \(\mathbf r\)-squeezed Choi state of \(\mathcal C\).
\end{definition}

Our definition reduces to the existing finite-squeezing Choi-state
constructions for CV quantum channels
\cite{fiurasek_gaussian_2002,giedke2002characterization,
holevo2010choi}. For each squeezing vector $\mathbf r$ with positive entries, the state, $\Upsilon_{\mathcal C}(\mathbf r)$ fully characterizes
$\mathcal C$. We denote the
full family of these states, parametrized by $\mathbf r$,
with $\Upsilon_{\mathcal C}$ and refer to them collectively as the
the Choi states of $\mathcal C$.

\textbf{The Continuous Variable Link Product}. Let \(\mathcal A\) and \(\mathcal B\) be two CV circuit fragments, with respective sets of input qumodes \(I_{\mathcal A}\) and \(I_{\mathcal B}\) and sets of output qumodes \(O_{\mathcal A}\) and \(O_{\mathcal B}\). We define a \emph{mode-stitching} \(j=(\underline{o}_j,\underline{i}_j)\) to represent either (i) feeding an output qumode \(\underline{o}_j\in O_{\mathcal A}\) of circuit fragment \(\mathcal A\) into an input qumode \(\underline{i}_j\in I_{\mathcal B}\) of circuit fragment \(\mathcal B\), or (ii) feeding an output qumode \(\underline{o}_j\in O_{\mathcal B}\) of circuit fragment \(\mathcal B\) into an input qumode \(\underline{i}_j\in I_{\mathcal A}\) of circuit fragment \(\mathcal A\) (see Fig.~\ref{fig:jcomp}). Here \(j\) labels the stitching pair: \(\underline{o}_j\) and
\(\underline{i}_j\) need not carry the same numerical labels as the
output and input modes in their respective circuit fragments. The underbar is used to emphasize this distinction.

\begin{figure}[tp]
    \centering
    \includegraphics[width=\linewidth]{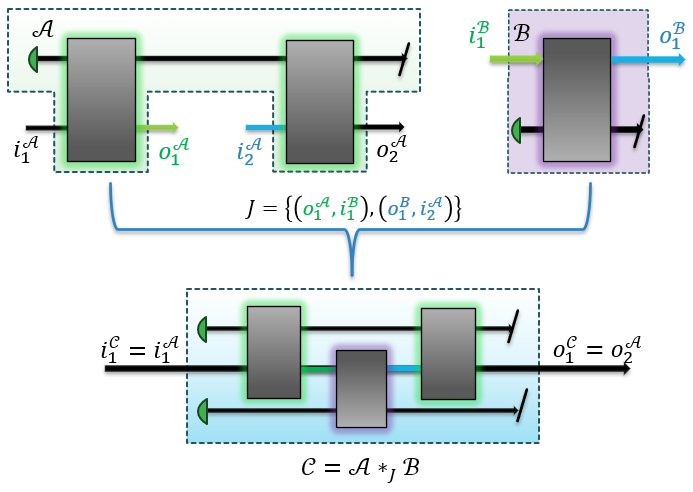}
    \caption{\textbf{$J$-stitching circuit fragments}. Two circuit fragments can be composed together to form larger circuit fragments via a $J$-stitching, where $J$ specifies exactly how the inputs and outputs on one fragment connect to the other. In the example above, circuit fragments $\mathcal{A}$ and $\mathcal{B}$ are composed to form $\mathcal{C} = \mathcal{A} \star_J \mathcal{B}$, where each pair of modes in $J$ share the same color. Formally, we can write $J = \{(o^\mathcal{A}_1, i^\mathcal{B}_1), (o^\mathcal{B}_1, i^\mathcal{A}_2)\}$, where each element denotes a pair of input/output modes to be connected. Note that this specific example is bidirectional: $J$ includes connections from  $\mathcal{A}$ to $\mathcal{B}$ and from $\mathcal{B}$ to $\mathcal{A}$.}
    \label{fig:jcomp}
\end{figure}

Let \(J=\{j_1,\ldots,j_m\}\) denote a finite set of such mode-stitchings,
where \(j_\ell=(\underline{o}_\ell,\underline{i}_\ell)\). When an
explicit ordering is needed, we use the displayed ordering
\(j_1,\ldots,j_m\). Applying all stitchings in \(J\) to connect \(\mathcal A\) and \(\mathcal B\) yields a composite circuit fragment \(\mathcal C\), whose input qumodes \(I_{\mathcal C}\) and output qumodes \(O_{\mathcal C}\) are precisely the unstitched input and output qumodes of \(\mathcal A\) and \(\mathcal B\). That is,
\begin{equation}
I_{\mathcal C} = \left(I_{\mathcal A}\cup I_{\mathcal B}\right)\setminus I_J,
\qquad
O_{\mathcal C} = \left(O_{\mathcal A}\cup O_{\mathcal B}\right)\setminus O_J,
\end{equation}
where \(I_J=\{\, \underline{i}_\ell : (\underline{o}_\ell,\underline{i}_\ell)\in J \,\}\) and \(O_J=\{\, \underline{o}_\ell : (\underline{o}_\ell,\underline{i}_\ell)\in J \,\}\) are the input and output qumodes that have been stitched together.

Note that \(J\) may be bidirectional - simultaneously including stitching from outputs of \(\mathcal A\) to inputs of \(\mathcal B\) and from outputs of \(\mathcal B\) to inputs of \(\mathcal A\) (indeed Fig.~\ref{fig:jcomp} illustrates such a scenario).  As such, we write \(J_{\mathcal A}\) and \(J_{\mathcal B}\) for the stitched modes
belonging to \(\mathcal A\) and \(\mathcal B\), respectively \footnote{Explicitly, $
J_{\mathcal A}=(I_{\mathcal A}\cup O_{\mathcal A})\cap (I_J\cup O_J)$ and $J_{\mathcal B}=(I_{\mathcal B}\cup O_{\mathcal B})\cap (I_J\cup O_J)$}. This motivates the following operational definition of the link product between \(\mathcal A\) and \(\mathcal B\).

\begin{definition}[CV link product]
\label{def:link-prod-CV} 
Let \(\mathcal C\) be the \(J\)-stitching of two CV circuit fragments \(\mathcal A\) and \(\mathcal B\). Let \(\Upsilon_{\mathcal A}\) , \(\Upsilon_{\mathcal B}\), and \(\Upsilon_{\mathcal C}\) be the respective Choi representations of the circuit fragments \(\mathcal A\), \(\mathcal B\), and \(\mathcal C\). We define the CV link product \(\star_J\) as the operation satisfying
\begin{align}
\Upsilon_{\mathcal C}
=
\Upsilon_{\mathcal A}\star_J \Upsilon_{\mathcal B}.
\end{align}
Where context is clear, we also denote $\mathcal{C}  =  \mathcal{A} \star_J \mathcal{B}$ to denote the $J$-stitching directly between two circuit fragments.
\end{definition}

To obtain a concrete expression for the link product, we first consider the case where all stitchings are directed from \(\mathcal A\) to \(\mathcal B\). That is, \(J_{\mathcal A}=O_J\) and \(J_{\mathcal B}=I_J\). In Appendix \ref{proof: link-prod-circuitfragments}, we show the following:

\begin{result}[Evaluating one-way $J$-stitchings]
\label{jstitch}
Let quantum circuit fragments $\mathcal A$ and $\mathcal B$
have Choi-states $\Upsilon_{\mathcal A}$ and
$\Upsilon_{\mathcal B}$, with squeezing vectors
$\mathbf r=(r_i)_{i\in I_{\mathcal A}}$ and
$\mathbf s=(s_i)_{i\in I_{\mathcal B}}$. Let $J$ be a
unidirectional stitching connecting the outputs
$J_{\mathcal A}=O_J$ of $\mathcal A$ to the inputs
$J_{\mathcal B}=I_J$ of $\mathcal B$. Let
$\mathcal C=\mathcal A\star_J\mathcal B$, with associated
squeezing vector $\mathbf w$. Writing
$\mathbf s_J=(s_i)_{i\in I_J}$ and
$\mathbf s_{\bar J}=(s_i)_{i\in I_{\mathcal B}\setminus I_J}$
for the stitched and unstitched components of $\mathbf s$,
respectively, so that
$\mathbf s=\mathbf s_J\oplus\mathbf s_{\bar J}$, we have
\begin{equation}
\label{eqn:link-prod-cv}
\Upsilon_{\mathcal C}(\mathbf w)
=
\lim_{\mathbf s_J\to\infty}
K(\mathbf s_J)\operatorname{tr}_J
\!\left[
\Upsilon_{\mathcal A}(\mathbf r)^{T_{J_{\mathcal A}}}
\Upsilon_{\mathcal B}
(\mathbf s_J\oplus\mathbf s_{\bar J})
\right],
\end{equation}
where $K(\mathbf s_J)=\prod_{i\in I_J}\cosh^2(s_i/2)$,
$\mathbf s_J\to\infty$ means $s_i\to\infty$ for every
$i\in I_J$, and
$\mathbf w=\mathbf r\oplus\mathbf s_{\bar J}$.
\end{result}

Here, the partial transpose \(T_{J_{\mathcal A}}\) is taken on the modes of \(\Upsilon_{\mathcal A}\) participating in the stitching,
and \(\operatorname{tr}_J\) denotes the partial trace over the
pairwise-identified stitched Hilbert spaces
\(\mathcal H^{\underline{o}_{\ell}}\cong\mathcal H^{\underline{i}_{\ell}}\) for all \(j_{\ell}=(\underline{o}_{\ell},\underline{i}_{\ell})\in J\) and $\oplus$ represents appending the entries of two vectors together. We can evaluate this expression iteratively. That is, we introduce the operation
\begin{equation}
\mathsf C_\ell[\Upsilon]
:=
\lim_{s_{\ell}\to\infty}
\cosh^2(s_{\ell}/2)\bra{\Omega_\ell}
\Upsilon(\mathbf{s})
\ket{\Omega_\ell}\nonumber
\end{equation}
where $\ket{\Omega_{\ell}} = \sum_n \ket{n}\ket{n}$ is the un-normalized maximally entangled state on qumodes $\underline{o}_{\ell}$ and $\underline{i}_{\ell}$, and $s_{\ell}$ is the squeezing parameter associated with $\underline{i}_{\ell}$. We can then write
\[
\Upsilon_{\mathcal A}
\star_J
\Upsilon_{\mathcal B}
=
\left(
\mathsf C_m
\circ
\cdots
\circ
\mathsf C_1
\right)
\!\left[
\Upsilon_{\mathcal A}
\otimes
\Upsilon_{\mathcal B}
\right].
\]
Each \(\mathsf C_{\ell}\) implements the contraction associated with the
mode stitching \(j_\ell=(\underline{o}_\ell,\underline{i}_\ell)\). For a
single stitching, this reduces to
\(
\mathsf C_\ell[
\Upsilon_{\mathcal A}\otimes
\Upsilon_{\mathcal B}]
=
\Upsilon_{\mathcal A}
\star_{\{j_\ell\}}
\Upsilon_{\mathcal B}
\). See Appendix~\ref{proof: link-prod-circuitfragments}.

Although Result~\ref{jstitch} is stated for unidirectional stitching from
\(\mathcal A\) to \(\mathcal B\), it is sufficient for evaluating any
valid bidirectional circuit composition with a well-defined temporal order. Indeed, each elementary
connection is still an output-to-input stitching. Provided no qumode is
stitched more than once, and the resulting circuit has a global causal
order, these elementary contractions act on distinct pairs of modes and
therefore commute. Hence, the bidirectional link product can be evaluated
by applying the one-way rule successively, in any convenient order.

\textbf{Describing Non-Markovianity}. The quantum comb formalism then provides an elegant means for describing non-Markovian quantum processes. Consider the circuit fragment $\mathcal T$ consisting of a system $S$ and environment $E$ that evolve unitarily with $\mathcal U_k$ at each time-step $t_k$. Since the environmental state is retained between time-steps, these settings can support information backflow and non-Markovian behavior (see Fig.~\ref{fig:choi-construct}).

Let $\mathcal{T}_{\mathcal{U}_k}$ be the circuit fragment consisting of a single quantum operation $\mathcal{U}_k$ on two qumodes - one we denote as the system, the other the environment. The circuit fragment thus has $2$ input qumodes and $2$ output qumodes (one input-output pair for the system and one for the environment). Let $\mathcal T_k$ denote the circuit fragment implementing
the operation $\mathcal U_k$, and let
$\Upsilon_{\mathcal T_k}$ denote its Choi states. The circuit fragment of two consecutive timesteps $\mathcal{T}_{k,k+1}$ will then have the Choi states $\Upsilon_{\mathcal{U}_{k}} \star_{J_k} \Upsilon_{\mathcal{U}_{k+1}}$, where $J_k = \{(E_o^{k},E_i^{k+1})\}$ is a single-mode link that connects the environmental output of $\Upsilon_{\mathcal{U}_{k}}$ with the input environment of $\Upsilon_{\mathcal{U}_{k+1}}$. Iterating in this manner from timestep $1$ to timestep $n$, setting in the environmental input of $\mathcal{T}_1$ to $\rho_E$ via \(J_0=\{(E_o^0,E_i^1)\}\) and discarding the final environmental mode, then results in a Choi states
\begin{figure}
    \centering
    \includegraphics[width=0.5\textwidth]{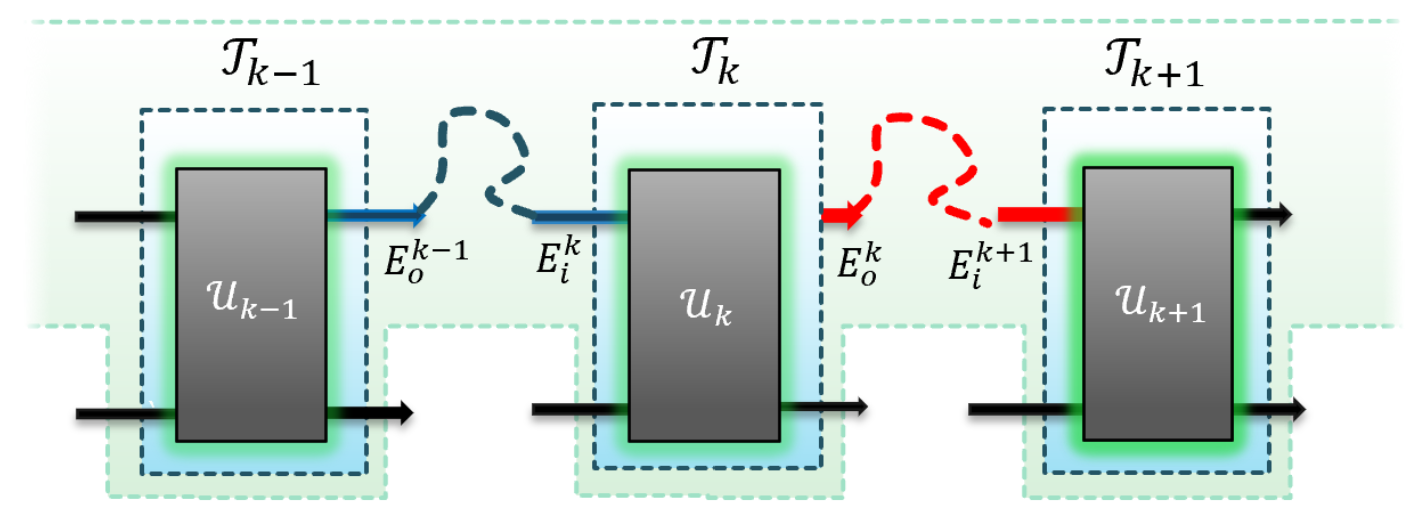}
    \caption{\textbf{Stitching a Non-Markovian Process}. We can construct the Choi representation of a Non-Markovian process by stitching a sequence of circuit fragments $\mathcal{T}_k$ in series, such that each two consecutive circuit fragments $\mathcal{T}_k$ and $\mathcal{T}_{k+1}$ are stitched together by the stitching $(E^k_o,E^{k+1}_i)$.}
    \label{fig:choi-construct}
\end{figure}

\begin{align}\label{eqn:nonmarkenv}
\Upsilon_{\mathcal T}
=
\operatorname{tr}_{E}
\!\left[
\rho_E
\star_{J_0}
\Upsilon_{\mathcal T_1}
\star_{J_1}
\cdots
\star_{J_{n-1}}
\Upsilon_{\mathcal T _n}
\right],
\end{align}
that aligns with the output of the associated quantum circuit in Fig.~\ref{fig:choi-construct}. The squeezing parameters associated with the system input modes survive as \(\mathbf r\), while those associated with the stitched environment input modes are taken to infinity in the corresponding link products. The resulting Choi states have $n$ input system qumodes that form the set $I_{\mathcal T}$, and $n$ output system qumodes that form $O_{\mathcal T}$; it is equivalent to a full tomographical characterization of the process over those times. 

\textbf{Interventions and Adaptive Measurements}. The most general means by which an agent interacts with an unknown process is through a sequence of adaptive interventions - formalized in the literature as multi-time adaptive measurements~\cite{yunlong2023uncertainty,yang2019memory}. We can represent such actions as circuit fragments, as depicted in Fig.~\ref{fig:general-cv-comb}c. Here, \(\Upsilon_{\mathcal T}\) denotes a possibly non-Markovian process, represented by the blue circuit fragment. The agent's sequence of interactions is then represented by the purple circuit fragment (\(\Upsilon_{\mathcal A}\)). This fragment can include an ancillary agent memory \(L\), allowing the agent to retain information from previous time steps that affects their choice of future interactions. This sequence of potentially adaptive interactions can be wholly described by a link product 
\begin{align}
\Upsilon_\text{combined} = \Upsilon_{\mathcal T} \star_J \Upsilon_\mathcal{A}.
\end{align}
Here, the set $J$ contains the pairs of output-input mode stitches between $\Upsilon_{\mathcal T}$ and $\Upsilon_\mathcal{A}$.

The resulting link product then outputs the bipartite state $\rho_{SL}$ on the system $S$, and the ancillary memory $L$. From this, the resulting agent measurement outcomes and probabilities can be easily computed by suitable projective operators on $S$ and $L$.

\section{Gaussian Link Product}
A key feature of CV quantum information is its convenience
in the Gaussian regime. Here we extend this convenience to
circuit fragments and link products. We say that a CV
circuit fragment $\mathcal A$ is Gaussian if all of its
Choi states are Gaussian. Operationally, any circuit
fragment built from Gaussian channels, Gaussian unitary
interactions, Gaussian ancilla states, and discarding operations
is Gaussian in this sense.

The Choi states of such Gaussian circuit fragments can thus
also be expressed in terms of their first two moments.
Consider a Gaussian circuit fragment \(\mathcal A\) with
$n$ input and output modes in total. For any positive
squeezing vector $\mathbf{r}$, the corresponding Choi state
is a Gaussian state on $n$ modes. It is therefore completely
described by an associated $2n\times2n$ covariance matrix
and a $2n$-dimensional displacement vector. This quadratic
scaling in $n$ compares very favorably to density-operator
descriptions, which, under any fixed local Fock-state
cutoff, scale exponentially with $n$. This motivates us to
define $\Gamma_{\mathcal A}$ as the family of covariance
matrices corresponding to the Choi states of $\mathcal A$
as $\mathbf r$ is varied. We refer to $\Gamma_{\mathcal A}$ as the Choi covariance of $\mathcal{A}$.

We can, of course, always evaluate the link product for two
Gaussian circuit fragments \(\mathcal A\) and \(\mathcal B\)
via their Choi states. However, working directly with
Choi covariances provides a
means to avoid dealing with density matrices of
exponentially growing size. In this context, our next result establishes efficient means to
to compute the Choi covariance of
$\mathcal A \star_{J} \mathcal B$ that scales efficiently
with the total number of modes:

\begin{result} We can compute the Gaussian link product of two Gaussian circuit fragments with complexity scaling as $O(m \bar{m}^2)$, where $m \geq 1$ is the number of qumode-pairs to be linked and $\bar{m} \geq m$ is the total number of qumodes used to describe the circuit fragments to be $J$-stitched. \end{result}

Further details and associated proofs, including how the resulting displacement
vectors can also be computed, are given in
Appendices~\ref{proof: gaussian-link-prod-int}--%
\ref{subsec: complexity-analysis}. Here, we proceed by describing a procedure for performing this computation, and leave the analysis of its complexity to Appendix~\ref{subsec: complexity-analysis}. Specifically, the procedure can be understood as an algorithm consisting of two subroutines, \texttt{\pcm} and \texttt{\smc}:

\begin{algorithm}[H]
\DontPrintSemicolon
\SetAlgoVlined
\SetAlgoNoEnd
\SetFuncSty{texttt}

\SetKwFunction{multiLinkProd}{multimodeLinkProduct}
\SetKwFunction{combMatrix}{\pcm}
\SetKwFunction{singleLinkProd}{\smc}
\SetKwProg{myalgo}{}{}{}

\myalgo{\multiLinkProd{$\Gamma_{\mathcal A},
\Gamma_{\mathcal B},J$}}{
  $\Gamma \gets \Gamma_{\mathcal A}\oplus\Gamma_{\mathcal B}$\;
  $\gamma_{\mathrm{tot}}
    \gets \combMatrix{$\Gamma,J$}$\;

  \For{$y_\ell\in J$}{
    $\gamma_{\mathrm{tot}}
      \gets
      \singleLinkProd{$\gamma_{\mathrm{tot}},y_\ell$}$\;
  }

  \KwRet $\gamma_{\mathrm{tot}}$\;
}
\end{algorithm}

\texttt{\pcm}, prepares the covariance matrix for all stitchings in $J$ at once, converting each pair $(\underline{o}_\ell,\underline{i}_\ell)$ into a corresponding effective stitched variable labeled by $y_\ell$. \texttt{\smc}, removes these effective stitched variables one at a time by Gaussian Schur complements. We proceed to describe each subroutine in detail.

\textbf{Covariance Preparation}. To define \texttt{\pcm}, we begin with the covariance matrix of the product Choi state, $\Gamma = \Gamma_{\mathcal A}\oplus\Gamma_{\mathcal B}$. \texttt{\pcm} first reorders the modes of $\Gamma$ so that all modes not eliminated by the link product appear first, followed by the $m$ modes of $J_A$ (i.e., both input and output modes of $\mathcal A$ participating in the join), then the $m$ modes of $J_B$, both in the reverse order of the pairs in $J$. Note that this reordering convention on $J$ is only a bookkeeping device: the final link product is independent of the order in which the stitched variables are eliminated \footnote{Since no qumode is stitched more than once, the elementary contractions act on disjoint mode pairs and hence commute.}. We use the reverse order so that $(y_1,y_2,\ldots,y_m)$ can be eliminated successively from the lower right.

Specifically, let $X$ denote all modes not eliminated, and let $C$ denote the collection of mode pairs to be stitched. Then, the updated $\Gamma$ has the form
\begin{align}
    \Gamma
    =
    \left(
    \begin{array}{c|c}
        \gamma_X & \gamma_{X,C} \\
        \hline
        \gamma_{X,C}^T & \gamma_C
    \end{array}
    \right),
    \label{eqn: cov-before-multilink-prep}
\end{align}
where $\gamma_C$ is the $4m \times 4m$ covariance block of the $2m$ qumodes that align with the mode pairs. The subroutine \texttt{\pcm}($\Gamma,J$) then returns

\begin{align}
    \texttt{\pcm}(\Gamma,J)
    =
    \left(
    \begin{array}{c|c}
        \gamma_X & \gamma_{X,C}Q_J \\
        \hline
        Q_J^T\gamma_{X,C}^T & Q_J^T\gamma_C Q_J
    \end{array}
    \right),
    \label{eqn: combined-cov-matrix-definition}
\end{align}

where $Q_J = (-\Lambda_m, I_{2m})^T$, $I_{2m}$ is the identity matrix of size $2m\times 2m$, and $\Lambda_m = \bigoplus_{\ell=1}^m \mathrm{diag}(1,-1)$. Here, $Q_J$ has dimensions $4m\times2m$, so the resulting lower-right block $Q_J^T\gamma_C Q_J$ has dimensions $2m\times2m$ and describes the $m$ effective stitched variables that are subsequently eliminated by \texttt{\smc}. If $J=\varnothing$, we use the convention $\texttt{\pcm}(\Gamma,\varnothing)=\Gamma$. After \texttt{\pcm}, the $m$ qumode pairs are converted to $m$ stitched variables, one for each $y_\ell \in J$, that are listed in reverse order.

Explicitly, consider any (including bidirectional) J-stitching between $\mathcal A$ and $\mathcal B$. In this setting, we can first write the Choi-covariances of $\mathcal A$ and $\mathcal B$ in the form
\begin{align}\nonumber
    \Gamma_{\mathcal A} = \left( \begin{array}{c|c}
        \gamma_{A \setminus J_A} & \gamma_\text{A,c} \\
        \hline
        \gamma_\text{A,c}^T & \gamma_{J_A}
    \end{array} \right) , \,  \Gamma_{\mathcal B} = \left(\begin{array}{c|c}
        \gamma_{B \setminus J_B} & \gamma_\text{B,c} \\
        \hline
        \gamma_\text{B,c}^T & \gamma_{J_B}
    \end{array} \right).
    % \label{eqn:link-prod-cov-mat-division}
\end{align}
Here $\gamma_{J_A}$ denotes the covariance block of the modes of $\mathcal A$ being stitched, while $\gamma_{J_B}$ denotes the covariance block of the corresponding modes of $\mathcal B$, with both ordered according to the elements of $J$ in reverse order. Eq.~\eqref{eqn: combined-cov-matrix-definition} then takes the form

\begin{align}\nonumber
    {\texttt{PrepCM($\Gamma,J$)}}
    =
    \left( \begin{array}{cc|c}
        \gamma_{A\setminus J_A} & \mathbf 0 & -\gamma_\text{A,c}{\Lambda_m} \\
        \mathbf 0 & \gamma_{B\setminus J_B} & \gamma_\text{B,c} \\
        \hline
        -{\Lambda_m}\gamma_\text{A,c}^T & \gamma_\text{B,c}^T & {\Lambda_m}\gamma_{J_A}\Lambda_m + \gamma_{J_B}
    \end{array} \right).
    \label{eqn: one-way-combined-cov-matrix-definition1}
\end{align}

\textbf{Completing the Stitch}. \texttt{\smc} then performs the stitching process. It takes a covariance matrix \(\gamma_\text{tot}\) and the \(y_\ell \in J\) that we wish to stitch. At the corresponding step, the effective variable associated with \(y_\ell\) is in the final \(2\times2\) block, so that
\[
    \gamma_\text{tot}
    =
    \left( \begin{array}{c|c}
        \gamma_{\bar X} & \gamma_{{\bar X},y_\ell} \\
        \hline
        \gamma_{{\bar X},y_\ell}^T & \gamma_{y_\ell}
    \end{array} \right).
\]

Here \(\gamma_{y_\ell}\) is the $2\times2$ covariance block associated with the effective stitched variable to be eliminated, while $\bar X$ denotes all remaining modes, including any other effective stitched variables that have not yet been eliminated. The subroutine then returns 

\[
    \texttt{\smc}(\gamma_\text{tot}, y_\ell) 
    =
    \lim_{r_{y_\ell}\to\infty}
    \left[
    \gamma_{\bar X} - \gamma_{{\bar X},y_\ell}(\gamma_{y_\ell})^{-1}\gamma_{{\bar X},y_\ell}^T
    \right],
\]
where \(r_{y_\ell}\) is the squeezing parameter associated with the Choi input mode $\underline{i}_{\ell}$ that is removed by the stitching.

\section{Illustrative Examples}
\label{section: examples}
We illustrate the above procedure in various situations, culminating in a Gaussian non-Markovian quantum environment under interactive measurements.

\textbf{Channels with Input}. We begin with a sanity check: a single-mode squeezer $\mathcal B$ acting on the vacuum state $\rho_\mathrm{vac}$ with covariance $\Gamma_{\mathcal A} = I_2$. While this problem is easily solvable using standard methods, its solution via link products helps illustrate some basic concepts.

Let $\mathcal{A}$ be a circuit fragment with no inputs and one output mode $o^A$ that always emits $\rho_\mathrm{vac}$. Its Choi state then is trivially $\rho_\mathrm{vac}$. Let input and output modes of $\mathcal{B}$ be $i^B$ and $o^B$, and let $\mathcal{B}$ be a $P$-quadrature squeezer that acts on covariance matrices as $\Gamma \mapsto F_s \Gamma F_s^T$ where $F_s=\cosh(s)I_2+\sinh(s)\sigma_z$ such that $s$ captures the magnitude of squeezing. We can then obtain the result by computing a $J$-stitching where $J = \{(o^A,i^B)\}$. First, note that the $r'$-squeezed Choi covariance of $\mathcal B$ is then
\begin{align}
\Gamma_\mathcal{B}(r') = \left( \begin{array}{c|c}
       \gamma_{i} & \gamma_{i,o} \\
       \hline
       \gamma_{i,o}^T & \gamma_{o}
    \end{array} \right),
\end{align}
where $\gamma_i = \cosh (r') I_2$, 
$\gamma_{i,o} = \sinh(r') \sigma_z F_s$ and    
$\gamma_{o} = \cosh(r')F^2_s$. To compute the Choi covariance of
$\mathcal A\star_J\mathcal B$, we apply \texttt{\pcm} to
$\Gamma_{\mathcal A}\oplus\Gamma_{\mathcal B}$,
which gives
\begin{align}
\gamma_\text{tot} = \left( \begin{array}{c|c}
        \gamma_{o} & \gamma_{i,o}^T \\
       \hline
        \gamma_{i,o} &  \sigma_z \Gamma_{\mathrm{A}} \sigma_z + \gamma_{i}
    \end{array} \right),
\end{align}
where each sub-block is a $2\times2$ matrix. \texttt{\smc} then gives the output 
\begin{align}
 \lim_{r'\rightarrow \infty} \left[\gamma_o - \gamma_{i,o}^T(\sigma_z \Gamma_\mathcal{A}\sigma_z+\gamma_i)^{-1} \gamma_{i,o}\right] = F^2_s,
\end{align}
aligning precisely with the squeezed vacuum state we would expect. The above equations also generalize readily to arbitrary Gaussian inputs and channels (see Appendix \ref{appendix: state-channel-join}), where the result agrees with prior literature~\cite{giedke2002characterization,fiurasek_gaussian_2002}. 

\textbf{Concatenating Channels}. Our second example illustrates the use of our algorithm to chain two single-qumode channels (Fig.~\ref{fig:link-prod-gen2}). We illustrate key concepts here via two identical single-mode squeezers of magnitude $s$, while Appendix \ref{appendix: general-chanchan-join} details the concatenation of two general Gaussian channels.

\begin{figure}[htp]
    \centering
    \includegraphics[width=\linewidth]{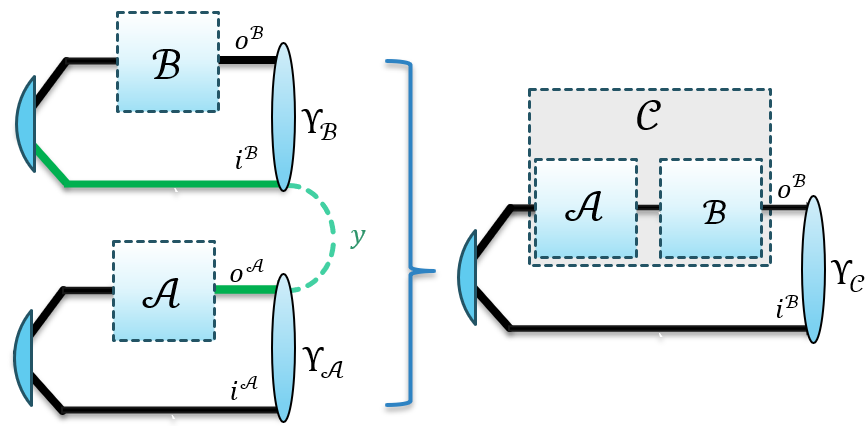}
    \caption{\textbf{Linking two quantum channels}. 
    $\Upsilon_{\mathcal A}(r)$ and $\Upsilon_{\mathcal B}(r')$ are the Choi states of two single-input, single-output circuit fragments, i.e., quantum channels. We can find the Choi states of their concatenation $\Upsilon_{\mathcal C}$ via the mode-stitching $y = (o^{\mathcal{A}},i^{\mathcal{B}})$.}
    \label{fig:link-prod-gen2}
\end{figure}

Let $\mathcal A$ and $\mathcal B$ both be a $P$-quadrature squeezer of identical magnitude $s$. Let $u_r = \cosh(r)$, and $v_r = \sinh(r)$. Then, the individual Choi states have the covariance matrix

\begin{align}
\Gamma_{\mathcal B}(r) = \Gamma_{\mathcal A}(r) = \left( \begin{array}{c|c}
       u_r F_{s}^2 & v_r F_{s}\sigma_z \\
       \hline
       v_r \sigma_z  F_{s} & u_r I_2
    \end{array} \right),
\end{align}

where $F_s$ is as before. Direct application of our algorithm, using $\Gamma_\mathcal A(r)$ and $\Gamma_\mathcal B(r')$, with the total number of stitched pairs, $m = 1$ gives
\begin{align}
    \Gamma_\text{out} &= (u_{r'} F_s^2 \oplus u_r I_2 ) - M^T ( u_{r'} I_2 +u_r F_s^2)^{-1} \nonumber M
    \label{eqn: link-prod-cm-channel}
 \end{align}
 where $M =  ( -v_{r'} F_s, v_r F_s\sigma_z)$. Taking $r'\to\infty$ (see Appendix \ref{appendix: general-chanchan-join} and \ref{appendix:squeezing-chanchan-join}), we obtain

\begin{align}
\Gamma_\text{out} = \begin{pmatrix}
 F_{2s}^2 \cosh r &  F_{2s} \sigma_z  \sinh r\\
 \sigma_z F_{2s} \sinh r   & I_2 \cosh r 
\end{pmatrix}.
\end{align}
This is the Choi covariance of a single squeezing channel with squeezing parameter $2s$, as expected.

\textbf{Modeling Non-Markovianity}. Our next illustration is that of building the Choi covariance of a non-Markovian process. Consider a system $S$ that interacts with a non-Markovian environment $E$ over two distinct time-steps (see Fig.~\ref{fig:process-example}), each with the same interaction process $\mathcal{A}$. Thus, the Choi states of the corresponding circuit fragment $\mathcal{T}$ satisfies
\begin{align}
\Upsilon_{\mathcal T}
=
\operatorname{tr}_{E}
\!\left[
\rho_0
\star_{J_0}
\Upsilon_{\mathcal A}
\star_{J_1}
\Upsilon_{\mathcal A}
\right],
\end{align}
where $\rho^{(E)}_0$ is the initial state of the environment, $\mathcal A$ is the 2-input, 2-output circuit fragment that describes a round of system-environmental interaction and the specific joins are depicted in Fig.~\ref{fig:process-example}.

\begin{figure}[ht]
    %\centering
    \includegraphics[width=0.95\linewidth]{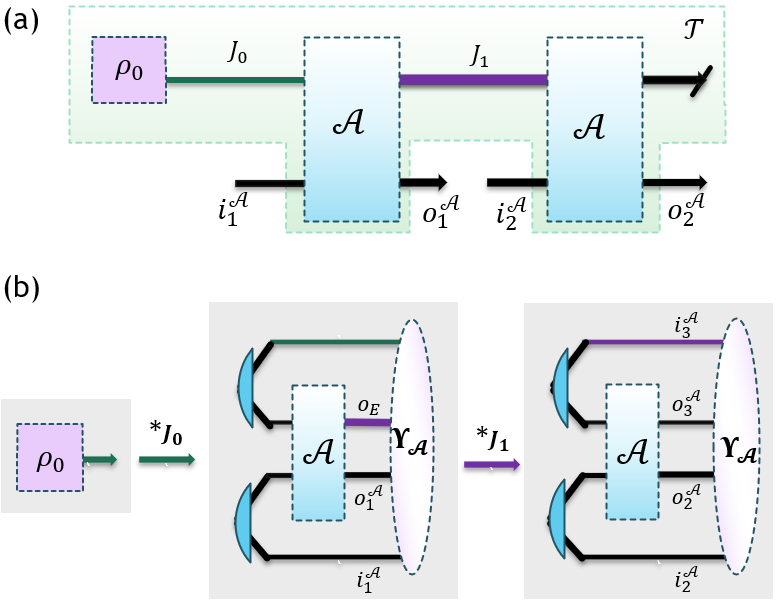}
    \caption{\textbf{A CV Non-Markovian Process}. (a) The quantum circuit fragment $\mathcal{T}$ depicts a non-Markovian process that takes an initial system state as input via mode $i^\mathcal{A}_1$ and describes how it can evolve through two rounds of interactions with some non-Markovian environment initially in state $\rho_0$. This allows for potential agent intervention between rounds. We can build a Choi covariance of $\mathcal{T}$ by taking its constituent circuit fragments and connecting them using mode links $J_0$ (green wire) and $J_1$ (purple wire). (b) depicts the CV Choi state of each circuit fragment: $\rho_0$ (left) is a single-qumode state, while the other two circuit fragments both modeling $\mathcal{A}$ are described by a Gaussian state on $4$ qumodes.}
    \label{fig:process-example}
\end{figure}

Here, we illustrate the scenario where the environment is initially the vacuum and $\mathcal A$ is a beamsplitter with transmittance $\cos^2\theta$. Setting shorthands $C=\cos\theta I_2$, $D=\sin\theta I_2$, $u_r = \cosh(r)$, $v_r = \sinh(r)$, we find in Appendix \ref{appendix:bs-bs-process-join} that the covariance matrix of $\Upsilon_{\mathcal A}$ is 
\begin{align}
    \Gamma_{\mathcal A} = \left(\begin{array}{ccc|c}
u_r I_2 & v_r\sigma_z C & -v_r \sigma_z D & \mathbf 0 \\
v_{r} C \sigma_z & \gamma_{22} & \gamma_{23} & v_{r'} D\sigma_z \\
-v_{r} D\sigma_z & \gamma_{23} & \gamma_{33} & v_{r'} C\sigma_z \\
\hline
\mathbf 0 & v_{r'} \sigma_z D & v_{r'} \sigma_z C & u_{r'}\mathbb I
\end{array}\right),
\end{align}
with $\gamma_{22} = u_r C^2 + u_{r'} D^2$, $\gamma_{23} = [u_{r'}-u_r] CD$, and $\gamma_{33} = u_r D^2 + u_{r'} C^2$, and the mode ordering of $\vec\alpha_{\mathcal A} = (\vec\alpha_{i_3^{\mathcal A}},\vec\alpha_{o_3^{\mathcal A}}, \vec\alpha_{o_2^{\mathcal A}}, \vec\alpha_{i_2^{\mathcal A}})^T$ (see Fig.~\ref{fig:process-example}). Meanwhile, the circuit fragment $\rho_0\star\Upsilon_{\mathcal A}$ has covariance
\begin{align}
    \Gamma_{\rho_0\star \mathcal A} 
    &= \left(\begin{array}{cc|c}
    u_r C^2 + D^2 & v_r C\sigma_z & (u_r  - 1)CD \\
    v_r \sigma_z C & u_r I_2 & v_r \sigma_z D \\
    \hline
    (u_r  - 1)CD & v_r D\sigma_z & u_r D^2 + C^2\\
\end{array}\right),
\end{align}
with the mode ordering of $\vec\alpha_{\rho\star\mathcal A} = (\vec\alpha_{o_1^{\mathcal A}},\vec\alpha_{i_1^{\mathcal A}}, \vec\alpha_{o_E})^T$ using analogous techniques to that of connecting a channel with input. 

To find the resulting Choi covariance, we apply the link product and trace out the environment mode. For expositional clarity and for compactness, we relegate all mathematical evaluation to Appendix \ref{appendix:bs-bs-process-join} and take the two surviving TMSV pairs to
have the same squeezing parameter $r$. Listing the resulting covariance in order $\vec\alpha=(\vec\alpha_{o_1^{\mathcal A}}, \vec\alpha_{i_1^{\mathcal A}}, \vec\alpha_{o_2^{\mathcal A}}, \vec\alpha_{i_2^{\mathcal A}})^T$, we obtain
\begin{align}\nonumber
\Gamma_\mathcal{T} = 
\begin{pmatrix}
\Gamma_{11} & \Gamma_{12} & \Gamma_{13} & \mathbf 0 \\
\Gamma_{12} & u_r I_2 & \Gamma_{23} & \mathbf 0 \\
\Gamma_{13} & \Gamma_{23} & \Gamma_{33} & v_r \cos\theta \sigma_z \\
\mathbf 0 & \mathbf 0 & v_r \cos\theta \sigma_z & u_r I_2
\end{pmatrix}.
\label{eqn:bs-bs-output}
\end{align}
Here, each block matrix is a $2\times 2$ matrix. $\Gamma_{11} = (u_r \cos^2\theta + \sin^2\theta) I_2$, $\Gamma_{12} = v_r \cos\theta \sigma_z$, $\Gamma_{13}=\sin^2\theta \cos\theta (1-u_r) I_2$, $\Gamma_{23}=-v_r \sin^2 \theta \sigma_z$ and $\Gamma_{33}
=
\left[
u_r\cos^2\theta
+
\sin^2\theta
\left(
u_r\sin^2\theta+\cos^2\theta
\right)
\right]I_2$. When $\theta=0$, $\mathcal A$ is the identity channel, and
the output covariance matrix reduces to that of two TMSV
states, as expected.

\begin{figure}
    \centering
    \includegraphics[width=0.5\textwidth]{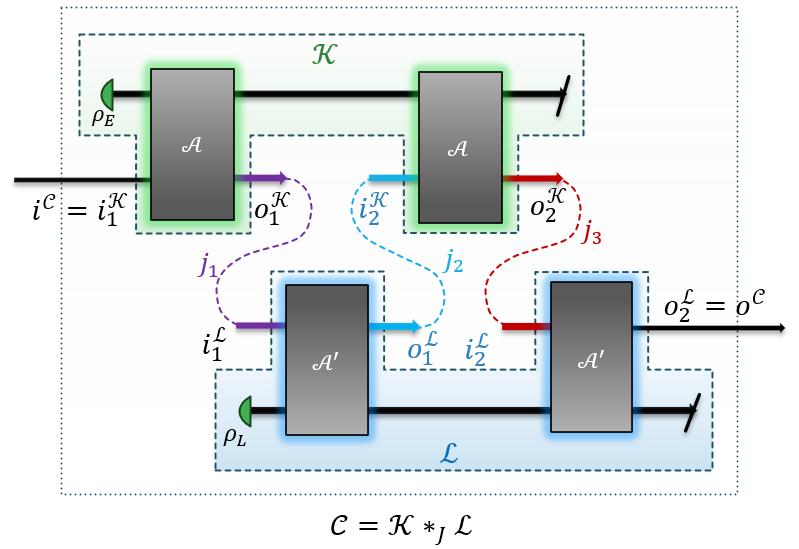}
    \caption{\textbf{Agent-Environmental Interactions over multiple timesteps} can be represented by two interacting quantum circuit fragments which we denote here by $\mathcal{K}$ (green) and $\mathcal{L}$ (blue). Setting $j_1 = (o^\mathcal{K}_1,i^\mathcal{L}_1)$, $j_2 = (o^\mathcal{L}_1,i^\mathcal{K}_2)$ and $j_3 =(o^\mathcal{K}_2,i^\mathcal{L}_2)$ and $J = \{j_1,j_2,j_3\}$, the link product $\mathcal{C} = \mathcal{K} \star_J \mathcal{L}$ then allows us to determine the overall effect of the resulting multi-time agent-environment interaction. The resulting circuit fragment $\mathcal{C}$ is then a quantum channel that maps input mode $i^\mathcal{K}_1$ to $o^{\mathcal{L}}_2$.}
    \label{fig:interaction-example-combs}
\end{figure}

\textbf{Agent-Environment Interactions}. The Gaussian link product also allows us to systematically deduce the result of agent-environment interactions. Here, we begin with the non-Markovian process $\mathcal{K}$ we just studied, consisting of an accessible system $S$ interacting with a non-Markovian environment $E$ (green circuit fragment in Fig.~\ref{fig:interaction-example-combs}). The blue circuit fragment $\mathcal{L}$ then depicts agent interventions on $S$ by application of some quantum process $\mathcal{A}'$ that also involves a memory system $L$. When the two are appropriately connected, they describe two rounds of agent-environmental interactions (see Fig.~\ref{fig:interaction-example-combs}). The link product then allows us to compute the overall effect of these interactions as described by the effective channel $\mathcal{C}$ on the system.  

Appendix \ref{appendix: two-combs-interact} outlines an explicit illustration of this computation using our methods when $\mathcal{A}$ and $\mathcal{A'}$ are each beamsplitters, and both the environment and the agent memory are initially the vacuum. In the special case where $\mathcal{A}$ and $\mathcal{A'}$ both have transmittance $\cos^2 \theta$, the resulting covariance matrix of $\mathcal{K} \star_J \mathcal{L}$ is
\begin{align}
    \Gamma_\mathcal{C}
    =
    \begin{pmatrix}
    \left[a_\theta^2 u_r+(1-a_\theta^2)\right] I_2
    &
    a_\theta v_r\sigma_z\\
    a_\theta v_r\sigma_z
    &
    u_r I_2
    \end{pmatrix},
\end{align}
where $a_\theta=\cos^4\theta-2\sin^2\theta\cos\theta$. This output represents the Choi-covariance of the effective channel $\mathcal{C}$ on the system. Thus, we see that the interaction between the agent and the non-Markovian process acts like an effective attenuating channel on the system with amplitude $a_\theta$.

\section{Discussion}
In a finite-dimensional setting, the quantum circuit fragments and their associated link product are indispensable tools for characterizing non-Markovian processes, agent-environment interactions, multi-time measurements, higher-order quantum operations, and optimizing circuit architectures. Here, we extended this compositional language to continuous-variable systems, whose elementary degrees of freedom are qumodes, such as quantized modes of light. For Gaussian processes, this formalism simplifies substantially: Gaussian circuit fragments admit compact covariance-matrix descriptions, and their link products can be evaluated directly at the covariance-matrix level, avoiding the explicit manipulation of infinite-dimensional density operators.

A natural continuation is to apply this toolkit in domains where finite-dimensional quantum combs have already demonstrated clear usefulness. Adaptive sensing and metrology in non-Markovian environments are compelling candidates~\cite{yang2019memory,kurdzialek2025quantum,liu2024efficient,altherr2021quantum,huang2024exact}, especially given the widespread use of Gaussian probes in CV settings due to their ease of generation and known near-optimality in certain nonadaptive regimes~\cite{nair2020fundamental,yadin2018operational}. Tomography and characterization of non-Markovian noise provide another natural direction~\cite{milz2018reconstructing,white2022non,white2025unifying}, with potential relevance to fault-tolerant CV quantum computation~\cite{menicucci2014fault,fukui2018high}. More fundamentally, the comb formalism enabled quantum uncertainty principles in spatiotemporal regimes~\cite{yunlong2023uncertainty}, and our techniques provide a pathway to discovering continuous-variable analogs. Meanwhile, there has been significant recent interest in transforming one black-box quantum operation into another~\cite{chiribella2008transforming,quintino2019probabilistic}, with the inversion of an unknown unitary serving as a key example~\cite{quintino2019reversing,yoshida2023reversing,mo2025parameterized}. It would certainly be interesting to understand how such a result extends to the CV or Gaussian regime, subject to uniquely CV constraints such as energy, squeezing, or non-classicality.
 
\vspace{0.1cm}
\noindent \emph{{Acknowledgments.}---} This work is supported by the Singapore Ministry of Education Tier 1 Grant RT4/23 and RG91/25, the National Research Foundation through the NRF Investigatorship on Quantum-Enhanced Agents (Grant No. NRF-NRFI09-0010), and the National Quantum Office, hosted in A*STAR, under its Advanced Quantum Algorithms and Solutions Funding Initiative (S25Q9DA001 and S25Q9DA002) and its Centre for Quantum Technologies Funding Initiative (S24Q2d0009).

\bibliography{references}

@article{camasca2021memory,
  title={Memory kernel and divisibility of Gaussian collisional models},
  author={Camasca, Rolando Ramirez and Landi, Gabriel T},
  journal={Physical Review A},
  volume={103},
  number={2},
  pages={022202},
  year={2021},
  publisher={APS}
}

@article{taranto2025higher, title={Higher-order quantum operations}, author={Taranto, Philip and Milz, Simon and Murao, Mio and Quintino, Marco T{\'u}lio and Modi, Kavan}, journal={arXiv preprint arXiv:2503.09693}, year={2025} }

@article{quintino2019probabilistic,
  title={Probabilistic exact universal quantum circuits for transforming unitary operations},
  author={Quintino, Marco T{\'u}lio and Dong, Qingxiuxiong and Shimbo, Atsushi and Soeda, Akihito and Murao, Mio},
  journal={Physical Review A},
  volume={100},
  number={6},
  pages={062339},
  year={2019},
  publisher={APS}
}

@article{fukui2018high,
  title={High-threshold fault-tolerant quantum computation with analog quantum error correction},
  author={Fukui, Kosuke and Tomita, Akihisa and Okamoto, Atsushi and Fujii, Keisuke},
  journal={Physical review X},
  volume={8},
  number={2},
  pages={021054},
  year={2018},
  publisher={APS}
}

@article{menicucci2014fault,
  title={Fault-tolerant measurement-based quantum computing with continuous-variable cluster states},
  author={Menicucci, Nicolas C},
  journal={Physical review letters},
  volume={112},
  number={12},
  pages={120504},
  year={2014},
  publisher={APS}
}

@article{xing2023fundamental,
  title={Fundamental limitations on communication over a quantum network},
  author={Xing, Junjing and Feng, Tianfeng and Fan, Zhaobing and Ma, Haitao and Bharti, Kishor and Koh, Dax Enshan and Xiao, Yunlong},
  journal={arXiv preprint arXiv:2306.04983},
  year={2023}
}

@article{kobayashi2024tensor,
  title={Tensor-network decoders for process tensor descriptions of non-Markovian noise},
  author={Kobayashi, Fumiyoshi and Manabe, Hidetaka and White, Gregory AL and Farrelly, Terry and Modi, Kavan and Stace, Thomas M},
  journal={arXiv preprint arXiv:2412.13739},
  year={2024}
}

@article{ried2015quantum,
  title={A quantum advantage for inferring causal structure},
  author={Ried, Katja and Agnew, Megan and Vermeyden, Lydia and Janzing, Dominik and Spekkens, Robert W and Resch, Kevin J},
  journal={Nature Physics},
  volume={11},
  number={5},
  pages={414--420},
  year={2015},
  publisher={Nature Publishing Group UK London}
}

@article{menicucci2006universal,
  title={Universal quantum computation with continuous-variable cluster states},
  author={Menicucci, Nicolas C and Van Loock, Peter and Gu, Mile and Weedbrook, Christian and Ralph, Timothy C and Nielsen, Michael A},
  journal={Physical review letters},
  volume={97},
  number={11},
  pages={110501},
  year={2006},
  publisher={APS}
}

@article{tan2008quantum,
  title={Quantum illumination with Gaussian states},
  author={Tan, Si-Hui and Erkmen, Baris I and Giovannetti, Vittorio and Guha, Saikat and Lloyd, Seth and Maccone, Lorenzo and Pirandola, Stefano and Shapiro, Jeffrey H},
  journal={Physical review letters},
  volume={101},
  number={25},
  pages={253601},
  year={2008},
  publisher={APS}
}

@article{liu2024efficient,
  title={Efficient tensor networks for control-enhanced quantum metrology},
  author={Liu, Qiushi and Yang, Yuxiang},
  journal={Quantum},
  volume={8},
  pages={1571},
  year={2024},
  publisher={Verein zur F{\"o}rderung des Open Access Publizierens in den Quantenwissenschaften}
}

@article{thompson2018quantum,
  title={Quantum plug n’play: modular computation in the quantum regime},
  author={Thompson, Jayne and Modi, Kavan and Vedral, Vlatko and Gu, Mile},
  journal={New Journal of Physics},
  volume={20},
  number={1},
  pages={013004},
  year={2018},
  publisher={IOP Publishing}
}

@article{chiribella2008transforming,
  title={Transforming quantum operations: Quantum supermaps},
  author={Chiribella, Giulio and D'Ariano, G Mauro and Perinotti, Paolo},
  journal={EPL (Europhysics Letters)},
  volume={83},
  number={3},
  pages={30004},
  year={2008}
}

@article{chiribella2008quantum,
  title={Quantum circuit architecture},
  author={Chiribella, Giulio and D’Ariano, G Mauro and Perinotti, Paolo},
  journal={Physical review letters},
  volume={101},
  number={6},
  pages={060401},
  year={2008},
  publisher={APS}
}

@article{vaidman1994teleportation,
  title={Teleportation of quantum states},
  author={Vaidman, Lev},
  journal={Physical Review A},
  volume={49},
  number={2},
  pages={1473},
  year={1994},
  publisher={APS}
}

@article{nair2020fundamental,
  title={Fundamental limits of quantum illumination},
  author={Nair, Ranjith and Gu, Mile},
  journal={Optica},
  volume={7},
  number={7},
  pages={771--774},
  year={2020},
  publisher={Optical Society of America}
}

@article{furusawa1998unconditional,
  title={Unconditional quantum teleportation},
  author={Furusawa, Akira and S{\o}rensen, Jens Lykke and Braunstein, Samuel L and Fuchs, Christopher A and Kimble, H Jeff and Polzik, Eugene S},
  journal={science},
  volume={282},
  number={5389},
  pages={706--709},
  year={1998},
  publisher={American Association for the Advancement of Science}
}

@article{yang2019memory,
  title={Memory effects in quantum metrology},
  author={Yang, Yuxiang},
  journal={Physical review letters},
  volume={123},
  number={11},
  pages={110501},
  year={2019},
  publisher={APS}
}

@article{huang2024exact,
  title={Exact quantum sensing limits for bosonic dephasing channels},
  author={Huang, Zixin and Lami, Ludovico and Wilde, Mark M},
  journal={PRX Quantum},
  volume={5},
  number={2},
  pages={020354},
  year={2024},
  publisher={APS}
}

@article{benedetti2014non,
  title={Non-Markovianity of colored noisy channels},
  author={Benedetti, Claudia and Paris, Matteo GA and Maniscalco, Sabrina},
  journal={Physical Review A},
  volume={89},
  number={1},
  pages={012114},
  year={2014},
  publisher={APS}
}

@article{yunlong2023uncertainty,
  title = {Quantum Uncertainty Principles for Measurements with Interventions},
  author = {Xiao, Yunlong and Yang, Yuxiang and Wang, Ximing and Liu, Qing and Gu, Mile},
  journal = {Phys. Rev. Lett.},
  volume = {130},
  issue = {24},
  pages = {240201},
  numpages = {7},
  year = {2023},
  month = {Jun},
  publisher = {American Physical Society},
  doi = {10.1103/PhysRevLett.130.240201},
  url = {https://link.aps.org/doi/10.1103/PhysRevLett.130.240201}
}

@article{vasile2011quantifying,
	title = {Quantifying non-{Markovianity} of continuous variable {Gaussian} dynamical maps},
	volume = {84},
	issn = {1050-2947, 1094-1622},
	number = {5},
	journal = {Physical Review A},
	author = {Vasile, Ruggero and Maniscalco, Sabrina and Paris, Matteo G. A. and Breuer, Heinz-Peter and Piilo, Jyrki},
	month = nov,
	year = {2011},
	pages = {052118},
}

@article{diosi2014general,
  title={General non-Markovian structure of Gaussian master and stochastic Schr{\"o}dinger equations},
  author={Di{\'o}si, Lajos and Ferialdi, Luca},
  journal={Physical review letters},
  volume={113},
  number={20},
  pages={200403},
  year={2014},
  publisher={APS}
}

@article{pollock2018non,
  title={Non-Markovian quantum processes: Complete framework and efficient characterization},
  author={Pollock, Felix A and Rodr{\'\i}guez-Rosario, C{\'e}sar and Frauenheim, Thomas and Paternostro, Mauro and Modi, Kavan},
  journal={Physical Review A},
  volume={97},
  number={1},
  pages={012127},
  year={2018},
  publisher={APS}
}

@article{milz2021quantum,
  title={Quantum stochastic processes and quantum non-Markovian phenomena},
  author={Milz, Simon and Modi, Kavan},
  journal={PRX Quantum},
  volume={2},
  number={3},
  pages={030201},
  year={2021},
  publisher={APS}
}

@article{milz2017introduction,
  title={An introduction to operational quantum dynamics},
  author={Milz, Simon and Pollock, Felix A and Modi, Kavan},
  journal={Open Systems \& Information Dynamics},
  volume={24},
  number={04},
  pages={1740016},
  year={2017},
  publisher={World Scientific}
}

@article{fiurasek_gaussian_2002,
    title = {Gaussian transformations and distillation of entangled {Gaussian} states},
    volume = {89},
    issn = {0031-9007, 1079-7114},
    url = {http://arxiv.org/abs/quant-ph/0204069},
    doi = {10.1103/PhysRevLett.89.137904},
    number = {13},
    urldate = {2023-12-29},
    journal = {Physical Review Letters},
    author = {Fiurasek, Jaromir},
    month = sep,
    year = {2002},
    note = {arXiv:quant-ph/0204069},
    pages = {137904},
}

@article{giedke2002characterization,
  title={Characterization of Gaussian operations and distillation of Gaussian states},
  author={Giedke, G{\'e}za and Cirac, J Ignacio},
  journal={Physical Review A},
  volume={66},
  number={3},
  pages={032316},
  year={2002},
  publisher={APS}
}

@article{chiribella2009theoretical,
  title={Theoretical framework for quantum networks},
  author={Chiribella, Giulio and D’Ariano, Giacomo Mauro and Perinotti, Paolo},
  journal={Physical Review A—Atomic, Molecular, and Optical Physics},
  volume={80},
  number={2},
  pages={022339},
  year={2009},
  publisher={APS}
}

@book{serafini2023quantum,
  title={Quantum continuous variables: a primer of theoretical methods},
  author={Serafini, Alessio},
  year={2023},
  publisher={CRC press}
}

@article{tanggara2024strategic,
  title={Strategic code: A unified spatio-temporal framework for quantum error-correction},
  author={Tanggara, Andrew and Gu, Mile and Bharti, Kishor},
  journal={arXiv preprint arXiv:2405.17567},
  year={2024}
}

@article{kurdzialek2025quantum,
  title={Quantum metrology using quantum combs and tensor network formalism},
  author={Kurdzia{\l}ek, Stanis{\l}aw and Dulian, Piotr and Majsak, Joanna and Chakraborty, Sagnik and Demkowicz-Dobrza{\'n}ski, Rafa{\l}},
  journal={New Journal of Physics},
  volume={27},
  number={1},
  pages={013019},
  year={2025},
  publisher={IOP Publishing}
}

@article{quintino2019reversing,
  title={Reversing unknown quantum transformations: Universal quantum circuit for inverting general unitary operations},
  author={Quintino, Marco T{\'u}lio and Dong, Qingxiuxiong and Shimbo, Atsushi and Soeda, Akihito and Murao, Mio},
  journal={Physical review letters},
  volume={123},
  number={21},
  pages={210502},
  year={2019},
  publisher={APS}
}

@article{yoshida2023reversing,
  title={Reversing unknown qubit-unitary operation, deterministically and exactly},
  author={Yoshida, Satoshi and Soeda, Akihito and Murao, Mio},
  journal={Physical Review Letters},
  volume={131},
  number={12},
  pages={120602},
  year={2023},
  publisher={APS}
}

@article{braunstein2005continuous,
title = {Quantum Information with Continuous Variables},
author = {Braunstein, Samuel L. and van Loock, Peter},
journal = {Reviews of Modern Physics},
volume = {77},
pages = {513--577},
year = {2005},
doi = {10.1103/RevModPhys.77.513}
}

@article{mo2025parameterized,
  title={Parameterized quantum comb and simpler circuits for reversing unknown qubit-unitary operations},
  author={Mo, Yin and Zhang, Lei and Chen, Yu-Ao and Liu, Yingjian and Lin, Tengxiang and Wang, Xin},
  journal={npj Quantum Information},
  volume={11},
  number={1},
  pages={32},
  year={2025},
  publisher={Nature Publishing Group UK London}
}

@article{altherr2021quantum,
  title={Quantum metrology for non-Markovian processes},
  author={Altherr, Anian and Yang, Yuxiang},
  journal={Physical Review Letters},
  volume={127},
  number={6},
  pages={060501},
  year={2021},
  publisher={APS}
}

@article{adesso2014continuous,
  title={Continuous variable quantum information: Gaussian states and beyond},
  author={Adesso, Gerardo and Ragy, Sammy and Lee, Antony R},
  journal={Open Systems \& Information Dynamics},
  volume={21},
  number={01n02},
  pages={1440001},
  year={2014},
  publisher={World Scientific}
}

@article{hu2021operator,
  title={Operator transpose within normal ordering and its applications for quantifying entanglement},
  author={Hu, Liyun and Zhang, Luping and Chen, Xiaoting and Ye, Wei and Guo, Qin and Fan, Hongyi},
  journal={Annalen der Physik},
  volume={533},
  number={6},
  pages={2000589},
  year={2021},
  publisher={Wiley Online Library}
}

@inproceedings{gutoski2007toward,
  title={Toward a general theory of quantum games},
  author={Gutoski, Gus and Watrous, John},
  booktitle={Proceedings of the thirty-ninth annual ACM symposium on Theory of computing},
  pages={565--574},
  year={2007}
}

@article{milz2018reconstructing,
  title={Reconstructing non-Markovian quantum dynamics with limited control},
  author={Milz, Simon and Pollock, Felix A and Modi, Kavan},
  journal={Physical Review A},
  volume={98},
  number={1},
  pages={012108},
  year={2018},
  publisher={APS}
}

@article{white2025unifying,
  title={Unifying Non-Markovian Characterization with an Efficient and Self-Consistent Framework},
  author={White, Gregory AL and Jurcevic, Petar and Hill, Charles D and Modi, Kavan},
  journal={Physical Review X},
  volume={15},
  number={2},
  pages={021047},
  year={2025},
  publisher={APS}
}

@article{groeblacher2015observation,
  title={Observation of non-Markovian micromechanical Brownian motion},
  author={Groeblacher, Simon and Trubarov, A and Prigge, N and Cole, GD and Aspelmeyer, M and Eisert, J},
  journal={Nature communications},
  volume={6},
  number={1},
  pages={7606},
  year={2015},
  publisher={Nature Publishing Group UK London}
}

@article{yadin2018operational,
  title={Operational resource theory of continuous-variable nonclassicality},
  author={Yadin, Benjamin and Binder, Felix C and Thompson, Jayne and Narasimhachar, Varun and Gu, Mile and Kim, MS},
  journal={Physical Review X},
  volume={8},
  number={4},
  pages={041038},
  year={2018},
  publisher={APS}
}

@article{chen2011non,
  title={Non-Markovian dynamics of a nanomechanical resonator measured by a quantum point contact},
  author={Chen, Po-Wen and Jian, Chung-Chin and Goan, Hsi-Sheng},
  journal={Physical Review B—Condensed Matter and Materials Physics},
  volume={83},
  number={11},
  pages={115439},
  year={2011},
  publisher={APS}
}

@article{grosshans2003quantum,
title = {Quantum Key Distribution Using Gaussian-Modulated Coherent States},
author = {Grosshans, Fr{'e}d{'e}ric and Van Assche, Gilles and Wenger, J{'e}r{^o}me and Brouri, Rosa and Cerf, Nicolas J. and Grangier, Philippe},
journal = {Nature},
volume = {421},
pages = {238--241},
year = {2003},
doi = {10.1038/nature01289}
}

@article{weedbrook_gaussian_2012,
    title = {Gaussian {Quantum} {Information}},
    volume = {84},
    issn = {0034-6861, 1539-0756},
    url = {http://arxiv.org/abs/1110.3234},
    doi = {10.1103/RevModPhys.84.621},
    language = {en},
    number = {2},
    urldate = {2022-09-06},
    journal = {Reviews of Modern Physics},
    author = {Weedbrook, Christian and Pirandola, Stefano and Garcia-Patron, Raul and Cerf, Nicolas J. and Ralph, Timothy C. and Shapiro, Jeffrey H. and Lloyd, Seth},
    month = may,
    year = {2012},
    note = {arXiv:1110.3234 [quant-ph]},
    pages = {621--669},
}

@article{white2023filtering,
  title={Filtering crosstalk from bath non-Markovianity via spacetime classical shadows},
  author={White, Gregory AL and Modi, Kavan and Hill, Charles D},
  journal={Physical Review Letters},
  volume={130},
  number={16},
  pages={160401},
  year={2023},
  publisher={APS}
}

@article{white2022non,
  title={Non-Markovian quantum process tomography},
  author={White, Gregory AL and Pollock, Felix A and Hollenberg, Lloyd CL and Modi, Kavan and Hill, Charles D},
  journal={PRX Quantum},
  volume={3},
  number={2},
  pages={020344},
  year={2022},
  publisher={APS}
}

@article{holevo2010choi,
  title={The Choi--Jamiolkowski forms of quantum Gaussian channels},
  author={Holevo, Alexander S},
  journal={Journal of Mathematical Physics},
  volume={52},
  number={4},
  year={2011},
  publisher={AIP Publishing}
}

@misc{zhu2026twotoothbosonicquantumcomb,
      title={Two-tooth bosonic quantum comb for temporal-correlation sensing}, 
      author={Shaojiang Zhu and Xinyuan You and Alexander Romanenko and Anna Grassellino},
      year={2026},
      eprint={2601.10916},
      archivePrefix={arXiv},
      primaryClass={quant-ph},
      url={https://arxiv.org/abs/2601.10916}, 
}

@article{PhysRevLett.127.010502,
  title = {Waveform Estimation from Approximate Quantum Nondemolition Measurements},
  author = {Boulebnane, Sami and Woods, Mischa P. and Renes, Joseph M.},
  journal = {Phys. Rev. Lett.},
  volume = {127},
  issue = {1},
  pages = {010502},
  numpages = {6},
  year = {2021},
  month = {Jun},
  publisher = {American Physical Society},
  doi = {10.1103/PhysRevLett.127.010502},
  url = {https://link.aps.org/doi/10.1103/PhysRevLett.127.010502}
}

@article{PhysRevLett.126.230401,
  title = {Experimental Demonstration of Instrument-Specific Quantum Memory Effects and Non-Markovian Process Recovery for Common-Cause Processes},
  author = {Guo, Yu and Taranto, Philip and Liu, Bi-Heng and Hu, Xiao-Min and Huang, Yun-Feng and Li, Chuan-Feng and Guo, Guang-Can},
  journal = {Phys. Rev. Lett.},
  volume = {126},
  issue = {23},
  pages = {230401},
  numpages = {6},
  year = {2021},
  month = {Jun},
  publisher = {American Physical Society},
  doi = {10.1103/PhysRevLett.126.230401},
  url = {https://link.aps.org/doi/10.1103/PhysRevLett.126.230401}
}

@article{PhysRevA.87.012107,
  title = {Optimal estimation of joint parameters in phase space},
  author = {Genoni, M. G. and Paris, M. G. A. and Adesso, G. and Nha, H. and Knight, P. L. and Kim, M. S.},
  journal = {Phys. Rev. A},
  volume = {87},
  issue = {1},
  pages = {012107},
  numpages = {7},
  year = {2013},
  month = {Jan},
  publisher = {American Physical Society},
  doi = {10.1103/PhysRevA.87.012107},
  url = {https://link.aps.org/doi/10.1103/PhysRevA.87.012107}
}

\begin{appendix}
\onecolumngrid

\section{Re-ordering qumodes}
\label{appendix: qumode-reordering}

Recall that we originally set $\hat R = (\hat X_1,\hat P_1,\dots, \hat X_n,\hat P_n)$ and $\vec\alpha = (x_1,p_1,\dots x_n,p_n)^T \in \mathbb R^{2n}$. In this notation, the qumodes are ordered in a specific manner according to the form of $\hat R$ and $\vec\alpha$. As a consequence, the covariance matrix of a Gaussian state will also be of a specific form, e.g., for a multimode Gaussian state whose modes are uncorrelated, it has a covariance matrix given by $\Gamma = \Gamma_1\oplus\Gamma_2\oplus \cdots \oplus \Gamma_n$ where $\Gamma_k$ corresponds to the covariance matrix of the $k$-th mode.

However, it may sometimes be of use to employ a different order of $\hat R$ and $\vec\alpha$. Thus, one can ``re-order" the qumodes (i.e., change the order of $\hat R$ and $\vec\alpha$) - this would be a purely notational operation, but this is useful for e.g., ensuring that \texttt{\pcm} outputs the correct matrix. Being able to re-order the qumodes also allows the algorithm \texttt{multimodeLinkProduct} to return an output that matches our notation convention.

In ``re-ordering" the qumodes, it is important to keep track of the new order of the modes. Here, we first provide a basic example of such re-ordering, before proceeding with a practical and illustrative example appropriate to the context of the paper.

The basic example: in standard channel convention, for the Choi state of a channel with input and output modes $i,o$, our qumodes might take the form of 

\begin{align}
\vec\alpha &= (\vec\alpha_o,\vec\alpha_i)^T = (x_o,p_o,x_i,p_i)^T
\end{align}

(i.e., it has the qumode order of ($o,i$)). If the Choi state is Gaussian, then the covariance matrix of the Choi state will have the submatrices

\begin{align}
\Gamma = \begin{pmatrix}
\gamma_{o} & \gamma_{o,i}\\
\gamma_{o,i}^T & \gamma_{i}
\end{pmatrix}.
\end{align}

If we re-order the qumodes such that the new order is given by ($i,o$)

\begin{align}
\vec\alpha' &= (\vec\alpha_i,\vec\alpha_o)^T = (x_i,p_i,x_o,p_o)^T
\end{align}

then the covariance matrix in this new order is

\begin{align}
\Gamma' = \begin{pmatrix}
\gamma_{i} & \gamma_{o,i}^T\\
\gamma_{o,i} & \gamma_{o}
\end{pmatrix}.
\end{align}

The same concept may be applied for a larger number of qumodes as well. Particularly in the context of this paper, we talk about how we re-order the qumodes such that the modes participating in the J-stitching form the last $2m$ terms of $\vec\alpha$, where $m = |J|$ is the number of qumode-stitched pairs in $J$. For bookkeeping in the algorithm, we have chosen for the qumodes to be arranged in descending order such that last two terms correspond to the $x-,$ and $p-$quadratures of the first ordered pair to be joined. Qumodes not participating in the join constitute the remaining first entries of $\vec\alpha$, such that the new order $\vec\alpha_{\mathcal A}' = (\vec\alpha_{A\setminus J}, \, x_{j=m},p_{j=m},\, \dots,\, x_{j=1},p_{j=1})^T$.

We provide an illustrative example of this here: Consider the process $\mathcal A$ in Fig.~\ref{fig:jcomp} with modes $A = I_\mathcal A \cup O_\mathcal A = \set{i_1^{\mathcal A},o_1^{\mathcal A}, i_2^{\mathcal A}, o_2^{\mathcal A}}$. This circuit is to be $J$-stitched with fragment $\mathcal B$ where $J=\set{(o_1^\mathcal A, i_1^\mathcal B),(o_1^\mathcal B, i_2^\mathcal A)}$. Therefore, the modes of $\mathcal A$ participating in the join are  $J_\mathcal A = \set{o_1^\mathcal A, i_2^\mathcal A}$.

One might naively order the qumodes as $(i_1^{\mathcal A}, o_1^{\mathcal A}, i_2^{\mathcal A}, o_2^{\mathcal A})$, i.e., set $\vec\alpha_A = (x_{i_1^{\mathcal A}},p_{i_1^{\mathcal A}},x_{o_1^{\mathcal A}},p_{o_1^{\mathcal A}}, x_{i_2^{\mathcal A}},p_{i_2^{\mathcal A}},x_{o_2^{\mathcal A}},p_{o_2^{\mathcal A}})^T$. However, in accordance to our requirement for the algorithm, then we may choose to arrange the new qumode order of $(i_1^{\mathcal A}, o_2^{\mathcal A}, i_2^{\mathcal A}, o_1^{\mathcal A})$ such that:

$$
\vec\alpha_A' = (x_{i_1^{\mathcal A}},p_{i_1^{\mathcal A}},x_{o_2^{\mathcal A}},p_{o_2^{\mathcal A}}, x_{i_2^{\mathcal A}},p_{i_2^{\mathcal A}}, x_{o_1^{\mathcal A}},p_{o_1^{\mathcal A}})^T
$$

In this form, the last qumode corresponds to the mode of $\mathcal A$ in the first stitch-pair [$(o_1^\mathcal A, i_1^\mathcal B)$] of the J-stitching, i.e., the qumode $o_1^\mathcal A$. Correspondingly, the penultimate qumode corresponds to the mode of $\mathcal A$ that is in the second stitch-pair [$(o_1^\mathcal B, i_2^\mathcal A)$], i.e., the qumode $i_2^\mathcal A$.
The covariance matrix of $\Upsilon_\mathcal A$ must also have its submatrices arranged to correspond to this new $\vec\alpha_A'$ order. Later examples in this Appendix on using the algorithm will keep track of the qumodes in a similar fashion.

 \section{Partial Transpose and Partial Trace of CV states}
\label{appendix:partial-transpose}
Here we elaborate on two operations that appear in the link product: the partial transpose and the partial trace of CV states.

Firstly, a transpose of all $n$ modes in phase space of a CV state, $\rho^T$, corresponds to the mapping of the phase space vector $\vec\alpha \mapsto -\Lambda_n\vec\alpha$ in the characteristic function~\cite{hu2021operator} where $\Lambda_n = \bigoplus_{i=1}^n \text{diag}(1,-1)$.

On the other hand, the {\it partial transpose}, $\rho^{T_{J}}$, involves the transposition of only a subset of modes. In the context of this paper, $\Upsilon_{\mathcal A}^{T_{J_\mathcal A}}$ is the transposition of $\Upsilon_{\mathcal A}$ on $\bigotimes_{x\in J_\mathcal A}\mathcal H^x$ (and identity elsewhere). In text, we have written this to correspond to the transformation of the characteristic function of $\Upsilon_{\mathcal A}$ as:

\begin{align}
    \chi_{{\Upsilon_\mathcal A}^{T_{J_\mathcal A}}} (\dots, x_j,p_j, \dots) = \chi_{\Upsilon_\mathcal A} (\dots, -x_j, p_j, \dots) \, \forall j \in J_\mathcal A.
\end{align}

We note that for a partial transpose on a single mode $y$, $\Upsilon_{\mathcal A}^{T_y}$, we have

\begin{align}
     \chi_{\Upsilon_\mathcal A^{T_{J_\mathcal A}}}(\dots, x_j,p_j, \dots) = \chi_{\Upsilon_\mathcal A} (\dots, -x_y, p_y, \dots) =  \chi_{\Upsilon_\mathcal A} (\vec\alpha_{A\setminus \set{y}} \oplus -\Lambda_1 \vec\alpha_y)
\end{align}

where $\alpha_y =(x_y, p_y)^T$ corresponding to mode $y$ (i.e., $\vec\alpha_y \mapsto -\Lambda_1 \vec\alpha_y$), $\Lambda_1=\sigma_z=\operatorname{diag}(1,-1)$ and $\vec\alpha_{A\setminus \set{y}}$ corresponds to all other modes of $\Upsilon_\mathcal A$. Then, the transformation of the characteristic function from the partial transpose $\Upsilon_{\mathcal A}^{T_{J_\mathcal A}}$ can be equivalently written as:

\begin{align}
    \chi_{{\Upsilon_A}^{T_{J_\mathcal A}}} (\vec\alpha_{A\setminus J} \oplus \vec\alpha_J)= \chi_{\Upsilon_A} (\vec\alpha_{A\setminus J} \oplus -\Lambda_{m} \vec\alpha_J)
\end{align}

where $\Lambda_{m} = \bigoplus_{j=1}^{m} \text{diag}(1,-1)$, and $m$ is the number of modes of $\mathcal A$ involved in the stitch (i.e., $m=|J_\mathcal A|$).

Practically, for Gaussian states, we can interpret the effect of the partial transpose in terms of the effect on the Gaussian state's covariance matrix. One can rearrange $\vec \alpha_A$ into $\vec\alpha'_A$ whose last $m$ indices correspond to the modes $j \in J_\mathcal{A}$:

$$
\vec\alpha_{\mathcal A}' = (\vec\alpha_{A\setminus J}, \, x_{j=m},p_{j=m},\, \dots,\, x_{j=1},p_{j=1})^T
$$

Here $\vec\alpha_{A\setminus J}$ corresponds to the qumodes of $(I_\mathcal A\cup O_\mathcal A)\setminus J_\mathcal{A}$ (i.e., unlinked modes of $\Upsilon_\mathcal A$). See previous subsection for examples of re-ordering qumodes.

Then, if $\Upsilon_\mathcal A$ is a Gaussian state, the covariance matrix in this basis $\vec\alpha'_A$ can be divided into the following block matrix,

\begin{align}
    \Gamma' = \begin{pmatrix}
        \gamma_{A/J} & \gamma_{A,J} \\
        \gamma_{A,J}^T & \gamma_J
    \end{pmatrix},
\end{align}

where $\gamma_J$ corresponds to the covariance block of stitched modes $J$, $\gamma_{A/J}$ corresponds to the covariance block of modes $A\setminus J=(I_\mathcal A\cup O_\mathcal A) \setminus J_{\mathcal A}$, and $\gamma_{A,J}$ corresponds to the interaction terms between the modes to be joined, and the modes not to be joined. In this new basis of $\vec\alpha'_A$, one finds that $\Upsilon_\mathcal A^{T_{J_\mathcal A}}$ has the covariance matrix

\begin{align}
    \Gamma_{\Upsilon_\mathcal A^{T_{J_\mathcal A}}} &= (\mathbb I \oplus \Lambda_m) \Gamma' (\mathbb I \oplus \Lambda_m) \\
    &= \begin{pmatrix}
        \gamma_{A\setminus J} & \gamma_{A,J} \Lambda_m \\
        \Lambda_m\gamma_{A,J}^T & \Lambda_m\gamma_J \Lambda_m
    \end{pmatrix}.
    \label{eqn: partial-transpose-cm}
\end{align}

where $\mathbb I$ is the identity matrix with the same dimensions as $\gamma_{A\setminus J}$~\cite{serafini2023quantum}.

The form of (\ref{eqn: partial-transpose-cm}) can be obtained by applying the aforementioned transformation of $\vec\alpha_J \mapsto -\Lambda_m \vec\alpha_J$ in characteristic function, $\chi_{\Upsilon_{\mathcal A}}(\vec\alpha) = e^{-\frac{1}{4}\vec\alpha^T\Omega^T \Gamma' \Omega\vec\alpha}e^{i\vec \alpha^T\Omega^T \vec{d}}$, due to the partial transpose and then taking into account $\Lambda\Omega = -\Omega\Lambda$ (where the square matrices $\Omega, \Lambda$ both have the same dimensions) to propagate the effect of $\Lambda$ on the covariance matrix.

Another important operation is the partial trace. \(\operatorname{tr}_J\) denotes the partial trace over the
pairwise-identified stitched Hilbert spaces
\(\mathcal H^{\underline{o}_j}\cong\mathcal H^{\underline{i}_j}\) for all \(j=(\underline{o}_j,\underline{i}_j)\in J\), since through the link product, the stitched spaces are equivalent. We note that a useful property that will be used later relating to the trace and Weyl operators is $\text{tr}[\hat D(\vec\alpha_i)D(\vec\alpha_o)] \propto \delta(\vec\alpha_i+\vec\alpha_o)$~\cite{serafini2023quantum}.

\section{Proof of Result ~\ref{jstitch}}
\label{proof: link-prod-circuitfragments}

We note that \cite{giedke2002characterization,fiurasek_gaussian_2002} provides the form to evaluate the stitching of all output modes of a CV state stitched to all the input modes of a CV channel (for completeness, we present as Lemma 1 below). We therefore build on the work in ~\cite{giedke2002characterization,fiurasek_gaussian_2002} to prove Result \ref{jstitch}, which is a more general form of the results in the aforementioned literature that allows for only a subset of outputs of a first process, $\mathcal A$, to be fed into a subset of inputs of the second process, $\mathcal B$. This result easily generalizes to the case of bi-directional stitching, i.e., a subset of {\it both inputs and outputs} of $\mathcal A$ being mode-stitched with a subset of {\it outputs and inputs} of $\mathcal B$ — and in later examples in the Appendix, we demonstrate such bidirectional stitchings using the link product.

\begin{restatement}{Result \ref{jstitch}}
Consider two CV circuit fragments \(\mathcal A\) and \(\mathcal B\),
with Choi states \(\Upsilon_{\mathcal A}(\mathbf r)\) and
\(\Upsilon_{\mathcal B}(\mathbf s)\), and associated squeezing parameters  $\mathbf r=\{r_i:i\in I_{\mathcal A}\}$ and $\mathbf s=\{s_i:i\in I_{\mathcal B}\}$. Let $\mathcal{C}$ be the resulting composite circuit fragment formed by the unidirectional $J$-stitching some of the output modes of $\mathcal{A}$ (set $J_\mathcal{A} = O_J$) with some of the input modes of $\mathcal{B}$ (set $J_\mathcal{B} = I_J$). Then
\begin{equation}
\Upsilon_{\mathcal A}\star_J \Upsilon_{\mathcal B} =
\lim_{\mathbf s_J\to\infty}
K(\mathbf s_J)\,
\operatorname{tr}_J
\!\left[
\Upsilon_{\mathcal A}(\mathbf r)^{T_{J_{\mathcal A}}}
\Upsilon_{\mathcal B}(\mathbf s)
\right],
\end{equation}
where $K(\mathbf s_J)
=
\prod_{i\in I_J}\cosh^2 (s_i/2)$ is the constant of normalization, the limit is shorthand for taking $s_i \rightarrow \infty$ for all modes $i \in I_J$. Here, the partial transpose \(T_{J_{\mathcal A}}\) is taken on the modes of \(\Upsilon_{\mathcal A}\) participating in the stitching,
and \(\operatorname{tr}_J\) denotes the partial trace.
\end{restatement}

What we wish to prove, therefore, is that the link product (J-stitching) of two circuit fragments, $\mathcal A$, $\mathcal B$, on modes $O_J$ to $I_J$, can be evaluated with the formula in the theorem. In  order to achieve this, we first introduce the following two lemmas.

\subsection{Lemma 1}

\begin{lemma}[Stitching a state into a channel with link product (all outputs to all inputs)]
Consider a CV channel $\mathcal A$ with input and output qumodes $I_\mathcal A$ and $O_\mathcal A$ respectively, with Choi state $\Upsilon_\mathcal A(\mathbf r)$ on $\mathcal D(\bigotimes_{i\in {I_\mathcal A}} \mathcal H^i \otimes \bigotimes_{o\in {O_\mathcal A}} \mathcal H^o)$. Consider a quantum state $\rho$, with output qumodes $O_\rho$, in the space $\mathcal D(\bigotimes_{o\in {O_\rho}} \mathcal H^o)$ that is equivalent to $\mathcal D(\bigotimes_{i\in {I_\mathcal A}} \mathcal H^i)$. Then, the J-stitching of $\mathcal A$ and $\rho$ on the set of modes $J=\set{(\underline{o}_\ell, \underline{i}_\ell):\underline{i}_\ell \in I_{\mathcal A}, \underline{o}_\ell \in O_\rho}$, such that all the outputs of $\rho$ are fed into all the inputs of $\mathcal A$, can be evaluated with
    \begin{align}
\rho \star \Upsilon_{\mathcal A} &:= \mathcal A (\rho) \\
 &= \lim_{\mathbf r\to\infty} K(\mathbf r) \operatorname{tr}_{O_\rho}[\rho^{T} \Upsilon_\mathcal A(\mathbf r)]
\end{align}

$\rho \star \Upsilon_{\mathcal A}= \mathcal A(\rho)$ is clear from the definition of the link product, where $\rho$ and $\mathcal A$ are circuit fragments, and all the outputs of $\rho$ are fed into all the inputs of $\mathcal A$. Here, $K(\mathbf r)$ is the normalization constant, and $\text{tr}_{O_\rho}$ is the partial trace on $\bigotimes_{o\in O_\rho} \mathcal H^{o}\cong\bigotimes_{i\in I_\mathcal A} \mathcal H^{i}$.
\label{lemma:1}
\end{lemma}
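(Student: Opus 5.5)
Since $\mathcal A$ is linear and continuous in trace norm, it suffices to prove the identity on a dense spanning set. The plan is to write $\Upsilon_{\mathcal A}(\mathbf r)$ explicitly in the Fock basis and reduce everything to the transpose trick in a truncated form. For the TMSV we have $\ket{\Phi(r)}=\frac{1}{\cosh(r/2)}\sum_k \lambda^k\ket{k}\ket{k}$ with $\lambda=\tanh(r/2)$. Applying $\mathcal A$ to the input arms (modes $I_{\mathcal A}$) gives
\begin{align}
\Upsilon_{\mathcal A}(\mathbf r)=\frac{1}{K(\mathbf r)}\sum_{\mathbf k,\mathbf l}\boldsymbol\lambda^{\mathbf k+\mathbf l}\,\ket{\mathbf k}\bra{\mathbf l}_{I}\otimes\mathcal A(\ket{\mathbf k}\bra{\mathbf l}),
\end{align}
where $\boldsymbol\lambda^{\mathbf k}=\prod_i\lambda_i^{k_i}$. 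This is exactly why $K(\mathbf r)=\prod_i\cosh^2(r_i/2)$ appears: it cancels the normalization of the TMSVs.

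Next, identify $\mathcal H^{O_\rho}\cong\mathcal H^{I_{\mathcal A}}$ and compute the partial trace. Writing $\rho=\sum\rho_{\mathbf l\mathbf k}\ket{\mathbf l}\bra{\mathbf k}$, we have $\operatorname{tr}[\rho^T\ket{\mathbf k}\bra{\mathbf l}]=\rho_{\mathbf k\mathbf l}$, so
\begin{align}
K(\mathbf r)\operatorname{tr}_{O_\rho}[\rho^T\Upsilon_{\mathcal A}(\mathbf r)]=\sum_{\mathbf k,\mathbf l}\boldsymbol\lambda^{\mathbf k+\mathbf l}\rho_{\mathbf k\mathbf l}\,\mathcal A(\ket{\mathbf k}\bra{\mathbf l})=\mathcal A(\rho_{\boldsymbol\lambda}),
\end{align}
with $\rho_{\boldsymbol\lambda}:=\boldsymbol\lambda^{\hat N}\rho\,\boldsymbol\lambda^{\hat N}$ and $\hat N$ the vector of number operators. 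The interchange of $\mathcal A$ with the double sum is legitimate because the sum defining $\rho_{\boldsymbol\lambda}$ converges in trace norm: $\boldsymbol\lambda^{\hat N}$ is a contraction and $\rho$ is trace class. The sign convention for the partial transpose (the map $x_j\mapsto -x_j$ on characteristic functions) is the transpose in the Fock basis, as recorded in Appendix~\ref{appendix:partial-transpose}. I will state that identification explicitly so the computation above matches the paper's convention.

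The main obstacle is the limit $\mathbf r\to\infty$, that is, $\lambda_i\to1$. I will show $\|\rho_{\boldsymbol\lambda}-\rho\|_1\to0$. Since $\boldsymbol\lambda^{\hat N}\to\mathbb I$ strongly and the operators are uniformly bounded by $1$, the standard lemma (strong convergence of uniformly bounded operators implies convergence of $A_n\rho B_n$ in trace norm for trace-class $\rho$) applies. One proves it by approximating $\rho$ by a finite-rank operator. Trace-norm continuity of the CPTP map $\mathcal A$ then gives $\mathcal A(\rho_{\boldsymbol\lambda})\to\mathcal A(\rho)=\rho\star\Upsilon_{\mathcal A}$. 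The limit should be understood in trace norm, with the multi-parameter limit taken jointly; uniform boundedness makes the order of limits irrelevant.
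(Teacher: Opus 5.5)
Your proof is correct, and it takes a different, more self-contained route than the paper.

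The paper never computes the finite-squeezing expression for a general channel. Instead it:
\begin{enumerate}
\item imports the proportionality $\mathcal A(\rho)\propto\lim_{\mathbf r\to\infty}\operatorname{tr}_{O_\rho}[\rho^{T}\Upsilon_{\mathcal A}(\mathbf r)]$ from Giedke and Cirac;
\item performs your Fock-basis computation only for the identity channel, obtaining in your notation $\lambda^{\hat N}\rho\,\lambda^{\hat N}/\cosh^2(r/2)$;
\item reads off $K(r)=\cosh^2(r/2)$ from $\tanh(r/2)\to1$;
\item carries this constant over to arbitrary $\mathcal A$ by appealing to trace preservation, and takes a product over modes.
\end{enumerate}
You instead push $\mathcal A$ through the Fock expansion of the TMSVs. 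This gives the exact identity $K(\mathbf r)\operatorname{tr}_{O_\rho}[\rho^{T}\Upsilon_{\mathcal A}(\mathbf r)]=\mathcal A(\boldsymbol\lambda^{\hat N}\rho\,\boldsymbol\lambda^{\hat N})$ for every channel at once. So you need neither the external proportionality result nor the transfer step. It also shows explicitly that finite squeezing acts as a damping filter on the input before the channel.

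Your treatment of the limit also supplies what the paper leaves heuristic. You use strong convergence $\boldsymbol\lambda^{\hat N}\to\mathbb I$, trace-norm convergence of $\boldsymbol\lambda^{\hat N}\rho\,\boldsymbol\lambda^{\hat N}$ to $\rho$, and trace-norm continuity of $\mathcal A$. This gives a stated topology for the limit and a justification of the joint multi-parameter limit. Your check that the characteristic-function sign convention is the Fock-basis transpose is likewise something the paper uses only through a citation. The paper's version is shorter and leans on the existing literature; yours is complete on its own.

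Two minor points.
\begin{itemize}
\item Your reason for exchanging $\mathcal A$ with the double sum is not quite the operative one. That $\boldsymbol\lambda^{\hat N}$ is a contraction and $\rho$ is trace class shows only that $\rho_{\boldsymbol\lambda}$ is trace class. Convergence of the Fock double series in trace norm comes from absolute convergence: $\sum_{\mathbf k,\mathbf l}\boldsymbol\lambda^{\mathbf k+\mathbf l}|\rho_{\mathbf k\mathbf l}|\le\prod_i(1-\lambda_i)^{-2}<\infty$ for $\lambda_i<1$. The same bound, with trace-norm contractivity of $\mathcal A$ and of the partial trace, also justifies your expansion of $\Upsilon_{\mathcal A}(\mathbf r)$ and the termwise partial trace.
\item Your opening reduction to a dense spanning set is never used, since the computation already holds for every $\rho$.
\end{itemize}
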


\begin{proof}
The action of $\mathcal A(\rho)$ where $o_j \in O_{\rho}$ can be expressed in terms of $\Upsilon_{\mathcal A}$ as the trace relation \cite{giedke2002characterization}

\begin{align}
\mathcal A(\rho) \propto \lim_{\mathbf r\to\infty}\text{tr}_{O_\rho}[\rho^{T} \Upsilon_\mathcal A(\mathbf r)] \label{eqn:Aofrho}
\end{align}

where we have included explicitly the dependence on $\mathbf r$. This inclusion of the $\mathbf r \to\infty$ dependence is later motivated in the example in \cite{giedke2002characterization} in order for the join to yield the correct output state.

Below, we evaluate the proportionality constant (also referred to as the normalization constant here) by considering the case of the link product between a general state, $\rho$ and the Choi state of an identity channel. 

Consider a general state in the Fock basis,

\begin{align}
    \rho^T = \sum_{l,m} \lambda_{l,m} \ket l\bra m_i,
\end{align}

and the Bell state (Choi state of an identity channel),

\begin{align}
    \Phi(r) = \frac{1}{\cosh^2 (r/2)} \sum_{j,k } \tanh^k (r/2) \tanh^j (r/2) \ket k_i\ket k_o \bra j_i \bra j_o .
\end{align}

We know that the link product of these two states should return $\rho$, since $\mathcal I (\rho) = \rho \star \Phi(\mathbf r) = \rho$. To obtain the normalization constant, we evaluate the link product given by $\rho \star \Phi(\mathbf r) = \lim_{r\to\infty} K(r) \text{tr}_i [\rho^T \Phi(r)]$, by using $\braket{m|k} =\delta_{mk}$ and $\text{tr}_i[\ket l\bra j_i] = \text{tr}_i[\braket {j|l}_i] = \delta_{lj}$:

\begin{align}
    \lim_{r\to\infty}K(r)\ \text{tr}_i [\rho^T \Phi(r)] &= \lim_{r\to\infty} K(r)\ \text{tr}_i \left[\sum_{j,k,l,m} \frac{\tanh^k (r/2) \tanh^j (r/2)}{\cosh^2(r/2)} \lambda_{lm} \ket l \braket{m|k}\bra j_i \otimes \ket k \bra j_o \right] \\
    &= \lim_{r\to\infty} K(r)\ \sum_{l,m}  \frac{\tanh^m (r/2) \tanh^l (r/2)}{\cosh^2(r/2)} \lambda_{lm} \ket m \bra l
\end{align}

Since $\tanh r\to1$ as $r\to\infty$, then for $\lim_{r\to\infty} K(r) \text{tr}_i [\rho^T \Phi(r)] = \rho$ to be true, we need

\begin{align}
    K(r) =\cosh^2(r/2).
\end{align}

Finally, we obtain the normalization factor for Eq.~\ref{eqn:Aofrho} as above, because $\mathcal{A}$ is trace preserving. Evidently, in the case where we have more than one mode-stitching (and thus more than 1 Bell state), the normalization factor will simply become $K(\mathbf r) = \prod_j \cosh^2(r_j/2)$ where $r_j$ denotes the squeezing parameter of the $j$-th Bell pair involved in the join.

\end{proof}

It also becomes obvious through this calculation why the limit of $r\to\infty$ must be introduced for the link product: if $r\not\to \infty$, the shape of our resulting state will be affected due to the $\tanh r$ terms, i.e., the link product may yield a different state altogether. This is what we mean when we say that not all information passes through the join.

\subsection{Lemma 2}

We note that Lemma \ref{lemma:1} would still apply to the most general case where only a subset $O_{\rho}$ is fed into a subset of $I_{\mathcal A}$ by considering instead $\mathcal A'(\rho')$ where $\rho' = \rho \otimes \Phi(\mathbf r')$ and $\mathcal A' (\cdot) = \mathcal I \otimes \mathcal A (\cdot)$. We elucidate an explicit version of this case in the following:

\begin{lemma}[Stitching a state into a channel with link product formula (partial outputs to partial inputs)]
Consider a CV channel $\mathcal A$ with input and output qumodes $I_\mathcal A$ and $O_\mathcal A$ respectively with Choi state $\Upsilon_\mathcal A(\mathbf r)$ on $\mathcal D(\bigotimes_{i\in {I_\mathcal A}} \mathcal H^i \otimes \bigotimes_{o\in {O_\mathcal A}} \mathcal H^o)$, and a quantum state $\rho$ with output qumodes $O_\rho$  in the space $\mathcal D(\bigotimes_{o\in {O_\rho}} \mathcal H^o)$. $\rho$ is to be J-stitched to $\mathcal A$ such that some of the output modes of $\rho$, $O_J \subset O_\rho$, are fed into some of the inputs of $\mathcal A$, $I_J \subset I_{\mathcal A}$, where $O_J = \set{\underline{o}_\ell: (\underline{o}_\ell, \underline{i}_\ell) \in J}$ and $I_J = \set{\underline{i}_\ell: (\underline{o}_\ell, \underline{i}_\ell) \in J}$. 

Then, the J-stitching of $\mathcal A$ and $\rho$ on the set of modes $J$, can be evaluated with

\begin{align}
\rho \star_J \Upsilon_{\mathcal A} &= \lim_{\mathbf r_{J}\to\infty} K(\mathbf r_J)  \operatorname{tr}_J[\rho^{T_J}\Upsilon_\mathcal A(\mathbf r)]
\end{align}

where $K(\mathbf r_J) = \prod_{j \in I_J} \cosh^2(r_j/2)$ is the normalization constant. $\text{tr}_J$ denotes the partial trace over the
pairwise-identified stitched Hilbert spaces \(\mathcal H^{\underline{o}_\ell}\cong\mathcal H^{\underline{i}_\ell}\) for all \((\underline{o}_\ell,\underline{i}_\ell)\in J\). $T_J$ corresponds to the partial transpose of the modes of $\rho$ participating in the stitching.
\label{lemma:2}
\end{lemma}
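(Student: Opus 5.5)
The plan is to reduce Lemma~\ref{lemma:2} to Lemma~\ref{lemma:1}, as suggested by the remark preceding the statement. The idea is to pad both $\rho$ and $\mathcal A$ with identity wires so that the partial stitching becomes an all-outputs-to-all-inputs stitching.

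\textbf{Setting up the padded problem.} Write $\bar I=I_{\mathcal A}\setminus I_J$ for the unstitched inputs of $\mathcal A$ and $\bar O=O_\rho\setminus O_J$ for the unstitched outputs of $\rho$. Define the enlarged state
\[
\rho'=\rho\otimes\bigotimes_{i\in\bar I}\Phi(r_i),
\]
so that, for each $i\in\bar I$, one arm $i'$ of the TMSV pair is a new output wire feeding $i$ and the other arm is kept as the reference. Define the enlarged channel $\mathcal A'=\mathcal I_{\bar O}\otimes\mathcal A$, which acts as the identity on $\bar O$.

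Now stitch $O_J\cup\{i'\}_{i\in\bar I}$ into all of $I_{\mathcal A}$, and stitch $\bar O$ into the identity inputs of $\mathcal A'$. By construction:
\begin{itemize}
\item $\mathcal A'(\rho')$ is exactly the Choi state of the $J$-stitched fragment, with the surviving squeezings $\mathbf r_{\bar J}$ on the references;
\item the Choi state of $\mathcal A'$ factorises as $\Upsilon_{\mathcal A}(\mathbf r)\otimes\bigotimes_{o\in\bar O}\Phi(t_o)$.
\end{itemize}

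\textbf{Applying Lemma~\ref{lemma:1} and collapsing the padding.} Lemma~\ref{lemma:1} then gives
\[
\mathcal A'(\rho')=\lim K\,\operatorname{tr}\bigl[\rho'^{T}\,\Upsilon_{\mathcal A'}\bigr],
\]
with the limit taken over all of $\mathbf r$ and $\mathbf t$. This expression has to be collapsed in three pieces.

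First, the $\mathbf t$-dependent identity pieces contract $\rho^{T_{\bar O}}$ against $\Phi(t_o)$. By the Fock-basis computation in the proof of Lemma~\ref{lemma:1}, $\cosh^2(t/2)\operatorname{tr}[\sigma^T\Phi(t)]\to\sigma$ for any input $\sigma$, so this returns the untransposed $\bar O$ modes.

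Second, for the $\bar I$ pieces, the transposed TMSV arm $i'$ is contracted against input $i$ of $\Upsilon_{\mathcal A}(\mathbf r)$. The $r_i$ used in $\Upsilon_{\mathcal A}$ and the $r_i$ in $\rho'$ must be decoupled here: I would use separate parameters and observe that contracting $\Phi(r_i)^{T_{i'}}$ into a TMSV input with parameter $\tilde r_i\to\infty$ reproduces $\Phi(r_i)$ on the reference. Equivalently, this is Lemma~\ref{lemma:1} applied to the identity channel, or it follows by linearity from the fact that the Choi construction commutes with feeding a TMSV.

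What remains is exactly $\lim_{\mathbf r_J\to\infty}K(\mathbf r_J)\operatorname{tr}_J[\rho^{T_J}\Upsilon_{\mathcal A}(\mathbf r)]$. The normalisation factorises as $K=\prod_{j\in I_J}\cosh^2(r_j/2)\cdot\prod(\text{padding factors})$, and the padding factors are each consumed by their own contractions.

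\textbf{Main obstacle.} The hardest step is justifying the exchange of the trace with the $r\to\infty$ limits, together with the factorisation into independent limits, since the operators involved are unbounded in the limit. I would first prove everything on finite-rank (Fock-truncated) $\rho$, where every contraction is a finite sum of the explicit $\tanh^k(r/2)$ form from Lemma~\ref{lemma:1}, so the limits can be taken termwise and in any order. I would then extend to general $\rho$ by trace-norm density, using linearity and the fact that each map $\sigma\mapsto K\operatorname{tr}[\sigma^T\Upsilon(\mathbf r)]$ is a completely positive contraction-like map. Concretely, it equals a channel composed with the $\tanh$-damping map, whose trace norm is bounded by one, and this gives uniform continuity.
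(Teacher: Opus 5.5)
Your proposal is correct and takes essentially the same route as the paper. The paper also pads $\rho$ with TMSV pairs on the unstitched inputs of $\mathcal A$ and pads $\mathcal A$ with identity wires on the unstitched outputs of $\rho$. It then applies Lemma~\ref{lemma:1} to the resulting all-outputs-to-all-inputs stitching and collapses the two padding contractions, keeping the squeezing parameters decoupled, which the paper does by relabelling $\mathbf s=\mathbf r_{A\setminus J}$.

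Your extra justification for exchanging the $r\to\infty$ limits with the partial traces is something the paper simply asserts, so it strengthens the argument without changing it. One caveat: the padding TMSVs are not finite-rank. For those contractions it is the damping-map form $\sigma\mapsto\tau^{\hat n}\sigma\tau^{\hat n}$, with $\tau=\tanh(r/2)\to1$ converging strongly and uniformly bounded, that does the work, not Fock truncation.
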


\begin{proof}
\begin{figure}
    \centering
    \includegraphics[width=0.6\linewidth]{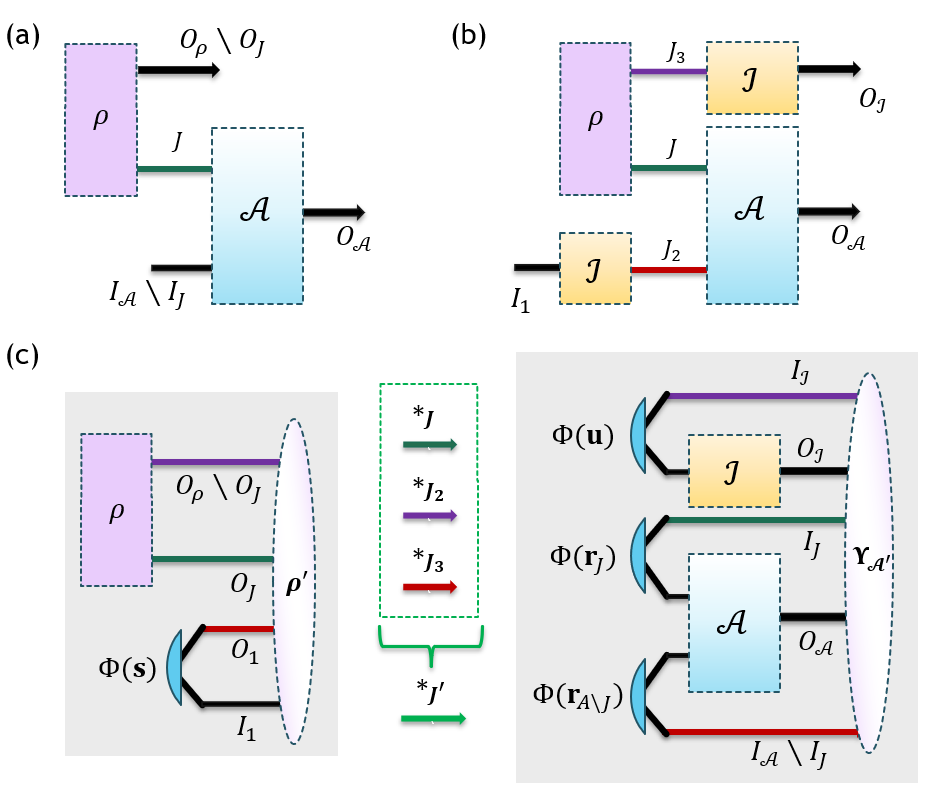}
    \caption{\textbf{Recasting Lemma 2.} $\rho\star_J \Upsilon_{\mathcal A}$, which involves only a subset of modes of $\rho$ stitched to a subset of modes of $\mathcal A$ (a) can be recast as a full all-output to all-input $J'$-stitching of $\rho\otimes \mathcal I$ (whose Choi state is $\rho'$) and $\mathcal A'(\cdot) = \mathcal I \otimes \mathcal A(\cdot)$ (whose Choi state is $\Upsilon_{\mathcal A'}$), such that we can apply Lemma 1 to evaluate this partial stitch (b). In this case, $J'$ constitutes three `smaller' stitches: our original $J$ (green line), $J_2$ (purple line) and $J_3$ (red line). This $J'$-stitching can be explicitly expressed as the link product $\rho'\star_{J'} \Upsilon_{\mathcal A'}$ (c). We note that each horizontal line represents a {\it set} of modes, rather than a single mode.}
    \label{fig:appendix-lemma-2-illus}
\end{figure}

In Fig.~\ref{fig:appendix-lemma-2-illus}, we illustrate how $\rho\star_J \Upsilon_{\mathcal A}$ (which involves the stitching of a subset of output modes of $\rho$ to a subset of input modes of $\Upsilon_\mathcal A$) can be recast to a full all-output to all-input $J'$-stitching of $\rho'\star_{J'} \Upsilon_{\mathcal A'}$, such that we can invoke Lemma 1 to evaluate the {\it original} $J$-stitching. To do so, we introduce two sets of Bell pairs: (1) $\Phi(\mathbf s)$ with input and output modes $I_1, O_1$, which acts of the Choi state to the identity channel in $\rho\otimes \mathcal I$, such that for $\rho\otimes \mathcal I$, its Choi state is $\rho' = \rho \otimes \Phi(\mathbf s)$, and (2) $\Phi(\mathbf u)$ with input and output modes $I_{\mathcal I},O_{\mathcal I}$, which acts as the Choi state of the identity channel $\mathcal I$ such that for $\mathcal A'(\cdot) =\mathcal I \otimes \mathcal A (\cdot)$, its Choi state is given by $\Upsilon_{\mathcal A'} = \Phi(\mathbf u) \otimes \Upsilon_{\mathcal A}$. 

$\rho'$ is thus in the space $\mathcal D(\bigotimes_{o\in {O_\rho}} \mathcal H^o \otimes \bigotimes_{o\in {O_1}} \mathcal H^o \otimes \bigotimes_{i\in {I_1}} \mathcal H^i)$ and $\Upsilon_{\mathcal A'}$ is on the space $\mathcal D(\bigotimes_{i\in {I_{\mathcal I}}} \mathcal H^i \otimes \bigotimes_{o\in {O_{\mathcal I}}} \mathcal H^o \otimes \bigotimes_{i\in {I_\mathcal A}} \mathcal H^i \otimes \bigotimes_{o\in {O_\mathcal A}} \mathcal H^o)$.

To evaluate the $J$-stitch, we consider the new $J'$-stitching between $\rho'$ and $\Upsilon_{\mathcal A'}$, as per Fig.~\ref{fig:appendix-lemma-2-illus}. In particular, for the $J'$-stitching, we have that the output modes to be stitched, $O_{J'} = (O_\rho \cup O_1)$ and the input modes to be stitched, $I_{J'} = (I_\mathcal A \cup I_{\mathcal I})$. We also note that the $J'$-stitching can be `broken down' into three stitches: $J$ (the original $J$-stitching), $J_2$ (stitching between $O_1$ and $(I_{\mathcal A} \setminus I_J)$) and $J_3$ (stitching between $(O_\rho \setminus O_J)$ and $I_\mathcal{I}$).

Additionally, given $\Upsilon_\mathcal A(\mathbf r)$, we can explicitly rewrite $\mathbf r = \mathbf r_J \oplus \mathbf r_{A\setminus J}$ where $\mathbf r_J$ corresponds to the squeezing parameters of the Bell states on $\mathcal D(\bigotimes_{i\in I_J} \mathcal H^{i})$ (involved in the $J$-stitching), and $\mathbf r_{A\setminus J}$ corresponds to the squeezing parameters of the Bell pairs associated with the unstitched input modes of $\mathcal A$, i.e., $I_{\mathcal A} \setminus I_J$. 

Then, invoking Lemma \ref{lemma:1}, with $\text{tr}_{{J'}}$ as the partial trace denotes the over the pairwise-identified stitched Hilbert spaces \(\mathcal H^{\underline{o}_\ell}\cong\mathcal H^{\underline{i}_\ell}\) for all $(\underline{o}_j, \underline{i}_\ell) \in J'$:

\begin{align}
\rho' \star_{J'} \Upsilon_{\mathcal A'} &= \mathcal A' (\rho') \\
 &= \lim_{\mathbf r_{A\setminus J},\mathbf u,\mathbf r_J\to\infty} K(\mathbf r_{A\setminus J}\oplus \mathbf u \oplus \mathbf r_J)\ \text{tr}_{J'}[(\rho\otimes\Phi(\mathbf s))^{T} (\Phi(\mathbf u)\otimes\Upsilon_\mathcal A(\mathbf r_{A\setminus J} \oplus \mathbf r_J))] \\
 &= \lim_{\mathbf r_J \to\infty} K(\mathbf r_J)\ \text{tr}_J[ \lim_{\mathbf u \to\infty} K(\mathbf u)\ \text{tr}_{J_3} [\lim_{\mathbf r_{A\setminus J} \to\infty} K(\mathbf r_{A\setminus J})\ \text{tr}_{J_2}[(\rho^T\otimes\Phi(\mathbf s)^T)(\Phi(\mathbf u)\otimes\Upsilon_\mathcal A(\mathbf r_{A\setminus J}\oplus \mathbf r_J))]]] \\
 &= \lim_{\mathbf r_J \to\infty}K(\mathbf r_J)\ \text{tr}_J[ \lim_{\mathbf u \to\infty} K(\mathbf u)\ \text{tr}_{J_3}[\rho^T(\Phi(\mathbf u)\otimes\Upsilon_\mathcal A(\mathbf s\oplus \mathbf r_J))]] \\
&= \lim_{\mathbf r_J \to\infty} K(\mathbf r_J)\ \text{tr}_J[ \lim_{\mathbf u \to\infty} K(\mathbf u)\ \text{tr}_{J_3}[\rho^{T_{J_3} T_J}(\Phi(\mathbf u)\otimes\Upsilon_\mathcal A(\mathbf s\oplus \mathbf r_J))]] \\
&= \lim_{\mathbf r_J \to\infty} K(\mathbf r_J)\ \text{tr}_J[ \rho^{T_J}\Upsilon_\mathcal A(\mathbf s\oplus \mathbf r_J)] \\
 &= \lim_{\mathbf r_J\to\infty} K(\mathbf r_J)\ \text{tr}_J[\rho^{T_J} \Upsilon_\mathcal A(\mathbf r)],
\end{align}

where $\text{tr}_{{J_i}}$ for $i=2$ and $i=3$ are the partial traces denotes the over the pairwise-identified stitched Hilbert spaces \(\mathcal H^{\underline{o}_\ell}\cong\mathcal H^{\underline{i}_\ell}\) for all $(\underline{o}_j, \underline{i}_\ell) \in J_i$, and $T_{J_3}$ is the partial transpose of $\rho$ on the modes $O_\rho\setminus O_J$. We reiterate that $\mathbf r_J = \set{r_j : j \in I_J}$, and $\mathbf r_J \to\infty$ means $r_j\to\infty$ for all $j\in I_J$.

We elucidate what is happening in this proof: (C11) to (C12) is a direct application of Lemma 1. From (C12) to (C13), we explicitly split the partial trace $\trace_{J'}$ into the partial traces $\trace_{J}, \trace_{J_2}, \trace_{J_3}$, and bring the limits in where possible (due to its independence from the partial trace it is brought into). To reach the final line, we evaluate each subsystem partial trace and limit independently (first $J_2$, then $J_3$) until we are only left with that of the J-stitching. To do so, we observe three things, as follows. 

First, the partial trace and partial transpose operations on $\rho\otimes\Phi(\mathbf s)$ involve the trace operation on the modes $O_J \cup (O_\rho \setminus O_J) \cup O_{1}$.

Second, we can evaluate the partial transpose, partial trace (and take the limit) of the subsets $O_1$ (corresponding to the $J_2$-stitch) and $O_\rho\setminus O_J$ (corresponding to the $J_3$-stitch) separately first, since they are all independent and have no causal relation to each other. In doing so, we leave only the partial trace and partial transpose on the subspace $\bigotimes_{j\in O_J} \mathcal H^j$.

Third, any J-stitching of a Choi state (e.g., $\Upsilon_{\mathcal A}(\mathbf r_{A\setminus J}\oplus \mathbf r_J)$) with Bell states (e.g., $\Phi(\mathbf s)$) will simply return the state, $\Upsilon_\mathcal A$, but with a modified set of squeezing parameters where it was stitched (e.g., here we return $\Upsilon_{\mathcal A}(\mathbf s\oplus\mathbf r_J)$ because $I_\mathcal A\setminus I_J$, with dependence $\mathbf r_{A\setminus J}$ was $J_3$-stitched with $O_1$ with dependence $\mathbf s$, thus $\mathbf s$ will `replace' $\mathbf r_{A\setminus J}$ in the Choi state). Relabel $\mathbf s = \mathbf r_{A\setminus J}$, then $\Upsilon_{\mathcal A}(\mathbf s \oplus \mathbf r_J) = \Upsilon_{\mathcal A}(\mathbf r_{A\setminus J} \oplus \mathbf r_J) = \Upsilon_{\mathcal A}(\mathbf r)$.
\end{proof}

\subsection{Result Proof}

We return to the proof for Result 1. The idea is to first construct the channel $\mathcal C$ as per Fig.~\ref{fig:appendix-result1-a}b, then find its Choi state, $\Upsilon_\mathcal C$ as per Fig.~\ref{fig:appendix-result1-a}c, using the definition of the Choi state. We also know that $\Upsilon_\mathcal A \star_J \Upsilon_\mathcal B = \Upsilon_\mathcal C$ by definition. From there, we make use of Lemma 1 to re-express the form of $\Upsilon_\mathcal C$ to obtain the form of Result 1. 

\begin{proof}
The two CV circuit fragments $\mathcal A$ and $\mathcal B$ have with input qumodes $I_\mathcal A, I_\mathcal B$ respectively and output qumodes $O_\mathcal A, O_\mathcal B$ respectively. Consider the J-stitching of a subset of the inputs of $\mathcal B$ and the subset outputs of $\mathcal A$ to a yield a circuit fragment $\mathcal C$ (see Fig.~\ref{fig:appendix-result1-a}a-b) such that its Choi state is, by Definition~\ref{def:link-prod-CV}, given by $\Upsilon_\mathcal C = \Upsilon_\mathcal A \star_J \Upsilon_\mathcal B$ .

\begin{figure}
    \centering
    \includegraphics[width=0.95\linewidth]{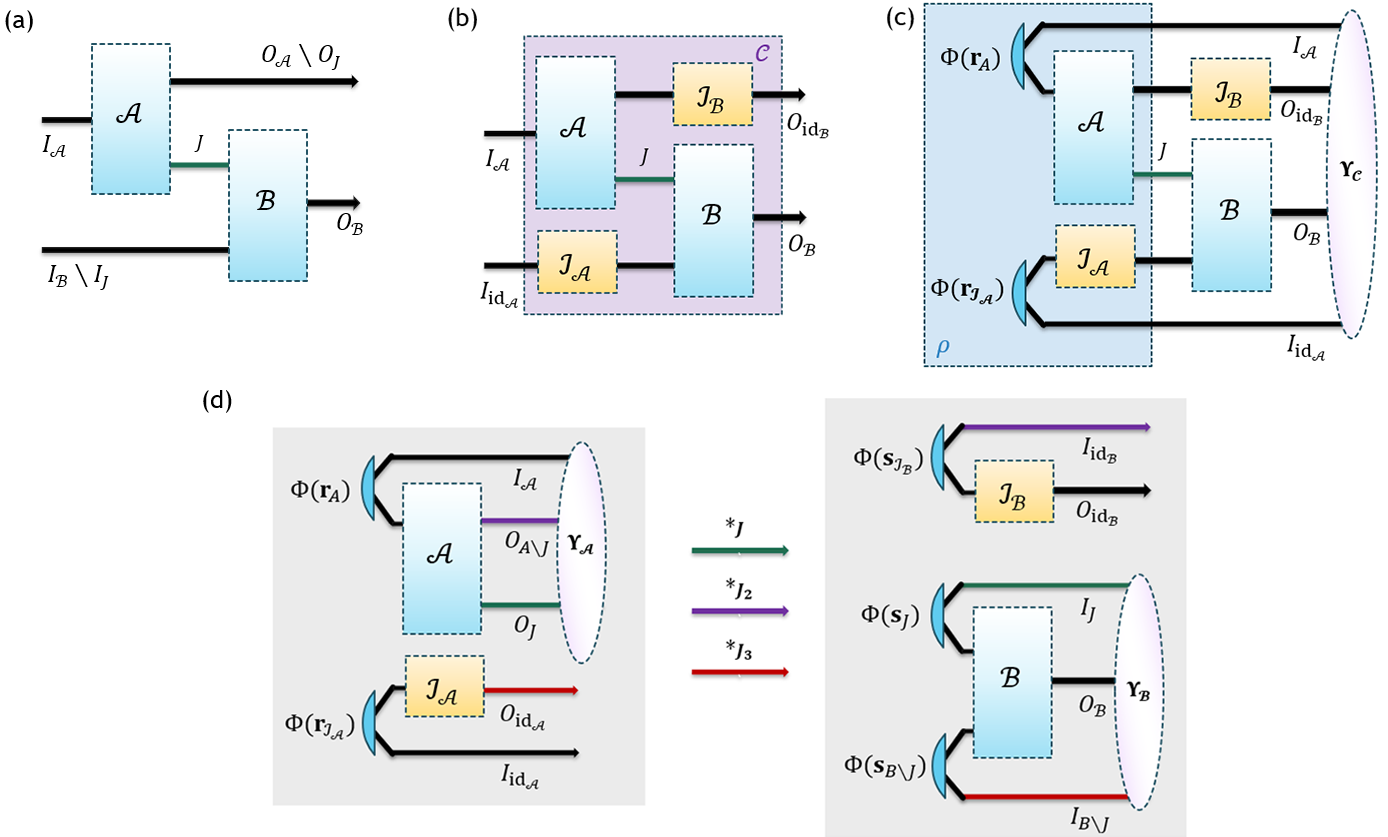}
    \caption{\textbf{Recasting Result 1.} Given two circuit fragments $\mathcal A$ and $\mathcal B$ that are to be $J$-stitched (a), they can be recast into $\mathcal C$ (b). The Choi state of $\mathcal C$ can be found (c) and re-expressed as the state of $\rho$ (denoted in the blue box) as an input to $\mathcal B \otimes \mathcal I_{\mathcal B}$. Equivalently, $\Upsilon_{\mathcal C}$ can be obtained by evaluating the stitchings in (d). We note here that all the lines can denote a collection of modes, rather than a single mode.}
    \label{fig:appendix-result1-a}
\end{figure}

Equivalently for such a J-stitching, we can consider that $\mathcal C = (\mathcal A \otimes \mathcal I_A ) \circ (\mathcal B \otimes \mathcal I_B)$, where $\mathcal I_A$ and $\mathcal I_B$  has the input and output modes $I_{\text{id}_A}, O_{\text{id}_A}$ and $I_{\text{id}_B}, O_{\text{id}_B}$ respectively, as per Fig.~\ref{fig:appendix-result1-a}b. Plainly, $\mathcal I_A$ acts on the input modes of $\mathcal B$ that are not involved in the J-stitching. Similarly, $\mathcal I_B$ acts on the output modes of $\mathcal A$ that are not involved in the J-stitching. 

 In this proof, we will consider the Choi states of each of these circuit fragments (see Fig.~\ref{fig:appendix-result1-a}d).To help clarify notation, we start first with defining and providing the physical understanding of the notation that will be used. Such labels are re-iterated in Figure \ref{fig:appendix-result1-a}d.

For the Choi state of $\mathcal A \otimes \mathcal I_A$:
\begin{itemize}
    \item $O_J$ are the output modes of $\mathcal A$ involved in the J-stitching
    \item $O_{A \setminus J} = \set{o_k : (o_k \in O_\mathcal A)\wedge (o_k \notin O_J)}$ are the output modes of $\mathcal A$ not involved in the J-stitching
    \item $O_{\text{id}_A}$ are the output modes of $\mathcal I_A$ (to be connected to $I_{B\setminus J}$ defined below)
    \item $I_{\text{id}_A}$ are the input modes of $\mathcal I_A$
\end{itemize}

Similarly, for the Choi state of $\mathcal B \otimes \mathcal I_B$:
\begin{itemize}
    \item $I_J$ are the input modes of $\mathcal B$ involved in the J-stitching
    \item $I_{B\setminus J} = \set{i_k : (i_k \in I_\mathcal B)\wedge (i_k \notin I_J)}$ are the input modes of $\mathcal B$ not involved in the J-stitching
    \item $O_{\text{id}_B}$ are the output modes of $\mathcal I_B$
    \item $I_{\text{id}_B}$ are the input modes of $\mathcal I_B$ (to be connected to $O_{A\setminus J}$ defined above)
\end{itemize}

 We note that in such a stitching of $\mathcal A \otimes \mathcal I_{A}$ and $\mathcal B \otimes \mathcal I_B$ as per Fig.~\ref{fig:appendix-result1-a}d, the following spaces become equivalent to each other:
\begin{itemize}
    \item $\mathcal D(\bigotimes_{i\in {I_{\text{id}_B}}} \mathcal H^i)$ is equivalent to $\mathcal D(\bigotimes_{o\in {O_{A \setminus J}}} \mathcal H^o)$ - corresponding to the $J_2$-stitch (purple)
     \item $\mathcal D(\bigotimes_{o\in {O_{\text{id}_A}}} \mathcal H^o)$ is equivalent to $\mathcal D(\bigotimes_{i\in {I_{B\setminus J}}} \mathcal H^i)$ - corresponding to the $J_3$-stitch (red)
    \item $\mathcal D(\bigotimes_{o\in {O_{J}}} \mathcal H^o)$ is equivalent to $\mathcal D(\bigotimes_{i\in {I_J}} \mathcal H^i)$ - corresponding to the original $J$-stitch of $\mathcal A \star_J\mathcal B$ (green)
\end{itemize}

Having cleared up the notation, we proceed with the proof for this result.

Observe that $\mathcal C(\cdot) = \mathcal B \otimes \mathcal I_{B} (\mathcal A \otimes \mathcal I_{A} (\cdot))$. Using Definition \ref{definition: choi-state-cv-fragment}, we find the Choi state of $\mathcal C$ (Fig.~\ref{fig:appendix-result1-a}c):

\begin{align}
\Upsilon_\mathcal C &= \mathcal B \otimes \mathcal I_{B} \otimes \mathcal I_{I_{B'}} \left[\mathcal A \otimes \mathcal I_{A} \otimes \mathcal I_{I_{A'}} [\Phi(\mathbf r_{\mathcal A} \oplus \mathbf r_{\mathcal I_{\mathcal A}})]\right] \\
&= \mathcal B \otimes \mathcal I_{B} \otimes \mathcal I_{I_{B'}} [\mathcal A\otimes \mathcal I_{I_{\mathcal A}}(\Phi(\mathbf r_{\mathcal A})) \otimes (\mathcal I_{A}\otimes \mathcal I_{I_{\text{id}_A}})(\Phi(\mathbf r_{\mathcal I_{\mathcal A}}))] \\
&= \mathcal B \otimes \mathcal I_{B} \otimes \mathcal I_{I_{B'}} [\mathcal A\otimes \mathcal I_{I_{\mathcal A}}(\Phi(\mathbf r_{\mathcal A})) \otimes \Phi(\mathbf r_{\mathcal I_{\mathcal A}})] \\
&= \mathcal B \otimes \mathcal I_{B} \otimes \mathcal I_{I_{B'}} [\Upsilon_{\mathcal A}(\mathbf r_{\mathcal A}) \otimes \Phi(\mathbf r_{\mathcal I_{\mathcal A}})]
\end{align}

where $\mathcal I_{I_{A'}}$ is the identity channel that acts on the modes ${I_\mathcal A \cup I_{\text{id}_A}}$ (i.e., the input arms of the Bell states $\Phi(\mathbf r_{\mathcal A} \oplus \mathbf r_{\mathcal I_{\mathcal A}}) = \Phi(\mathbf r_{\mathcal A}) \otimes \Phi(\mathbf r_{\mathcal I_{\mathcal A}})$) - in Fig.~\ref{fig:appendix-result1-a}d, this identity channel is not drawn for clarity. Similarly, $\mathcal I_{I_{B'}}$ acts on the modes ${I_\mathcal B \cup I_{\text{id}_B}} = {I_{B\setminus J} \cup I_J \cup I_{\text{id}_B}}$.

It is clear that (C21) takes the form of a channel on some state, i.e., $\mathcal T(\rho)$, where here $\mathcal T(\cdot) = \mathcal B \otimes \mathcal I_{B} \otimes \mathcal I_{I_{B'}} (\cdot)$ and $\rho = \Upsilon_{\mathcal A}(\mathbf r_{\mathcal A}) \otimes \Phi(\mathbf r_{\mathcal I_{\mathcal A}})$. From (C21), we invoke Lemma \ref{lemma:1} to yield,

\begin{align}
 \Upsilon_\mathcal C &= \mathcal B \otimes \mathcal I_{B} \otimes \mathcal I_{I_{B'}} [\Upsilon_{\mathcal A}(\mathbf r_{\mathcal A}) \otimes \Phi(\mathbf r_{\mathcal I_{\mathcal A}})] \\
&= \lim_{\mathbf s_{\mathcal B}, s_{\mathcal I_{\mathcal B}} \to\infty} K(\mathbf s_J \oplus \mathbf s_{B\setminus J} \oplus s_{\mathcal I_{\mathcal B}})\ \text{tr}_{J'}[[\Upsilon_{\mathcal B}(\mathbf s_J \oplus \mathbf s_{B\setminus J}) \otimes \Phi(\mathbf s_{\mathcal I_{\mathcal B}})]^T [\Upsilon_{\mathcal A}(\mathbf r_{\mathcal A}) \otimes \Phi(\mathbf r_{\mathcal I_{\mathcal A}})]]
\end{align}

where the Bell states $\Phi(\mathbf s_\mathcal B) = \Phi(\mathbf s_J \oplus \mathbf s_{B\setminus J} ) = \Phi (\mathbf s_J) \otimes \Phi (\mathbf s_{B\setminus J})$ as per Fig.~\ref{fig:appendix-result1-a}d.

From (C22) to (C23), we use the observation that the Choi state of $\mathcal B \otimes \mathcal I_{\mathcal B}$ is $\Upsilon_{\mathcal B \otimes \mathcal I_{\mathcal B}}(\Phi(\mathbf s_\mathcal B \oplus \mathbf s_{\mathcal I_{\mathcal B}})) = \Upsilon_{\mathcal B}(s_\mathcal B) \otimes \Phi(\mathbf s_{\mathcal I_{\mathcal B}})$, since the action of an identity channel on the Bell states $\Phi(\mathbf s_{\mathcal I_{\mathcal B}})$ will simply return the Bell states.

At this stage, the transpose acts on all the modes of $\Upsilon_{\mathcal B}(\mathbf s_{\mathcal B}) \otimes \Phi(\mathbf s_{\mathcal I_{\mathcal B}})$, and not just a transpose on $\mathcal D(\bigotimes_{j\in I_J}\mathcal H^j)$, of which the latter is our desired expression. Therefore wish to resolve the partial trace and partial transpose on the modes that are \emph{not} in $J$ (i.e., modes not involved in the J-stitching) in order to recover the form of our final equation in the result. This is equivalent to evaluating the partial trace and partial transpose of the $J_2$- and $J_3$-stitching first (see Fig.~\ref{fig:appendix-result1-a}d), since they are independent.

To this end, we can make the following observations that will simplify the current expression. Firstly, the $J_2$- and $J_3$-stitching should simply yield the Choi states of $\Upsilon_{\mathcal A}$ and $\Upsilon_{\mathcal B}$ respectively - stitching a channel with the identity channel (whose Choi states are the Bell states) should yield the same channel again. Using Lemma 2, We can therefore express this observation formally in (C24) and (C25):

\begin{align}
\lim_{\mathbf s_{\mathcal I_{\mathcal B}}\to\infty}K(\mathbf s_{\mathcal I_{\mathcal B}})\  \text{tr}_{J_2}[\Phi(\mathbf s_{\mathcal I_{\mathcal B}})^{T_{J_2}}\Upsilon_{\mathcal A}(\mathbf r_{\mathcal A})] =  \lim_{\mathbf s_{\mathcal I_{\mathcal B}} \to\infty}K(\mathbf s_{\mathcal I_{\mathcal B}})\ \text{tr}_{J_2}[\Upsilon_{\mathcal A}(\mathbf r_{\mathcal A})^{T_{J_2}}\Phi(\mathbf s_{\mathcal I_{\mathcal B}})] = \Upsilon_{\mathcal A}(\mathbf r_{\mathcal A}),
\end{align}

where $\Phi(\mathbf s_{\mathcal I_{\mathcal B}})^{T_{J_2}}$ is the Bell states transposed on the modes $I_{\mathcal I_{\mathcal B}}$, while $\Upsilon_{\mathcal A}(\mathbf r_{\mathcal A})^{T_{J_2}}$ is the state $\Upsilon_{\mathcal A}$ partial transposed on the modes $O_{A\setminus J}$. Since they act on the same subspaces, i.e., on the subspaces $\mathcal D(\bigotimes_{i\in {I_{\text{id}_B}}} \mathcal H^i) \cong \mathcal D(\bigotimes_{o\in {O_{A \setminus J}}} \mathcal H^o)$, they are equivalent operations on their respective states. The partial trace $\text{tr}_{J_2}$ also acts on the aforementioned subspace. Similarly, for the $J_3$-stitching,

\begin{align}
\lim_{\mathbf s_{B\setminus J}\to\infty}K(\mathbf s_{B\setminus J})\  \text{tr}_{J_3}[\Phi(\mathbf r_{\mathcal I_{\mathcal A}})^{T_{J_3}}\Upsilon_{\mathcal B}(\mathbf s_{B\setminus J} \oplus \mathbf s_J)] =\lim_{\mathbf s_{B\setminus J}\to\infty} K(\mathbf s_{B\setminus J})\ \text{tr}_{J_3}[\Upsilon_{\mathcal B}(\mathbf s_{B\setminus J} \oplus \mathbf s_J)^{T_{J_3}} \Phi(\mathbf r_{\mathcal I_{\mathcal A}})] = \Upsilon_{\mathcal B}(\mathbf r_{\mathcal I_{\mathcal A}} \oplus \mathbf s_J), 
\end{align}

where $T_{J_3}$ and $\text{tr}_{J_3}$ act on the subspaces $\mathcal D(\bigotimes_{o\in {O_{\text{id}_A}}} \mathcal H^o) \cong \mathcal D(\bigotimes_{i\in {I_{B\setminus J}}} \mathcal H^i)$. Additionally, with the squeezing parameters $\mathbf r_{\mathcal I_{\mathcal A}} = \mathbf s_{B\setminus J}$, we recover our original state of $\Upsilon_B (\mathbf s_{\mathcal B})$. Using these formalized observations, we return to manipulate the expressions in our proof in order to obtain the final form with only the partial trace and transpose of $\text{tr}_J$ and $T_J$:

\begin{align}
 \Upsilon_\mathcal C &= \lim_{\mathbf s_{\mathcal B}, s_{\mathcal I_{\mathcal B}} \to\infty} K(\mathbf s_J \oplus \mathbf s_{B\setminus J} \oplus s_{\mathcal I_{\mathcal B}})\ \text{tr}_{J'}[[\Upsilon_{\mathcal B}(\mathbf s_J \oplus \mathbf s_{B\setminus J}) \otimes \Phi(\mathbf s_{\mathcal I_{\mathcal B}})]^T [\Upsilon_{\mathcal A}(\mathbf r_{\mathcal A}) \otimes \Phi(\mathbf r_{\mathcal I_{\mathcal A}})]] \\
&= \lim_{\mathbf s_{\mathcal B}, s_{\mathcal I_{\mathcal B}} \to\infty} K(\mathbf s_J \oplus \mathbf s_{B\setminus J} \oplus s_{\mathcal I_{\mathcal B}})\ \text{tr}_{J'}[\Upsilon_{\mathcal B}(\mathbf s_J \oplus \mathbf s_{B\setminus J})^{T}\Phi(\mathbf r_{\mathcal I_{\mathcal A}}) \otimes \Phi(\mathbf s_{\mathcal I_{\mathcal B}})^{T} \Upsilon_{\mathcal A}(\mathbf r_{\mathcal A})] \\
&= \lim_{\mathbf s_{\mathcal B}, s_{\mathcal I_{\mathcal B}} \to\infty} K(\mathbf s_J \oplus \mathbf s_{B\setminus J} \oplus s_{\mathcal I_{\mathcal B}})\ \text{tr}_{J'}[[\Upsilon_{\mathcal B}(\mathbf s_J \oplus \mathbf s_{B\setminus J})^{T_{J_3}}\Phi(\mathbf r_{\mathcal I_{\mathcal A}})]^{T_J} \otimes \Phi(\mathbf s_{\mathcal I_{\mathcal B}})^{T_{J_2}} \Upsilon_{\mathcal A}(\mathbf r_{\mathcal A})] \\
&= \lim_{\mathbf s_J\to\infty} K(\mathbf s_J)\ \text{tr}_{J}[\lim_{\mathbf s_{B\setminus J}\to\infty} K(\mathbf s_{B\setminus J})\ \text{tr}_{J_3}[\Upsilon_{\mathcal B}(\mathbf s_J \oplus \mathbf s_{B\setminus J})^{T_{J_3}}\Phi(\mathbf r_{\mathcal I_\mathcal A})]^{T_J} \otimes \lim_{\mathbf s_{\mathcal I_{\mathcal B}}\to\infty} K(\mathbf s_{\mathcal I_{\mathcal B}})\ \text{tr}_{J_2}[\Phi(\mathbf s_{\mathcal I_{\mathcal B}})^{T_{J_2}} \Upsilon_{\mathcal A}(\mathbf r_{\mathcal A})]] \\
&= \lim_{\mathbf s_J\to\infty} K(\mathbf s_J)\ \text{tr}_{J}[\Upsilon_{\mathcal B}(\mathbf r_{\mathcal I_{\mathcal A}} \oplus \mathbf s_{J})^{T_J} \Upsilon_\mathcal A(\mathbf r_{\mathcal A})] \\
&= \lim_{\mathbf s_J\to\infty} K(\mathbf s_J)\ \text{tr}_{J}[\Upsilon_{\mathcal B}(\mathbf s)^{T_J} \Upsilon_\mathcal A(\mathbf r)] = \lim_{\mathbf s_J\to\infty} K(\mathbf s_J)\ \text{tr}_{J}[\Upsilon_\mathcal A(\mathbf r)^{T_J} \Upsilon_{\mathcal B}(\mathbf s)]
\end{align}

where $T_J$ is the partial transpose on $\bigotimes_{i\in {I_J}} \mathcal H^i$ (which is the space equivalent to $\bigotimes_{o\in {O_{J}}} \mathcal H^o$ in the stitching). From (C29) to (C30), we used our formalized observations using Lemma 2 to simplify. From (C30) to (C31), we relabel the squeezing parameters, dropping the subscripts for brevity. In (C31), both orders of expression are equivalent since $\Upsilon_{\mathcal B}\star_J \Upsilon_{\mathcal A} = \Upsilon_{\mathcal A}\star_J \Upsilon_{\mathcal B}$ and $T_J$ acts on the same subspaces.

\end{proof}

\section{Gaussian Link Product Integral}
\label{proof: gaussian-link-prod-int}

Recall that a Gaussian state $\rho_G$ can be expressed in terms of its characteristic function, as in (\ref{eqn: characteristic-func-gaussian-form}):

\begin{align}
     \chi_{\rho_G}(\vec\alpha) = e^{-\frac{1}{4}\vec\alpha^T\Omega^T \Gamma \Omega\vec\alpha}e^{i\vec \alpha^T\Omega^T \vec{d}}
 \end{align}
 where $\vec{d} = (d_1,d_2,\ldots)^T$ is the vector of quadrature means and $\Gamma$ (with matrix elements $\Gamma_{ij}$) is the associated covariance matrix~\cite{adesso2014continuous}.

A Gaussian state $\rho_G$ can also be expressed as a Gaussian integral of its characteristic function over its $n$ qumodes,~\cite{serafini2023quantum}

\begin{align}
    \rho_G 
    &= \frac{1}{(2\pi)^{n}} \int d\vec\alpha \,e^{-\frac{1}{4}\vec\alpha^T\Omega^T \Gamma \Omega\vec\alpha}e^{i\vec \alpha^T\Omega^T \vec{d}} \hat D(\vec\alpha) \\
    &= \frac{1}{(2\pi)^n}
    \label{eqn: gaussian-state-characteristic-func} \int d\vec\alpha \,\chi_{\rho_G}(\vec\alpha) \hat D(\vec\alpha)
\end{align}

where $\hat D(\vec \alpha) = \exp(i \hat{R} \Omega\vec\alpha)$ are the displacement operators, where $\hat R = (\hat X_1,\hat P_1,\dots, \hat X_n,\hat P_n)$ and $\vec\alpha = (x_1,p_1,\dots x_n,p_n)^T \in \mathbb R^{2n}$.
 
Therefore, for a circuit fragment $\mathcal A$ with $n_A$-input and $k_A$-output modes, its $(n_A+k_A)$-mode Choi state $\Upsilon_{\mathcal A}$, has a characteristic function, $\chi_{\Upsilon_{\mathcal A}}(\vec\alpha_A)$, where we have denoted specifically a subscript of $A$ on $\vec\alpha_A$ to identify the modes corresponding to $\mathcal A$. Similarly, the $n_B$-input and $k_B$-output fragment $\mathcal B$ has the $(n_B+k_B)$-mode Choi state $\Upsilon_{\mathcal B}$ with a characteristic function given by $\chi_{\Upsilon_{\mathcal B}}(\vec\alpha_B)$.
 
 Before moving further, we first elaborate further on some notation that will be used. For the purpose of evaluating the integrals in this and the following subsections, we can explicitly re-express $\vec\alpha_A = \vec\alpha_{A\setminus J} \oplus \vec\alpha_{J_A}$ and $\vec\alpha_B = \vec\alpha_{B\setminus J} \oplus \vec\alpha_{J_B}$. The quadratures $\vec\alpha_{A\setminus J}$ and $\vec\alpha_{B\setminus J}$ are associated with the set of unconnected modes $(I_\mathcal A \cup O_\mathcal A) \setminus J_\mathcal{A}$ of $\Upsilon_{\mathcal A}$ and $(I_\mathcal B \cup O_\mathcal B) \setminus J_\mathcal{B}$ of $\Upsilon_{\mathcal B}$ respectively. Meanwhile, $\vec\alpha_{J_A}$ and $\vec\alpha_{J_B}$ is associated with the set of connected modes ($J_\mathcal{A}$ and $J_\mathcal{B}$) of $\Upsilon_{\mathcal A}$ and $\Upsilon_{\mathcal B}$ respectively.
 
Then, using (\ref{eqn: gaussian-state-characteristic-func}) we have

\begin{align}
    \Upsilon_{\mathcal A}^{T_{J_{\mathcal A}}} &= \frac{1}{(2\pi)^{n_A+k_A}} \int d\vec\alpha \, \chi_{\Upsilon_{\mathcal A}^{T_{J_{\mathcal A}}}}(\vec\alpha_{A\setminus J} \oplus \vec\alpha_{J_A}) \hat D(\vec\alpha_{A\setminus J})\otimes \hat D(\vec\alpha_{J_A}) \\
    &= \frac{1}{(2\pi)^{n_A+k_A}} \int d\vec\alpha \, \chi_{\Upsilon_{\mathcal A}}(\vec\alpha_{A\setminus J}\oplus-\Lambda\vec\alpha_{J_A}) \hat D(\vec\alpha_{A\setminus J})\otimes \hat D(\vec\alpha_{J_A}).
\end{align}

In the first line, we see that $\Upsilon_{\mathcal A}^{T_{J_{\mathcal A}}}$ experiences a partial transpose on the modes $J_{\mathcal A}$, and we use the property of partial transpose of CV states (Appendix \ref{appendix:partial-transpose}) to re-express it as the characteristic function of the untransposed $\Upsilon_{\mathcal A}$ in the second line.

Additionally,

\begin{align}
    \Upsilon_{\mathcal B} &= \frac{1}{(2\pi)^{n_B + k_B}} \int d\vec\alpha \, \chi_{\Upsilon_{\mathcal B}}(\vec\alpha_{B\setminus J}\oplus\vec\alpha_{J_B}) \hat D(\vec\alpha_{B\setminus J})\otimes \hat D(\vec\alpha_{J_B}). 
\end{align}

Then inserting these into Result~\ref{jstitch}, one obtains 

\begin{align}
    &\Upsilon_{\mathcal A}(\mathbf r) \star_J \Upsilon_{\mathcal B} (\mathbf s) \propto \lim_{\mathbf s_J \to\infty} \trace_J [\Upsilon_{\mathcal A}^{T_{J_\mathcal A}} \Upsilon_{\mathcal B}] \\
    &\propto \lim_{\mathbf s_J \to\infty} \int d\vec{\alpha} \, \chi_{\Upsilon_{\mathcal A}}(\vec\alpha_{A \setminus J}\oplus-\Lambda\vec\alpha_{J_A}) \chi_{\Upsilon_{\mathcal B}}(\vec\alpha_{B \setminus J}\oplus \vec\alpha_{J_B}) \,\, \hat D(\vec\alpha_{A \setminus J}) \otimes \text{tr}_J [ \hat D(\vec\alpha_{J_A}) \hat D(\vec\alpha_{J_B})]\otimes \hat D(\vec\alpha_{B \setminus J}) \\
    &\propto \lim_{s_J \to\infty} \iiint d\vec\alpha_{A\setminus J} \, d\vec\alpha_{B\setminus J} \, d\vec\alpha_{J_B} \ \chi_{\Upsilon_{\mathcal A}}(\vec\alpha_{A \setminus J}\oplus\Lambda\vec\alpha_{J_B}) \chi_{\Upsilon_{\mathcal B}}(\vec\alpha_{B \setminus J}\oplus \vec\alpha_{J_B}) \,\, \hat D({\vec\alpha_{A \setminus J}}) \otimes \hat D({\vec\alpha_{B \setminus J}}) \\
    &\propto \lim_{s_J \to\infty} \int d\vec\alpha_{J_B} \iint d\vec\alpha_{A\setminus J} \, d\vec\alpha_{B\setminus J} \ \chi_{\Upsilon_{\mathcal A}}(\vec\alpha_{A \setminus J}\oplus\Lambda\vec\alpha_{J_B}) \chi_{\Upsilon_{\mathcal B}}(\vec\alpha_{B \setminus J}\oplus \vec\alpha_{J_B}) \,\, \hat D({\vec\alpha_{A \setminus J}}) \otimes \hat D({\vec\alpha_{B \setminus J}}),
    \label{eqn: gaussian-linkprod-integral-eval}
\end{align}

where from the (D7) to (D8), we use the property that $\trace_J{[\hat D(\vec\alpha_{J_A})\hat D(\vec\alpha_{J_B})]} \propto \delta(\vec\alpha_{J_A} + \vec\alpha_{J_B})$ to replace $\vec\alpha_{J_A} = -\vec\alpha_{J_B}$~\cite{serafini2023quantum}.

In (D10), what results is a multimode (multivariate) Gaussian integral on $\vec\alpha_{J_B}$ - the integral has been separated out in (\ref{eqn: gaussian-linkprod-integral-eval}) to denote this. By evaluating this integral on $\vec\alpha_{J_B}$, one obtains the Choi state resulting from the $J$-stitching - therefore allowing the Gaussian link product to be evaluated by evaluating the Gaussian integral in (\ref{eqn: gaussian-linkprod-integral-eval}), and the normalization constant can be found by taking the trace of the output state. Equivalently, the first and second moments of the output state can be directly read off from the form of the characteristic function after evaluating the aforementioned Gaussian integral.

For brevity, we may also choose to drop the subscript of $B$ in (D10) such that $\vec\alpha_{J_B} = \vec\alpha_J$.

\section{Full Algorithm for Gaussian Link Product }
\label{appendix: gaussianlinkprod-vo-means}

In the main text, we presented an algorithm to evaluate the covariance matrices of the link products. Here, we provide a complementary addendum for that algorithm to additionally evaluate the vector of means of the link products.

The vector of means can be evaluated concurrently in the overall algorithm, together with the covariance matrix. To do so, we introduce two additional subroutines: \texttt{PrepVoM}, prepares the vector of means for all $J$-stitchings at once, and \texttt{VConnect}, which returns the output vector of means after completing a single mode-stitch. 

The subroutine \texttt{VConnect} requires the covariance matrices $\gamma_{y_\ell}$ and $\gamma_{{\bar X},y_\ell}$ from the covariance matrix stitch (evaluated within \texttt{MConnect}). Therefore, we can call a modified subroutine, \texttt{MConnectFull}, which also returns $\gamma_{{\bar X},y_\ell}$ and $\gamma_{y_\ell}$ in addition to $\gamma_\text{tot}$.

Consider two processes, $\mathcal A, \mathcal B$ with Choi states $\Upsilon_\mathcal A, \Upsilon_\mathcal B$ respectively that we wish to stitch on modes $J$. Let $\vec d_{\mathcal A}, \vec d_{\mathcal B}$ be the vector of means for $\Upsilon_\mathcal A$ and $\Upsilon_\mathcal B$, respectively, and $\Gamma_{\mathcal A}, \Gamma_{\mathcal B}$ be the covariance matrices for $\Upsilon_\mathcal A$ and $\Upsilon_\mathcal B$, respectively. Then, the modified algorithm that will evaluate {\it both} the vector of means and the covariance matrix of the output state from the link product between $\Upsilon_\mathcal A$ and $\Upsilon_\mathcal B$ will include the three new subroutines:

\begin{algorithm}[H]
\SetKwFunction{multiLinkProd}{multimodeLinkProductFull}
\SetKwFunction{combMatrix}{\pcm}
\SetKwFunction{singleLinkProd}{MConnectFull}
\SetKwFunction{combVoM}{PrepVoM}
\SetKwFunction{singleLinkProdVM}{VConnect}
\SetKwProg{myalgo}{Algorithm}{}{}
\myalgo{\multiLinkProd{$\Gamma_{\mathcal A}, \Gamma_{\mathcal B},\vec d_{\mathcal A}, \vec d_{\mathcal B},J$}}{
  \nl $\Gamma =\Gamma_{\mathcal A}{\oplus}\Gamma_{\mathcal B}$ \;
  \nl $\vec d =\vec d_{\mathcal A}{\oplus}\vec d_{\mathcal B}$ \;
  \nl $\gamma_\text{tot} \gets$ \combMatrix{$\Gamma$, $J$} \;
  \nl $\vec d_\text{tot} \gets$ \combVoM{$\vec d$, $J$} \;
  \For{$y_\ell \in J$}{
    $\gamma_\text{tot}, \gamma_{{\bar X},y_\ell}, \gamma_{y_\ell} \gets$ \singleLinkProd{$\gamma_\text{tot}$, $y_\ell$} \;
    $\vec d_\text{tot} \gets$ \singleLinkProdVM{$\vec d_\text{tot}$, $\gamma_{{\bar X},y_\ell}$, $\gamma_{y_\ell}$, $y_\ell$} \;
    }
  \nl \KwRet $\gamma_\text{tot}, \vec d_\text{tot}$\;
}
\end{algorithm}

\textbf{Vector of means preparation.} \texttt{PrepVoM}, works similarly to \texttt{PrepCM}, and converts each physical output-input pair into a corresponding stitched variable. 

We begin with the combined vector of means of the two processes, $\vec d = \vec d_{\mathcal A} \oplus \vec d_{\mathcal B}$. The subroutine \texttt{PrepVoM} first reorders the modes of $\vec d$ in the exact same order as was done in \texttt{PrepCM}: all modes not eliminated by the link product appear first, followed by the $m$ modes of $J_\mathcal{A}$ (i.e., both input and output modes of $\mathcal A$ participating in the join), then the $m$ modes of $J_\mathcal{B}$, both in the reverse order of the pairs in $J$.

Then, $\vec d$ has the form, $\vec d = (\vec d_X,\ \vec d_C)^T$, where $\vec d_C$ is a $4m \times 1$ column vector of means of the $2m$ output-input qumodes involved in the stitch, and $\vec d_X$ is the vector of means of the unstitched modes. In this manner, just like in the main text, $X$ denote all modes not eliminated, and $C$ denotes the collection of mode pairs to be stitched. Then, the subroutine \texttt{PrepVoM} returns

\begin{align}
    \texttt{PrepVoM}(\vec d, J) &= \begin{pmatrix}
        \vec d_X \\
        Q_J^T\vec d_C
    \end{pmatrix}
\end{align}

where $Q_J = (-\Lambda_m, I_{2m})^T$, $I_{2m}$ is the identity matrix of size $2m\times 2m$, and $\Lambda_m = \bigoplus_{\ell=1}^m \mathrm{diag}(1,-1)$. Here, $Q_J^T$ has dimensions $2m\times4m$, so the resulting block $Q_J^T\vec d_C$ has dimensions $2m\times1$ and describes the $m$ effective stitched qumodes that are subsequently eliminated later.

Explicitly, consider that for our two processes $\mathcal A$ and $\mathcal B$, the vector of means are written in the form

\begin{align}
    \vec d_{\mathcal A} = \begin{pmatrix}
        \vec d_{A/J} \\
        \vec d_{J_A}
    \end{pmatrix}, \
    \vec d_{\mathcal B} = \begin{pmatrix}
        \vec d_{B/J} \\
        \vec d_{J_B}
    \end{pmatrix}, 
\end{align}

such that $\vec d_{J_A}$ denotes the vector of means of the modes of $\mathcal A$ being stitched, and $\vec d_{J_B}$ denotes the vector of means of the corresponding modes of $\mathcal B$, both ordered according to the elements of $J$ in reverse order.

Then \texttt{PrepVoM} returns

\begin{align}
    \texttt{PrepVoM}(\vec d, J) &= \begin{pmatrix}
        \vec d_{A/J} \\ \vec d_{B/J} \\ -\Lambda_{m} \vec d_{J_A} + \vec d_{J_B}
    \end{pmatrix}.
\end{align}

\textbf{Completing the stitch.} \texttt{MConnectFull} will evaluate the matrices $\gamma_\text{tot}, \gamma_{{\bar X},y_\ell}$ and $\gamma_{y_\ell}$ in the exact same manner as \texttt{\smc}. The only difference, as mentioned, is that \texttt{MConnectFull} will also return $\gamma_{{\bar X},y_\ell}$ and $\gamma_{y_\ell}$, in addition to $\gamma_\text{tot}$.

In a similar fashion to \texttt{MConnect}, \texttt{VConnect} completes the stitch for each $y_\ell$ (and evaluates the vector of means). It takes a vector $\vec d_\text{tot}$ and $y_\ell\in J$ that we wish to stitch. Because we have similarly listed our $y_\ell$ in a reverse order when re-ordering $\vec d$, then

\begin{align}
\vec d_\text{tot} = (\vec d_{\bar X}, \ \vec d_{y_\ell})^T,
\end{align}

where $\vec d_{y_\ell}$ is the $2\times 1$ column vector of means associated with the effective stitched variable to be eliminated, and $\vec d_{\bar X}$ corresponds to the vector of means of all remaining modes, including any other stitched variables in $J$ that have yet to be eliminated. 

The subroutine will return the vector of means after the $y$-stitch:

\begin{align}
    \texttt{VConnect}(\vec d_\text{tot}, y_\ell) &= \lim_{r_{y_\ell} \to \infty} [\vec d_{\bar X}- \gamma_{{\bar X},y_\ell} \gamma_{y_\ell}^{-1} \vec d_{y_\ell} ]
\end{align}

In the same manner as the covariance matrix, taking the recursion for all $y_\ell \in J$ will evaluate the complete $J$-stitching in terms of the vector of means. Altogether, this modified version of the algorithm will return both of the first two moments, $\vec d_\text{tot}$ and $\gamma_\text{tot}$, of the resulting Choi state from the Gaussian Link product.

\section{Complexity of Computing Gaussian Link Products}
\label{subsec: complexity-analysis}
The algorithm \texttt{multimodeLinkProduct} returns a $O(\bar{m}-2m)\times O(\bar{m}-2m)$ covariance matrix and scales as $O(m\bar{m}^2)$, where $\bar{m}$ is the total number of modes of qumodes describing the circuit fragments to be $J$-stitched, and $m=|J|$ is the total number of mode-stitchings. 

In an example where a process with only one system ($S$) input mode per timestep, such as in Fig. \ref{fig:choi-construct}, and therefore only one input TMSV per timestep, this return a $2n\times2n$ matrix.

The algorithm first calls \texttt{\pcm}, which has complexity $O(\bar{m}^2)$. This subroutine consists of reordering and copying covariance-matrix entries, together with the multiplications $\gamma_{X,C}Q_J$ and $Q_J^T\gamma_CQ_J$. Since $Q_J$ is block diagonal with fixed-size blocks, $\gamma_{X,C}Q_J$ scales as $O(\bar{m}m)$ and $Q_J^T\gamma_CQ_J$ scales as $O(m^2)$. Both are bounded by $O(\bar{m}^2)$ for $m\leq O(\bar{m})$, and hence \texttt{\pcm} scales as $O(\bar{m}^2)$.

The algorithm then loops over $m$ effective stitched variables and applies \texttt{\smc} once per stitching. Each call to \texttt{\smc} scales as $O(\bar{m}^2)$. This is because the subroutine involves: (1) copying or reordering blocks of $\gamma_\text{tot}$ to obtain $\gamma_X,\gamma_{X,y},\gamma_y$, (2) inverting the $2\times2$ matrix $\gamma_y$, (3) multiplying $\gamma_{X,y}\gamma_y^{-1}\gamma_{X,y}^T$, and (4) adding the resulting matrices to return $\gamma_\text{res}$. Matrix addition and copy operations scale as $O(\bar{m}^2)$. The matrix inversion scales as $O(1)$ since $\gamma_y$ is always a $2\times2$ matrix. The matrix multiplication scales as $O(\bar{m}^2)$ because it involves the multiplication of an $O(\bar{m})\times2$ matrix with a $2\times O(\bar{m})$ matrix.

Taking into account the $m$ iterations, the total complexity is therefore
\[
    O(\bar{m}^2)+O(m\bar{m}^2)=O(m\bar{m}^2),
\]
for nonempty $J$. Thus, even for bidirectional stitchings, the algorithm remains polynomial in the number of timesteps and retains the exponential improvement over operating directly via full Choi-state representations.

The complexity still remains the same even when including the evaluation of the vector of means, since it only requires the copying, reordering, and multiplication of matrices of smaller dimensions (column vectors).

\section{Input to a Channel}

Consider Gaussian channel $\mathcal G$ with input mode $i$ and output mode $o$ that acts on a Gaussian state covariance matrix $\Gamma_\text{s}$. $\mathcal G$ will have a $4 \times 4$ Choi-covariance matrix of form
\begin{align}
\Gamma_{\mathcal G}(r) = \left( \begin{array}{c|c}
       \gamma_{i} & \gamma_{i,o} \\
       \hline
       \gamma_{i,o}^T & \gamma_{o}
    \end{array} \right),
\end{align}
Our goal is to find the covariance $\gamma_{\mathrm{res}}$ of the resulting state on input of state with covariance $\Gamma_s$.  \texttt{\pcm} then gives
\begin{align}
\gamma_\text{tot} = \left( \begin{array}{c|c}
        \gamma_{o} & \gamma_{i,o}^T \\
       \hline
        \gamma_{i,o} &  \Lambda \Gamma_\text{s}\Lambda + \gamma_{i}
    \end{array} \right),
\end{align}
where each sub-block is a $2\times2$ matrix, such that $\Lambda =\sigma_z$. Applying \texttt{\smc} for one stitch, we find that we have the output covariance matrix
\begin{align}
\Gamma_\text{out} = \lim_{r\rightarrow \infty} \left[\gamma_o - \gamma_{i,o}	^T(\Lambda \Gamma_\text{s}\Lambda+\gamma_i)^{-1} \gamma_{i,o}\right]
\end{align}
with $r$ is the squeezing of mode $i$, in agreement with previous  literature~\cite{giedke2002characterization,fiurasek_gaussian_2002}. 

\section{Example - Input vacuum state into squeezing channel}
\label{appendix: state-channel-join}

We can verify that the result from the previous section for $\gamma_\text{res}$ indeed recovers the action of $\mathcal G$ on $\rho_\text{in}$ with a specific example. Let $\rho_\text{in}$ be a vacuum state with covariance matrix $\Gamma_\text{in} = I_2$, and $\mathcal G$ the general squeezing channel with squeezing parameter $s$. The state will be squeezed in $\hat X(\theta/2)$; when $\theta = 0$, this corresponds to a squeezing in the $X$-quadrature, and anti-squeezing in the $P$-quadrature. In the main text, we have chose $\theta=\pi$ for the $P$-squeezer.

Then, $\mathcal G$'s Choi states have a covariance 

\begin{align}
\Gamma_{\mathcal G}(r') &= \left( \begin{array}{c|c}
       \cosh(r')F_{s,\theta}^2 & \sinh(r') F_{s,\theta}\Lambda \\
       \hline
       \sinh(r') \Lambda  F_{s,\theta} & \cosh(r') I_2
    \end{array} \right), \\
    &= \left( \begin{array}{c|c}
       \gamma_{o} & \gamma_{i,o}^T \\
       \hline
       \gamma_{i,o} & \gamma_{i}
    \end{array} \right),
\end{align}

where $F_{s,\theta}=\cosh(s)I_2 - \sinh(s) S_\theta, \Lambda=\sigma_z$ and

\begin{align}
S_\theta = \begin{pmatrix} \cos\theta & \sin\theta \\ \sin\theta & -\cos\theta
\end{pmatrix}.
\end{align}

Then, using the link product algorithm (i.e., by substituting the corresponding block matrices from $\Gamma_\mathcal G(r')$ into the result from the previous section,

\begin{align}
\Gamma_\text{out} = \gamma_o - \gamma_{i,o}	^T(\Lambda \Gamma_\text{in}\Lambda+\gamma_i)^{-1} \gamma_{i,o},
\end{align}

and evaluating the output) we will obtain

\begin{align}
\Gamma_\text{out} &= \cosh r' F_{s,\theta}^2 \nonumber \\
&- \sinh r' F_{s,\theta}\Lambda(\cosh r' I_2 + \Lambda\Gamma_\text{in}\Lambda)^{-1} \sinh  r' \Lambda F_{s,\theta} \\
\lim_{r'\to\infty} \Gamma_\text{out} &\approx \lim_{r'\to\infty} \big(\cosh r' F_{s,\theta}^2 - \sinh^2 r' F_{s,\theta}^2 \left[\frac{1}{\cosh r'}-\frac{1}{\cosh^2 r'} \right]\big) \\
&= \lim_{r'\to\infty} \tanh^2 r' F_{s,\theta}^2 = F_{s,\theta}^2
\end{align}

and notably, $F_{s,\theta}^2$ is precisely the covariance matrix of a squeezed single mode vacuum state. Thus, it can be verified that we recover the covariance matrix of the vacuum state through a squeezing channel.

\section{Example - Concatenate 2 single-mode channels (general Gaussian channels)}
\label{appendix: general-chanchan-join}

Here we elaborate further on evaluating a general version of the channel-channel join example referenced in text. Consider two Gaussian channels, $\mathcal A, \mathcal B$ with Choi states $\Upsilon_{\mathcal A}$ (with input and output modes ${i_1^{\mathcal A},o_1^{\mathcal A}}$ respectively) and $\Upsilon_{\mathcal B}$ (with input and output modes ${i_1^{\mathcal B},o_1^{\mathcal B}}$ respectively) as in Fig. \ref{fig:link-prod-gen2}. In the figure, we have dropped the subscript `1' from the labels for visual clarity. Their covariance matrices are given by the following block matrices,

\begin{align}
    \Gamma_{\mathcal A} = \begin{pmatrix}
        \gamma_{i_1^{\mathcal A}} & \gamma_{i_1^{\mathcal A},o_1^{\mathcal A}} \\
        \gamma_{i_1^{\mathcal A},o_1^{\mathcal A}}^T & \gamma_{o_1^{\mathcal A}}
    \end{pmatrix}, \,
    \Gamma_{\mathcal B} = \begin{pmatrix}
        \gamma_{i_1^{\mathcal B}} & \gamma_{i_1^{\mathcal B},o_1^{\mathcal B}} \\
        \gamma_{i_1^{\mathcal B},o_1^{\mathcal B}}^T & \gamma_{o_1^{\mathcal B}}
    \end{pmatrix},
    \label{eqn:cov-matrix-channels}
\end{align}

with submatrices of dimension $2\times 2$. Suppose we wish to connect $o_1^{\mathcal A}$ of $\mathcal A$ to $i_1^{\mathcal B}$ of $\mathcal B$ via the link product using the algorithm. We first carry out \texttt{\pcm} subroutine with $\Gamma_{\mathcal A}, \Gamma_{\mathcal B}$, and obtain

\begin{align}
    \gamma_\text{tot} = \texttt{PrepCM}(\Gamma_{\mathcal A}\oplus\Gamma_{\mathcal{B}}, J) &=  \left( \begin{array}{cc|c}
        \gamma_{i_1^{\mathcal A}} & \mathbf 0 & -\gamma_{i_1^{\mathcal A},o_1^{\mathcal A}}\Lambda \\
        \mathbf 0 & \gamma_{o_1^{\mathcal B}} & \gamma_{i_1^{\mathcal B},o_1^{\mathcal B}}^T \\
        \hline
        -\Lambda\gamma_{i_1^{\mathcal A},o_1^{\mathcal A}}^T & \gamma_{i_1^{\mathcal B},o_1^{\mathcal B}} & \Lambda\gamma_{o_1^{\mathcal A}}\Lambda + \gamma_{i_1^{\mathcal B}}
    \end{array} \right) \\
    &= \left( \begin{array}{c|c}
        \gamma_{X} & \gamma_{X,y} \\
        \hline
        \gamma_{X,y}^T & \gamma_{y}
    \end{array} \right).
\end{align}

where $\Lambda = \sigma_z$. Then, the output Choi covariance from the link product algorithm is given by

\begin{align}
    \Gamma_\text{out} &= (\gamma_{i_1^{\mathcal A}} \oplus \gamma_{o_1^{\mathcal B}}) \nonumber \\&- \begin{pmatrix}-\gamma_{i_1^{\mathcal A},o_1^{\mathcal A}}\Lambda \\ \gamma_{i_1^{\mathcal B},o_1^{\mathcal B}}^T \end{pmatrix} (\Lambda \gamma_{o_1^{\mathcal A}}\Lambda+\gamma_{i_1^{\mathcal B}})^{-1} \begin{pmatrix}-\Lambda\gamma_{i_1^{\mathcal A},o_1^{\mathcal A}}^T & \gamma_{i_1^{\mathcal B},o_1^{\mathcal B}} \end{pmatrix}
    \label{eqn: chanchan-gen-result}
\end{align}

\subsection{Correspondence with Gaussian Link Product Integral}

Going a step further, we can show that the result obtained by using the link product algorithm yields exactly the same result as if we were to evaluate the integral in (\ref{eqn: gaussian-linkprod-integral-eval}), and that this accurately provides an expression for the resulting combination of two channels. Below, we proceed with evaluating the integrals.

Since the integral involves both the covariance matrix and vector of means, to verify and compare, we also evaluate the vector of means in this example. Consider $\Upsilon_{\mathcal A}, \Upsilon_{\mathcal B}$ with vector of means

\begin{align}
    \vec d_{\mathcal A} &= \begin{pmatrix}\vec d_{A \setminus J} \\ \vec d_{J_A}\end{pmatrix} = \begin{pmatrix}\vec d_{i_1^{\mathcal A}} \\ \vec d_{o_1^{\mathcal A}}\end{pmatrix} \\
    \vec d_{\mathcal B} &= \begin{pmatrix}\vec d_{B \setminus J'} \\ \vec d_{J_\mathcal{B}}\end{pmatrix} = \begin{pmatrix}\vec d_{o_1^{\mathcal B}} \\ \vec d_{i_1^{\mathcal B}} \end{pmatrix}
    \label{eqn:channels-vectorofmeans}
\end{align}

Using $\Gamma_{\mathcal A}$ from (\ref{eqn:cov-matrix-channels}) and the result of partial transposing a CV state, (\ref{eqn: partial-transpose-cm}), the covariance matrix of $\Upsilon_{\mathcal A}^{T_{J_{\mathcal A}}}$ is given by

\begin{align}
    \Gamma_{\mathcal A^{T_J}} = 
    \begin{pmatrix}
        \gamma_{i_1^{\mathcal A}} & \gamma_{i_1^{\mathcal A},o_1^{\mathcal A}} \Lambda \\
        \Lambda \gamma_{i_1^{\mathcal A},o_1^{\mathcal A}}^T & \Lambda\gamma_{o_1^{\mathcal A}}\Lambda
    \end{pmatrix}
    \label{eqn:cov-matrix-channels-A-transpose}
\end{align}

Similarly, the vector of means of $\Upsilon_A^{T_{J_\mathcal A}}$ is given by $\vec{d}_{\mathcal A^{T_J}} = (\vec d_{A\setminus J}, \Lambda \vec d_{J_\mathcal{A}})^T$. Then, putting $\Gamma_{\mathcal B}$, $\Gamma_{\mathcal A^{T_J}}$, $\vec d_{\mathcal B}$ and $\vec{{ d_{\mathcal A^{T_J}}}}$ into (\ref{eqn: gaussian-linkprod-integral-eval}), one obtains

\begin{align}
\Upsilon_{\mathcal A} \star \Upsilon_{\mathcal B} &\propto \lim_{r_j\to\infty, j\in I_J} \int d\vec\alpha_J \int d\vec\alpha_{A \setminus J} \int d\vec\alpha_{B \setminus J} \, \hat D(\vec\alpha_{A \setminus J}) \otimes \hat D(\vec\alpha_{B \setminus J}) \nonumber \\
&\hspace{25pt} \exp\left[{-\frac{1}{4} \begin{pmatrix} \vec\alpha_{A \setminus J}^T& \vec\alpha_{B \setminus J}^T \end{pmatrix} \Omega^T \begin{pmatrix} \gamma_{i_1^{\mathcal A}} & \mathbf 0 \\ \mathbf 0 & \gamma_{o_1^{\mathcal B}}  \end{pmatrix} \Omega \begin{pmatrix} \vec\alpha_{A \setminus J} \\ \vec\alpha_{B \setminus J} \end{pmatrix}}\right] \nonumber \\
&\hspace{16pt} \cdot \,\exp\left[{-\frac{1}{4} \begin{pmatrix} \vec\alpha_{A \setminus J}^T& \vec\alpha_{B \setminus J}^T \end{pmatrix} \Omega^T \begin{pmatrix} -\gamma_{i_1^{\mathcal A},o_1^{\mathcal A}}\Lambda \\ \gamma^T_{i^{\mathcal B}_1,o^{\mathcal B}_1} \end{pmatrix} \Omega \vec\alpha_{J}}\right] \nonumber \\
&\hspace{16pt} \cdot \, \exp\left[{-\frac{1}{4} \vec\alpha_{J}^T \Omega^T \begin{pmatrix} -\Lambda\gamma_{i_1^{\mathcal A},o_1^{\mathcal A}}^T & \gamma_{i^{\mathcal B}_1,o^{\mathcal B}_1} \end{pmatrix} \Omega \begin{pmatrix} \vec\alpha_{A \setminus J} \\ \vec\alpha_{B \setminus J} \end{pmatrix}}\right] \nonumber \\
&\hspace{16pt} \cdot \, \exp\left[{-\frac{1}{4} \vec\alpha_{J}^T \Omega^T (\Lambda\gamma_{o_1^{\mathcal A}}\Lambda + \gamma_{i^{\mathcal B}_1}) \Omega \vec\alpha_{J}}\right] \nonumber \\
&\hspace{16pt} \cdot \exp\left[{i(\vec\alpha_{A \setminus J}^T \Omega^T \vec d_{A \setminus J} + \vec\alpha_{B \setminus J}^T \Omega^T \vec d_{B \setminus J} )}\right] \cdot \exp\left[{i\vec\alpha_{J}^T \Omega^T (-\Lambda \vec d_{J_\mathcal{A}} + \vec d_{J_\mathcal{B}}) }\right]. 
\end{align}

Although not noted explicitly for brevity, the dimensions of $\Omega$ changes according to the size of the matrices it is multiplied by.

We note the introduction of the negative signs to $\gamma_{i_1^{\mathcal A},o_1^{\mathcal A}}\Lambda, \Lambda \gamma_{i_1^{\mathcal A},o_1^{\mathcal A}}^T$ and $\Lambda d_{J_\mathcal{A}}$ are due to the substitution of $\vec\alpha_{J_{\mathcal A}}$ with $-\vec\alpha_{J_{\mathcal B}} = -\vec\alpha _J$ in (\ref{eqn: gaussian-linkprod-integral-eval}). The integral with respect to $\vec\alpha_J$ is simply a multi-variate Gaussian integral that can be solved with
$$
\int d\vec{\alpha} \, e^{-\vec\alpha^T A \vec\alpha - \vec\alpha^T b} = \frac{\pi^{d/2}}{\sqrt{\det A}}  e^{\frac{1}{4} b^T A^{-1} b},
$$

where $A$ is a $d\times d$ matrix. Additionally, we note that the terms

\begin{align}
    { \begin{pmatrix} \vec\alpha_{A \setminus J}^T& \vec\alpha_{B \setminus J}^T \end{pmatrix} \Omega^T \begin{pmatrix} -\gamma_{i_1^{\mathcal A},o_1^{\mathcal A}}\Lambda \\ \gamma^T_{i^{\mathcal B}_1,o^{\mathcal B}_1} \end{pmatrix} \Omega \vec\alpha_{J}} &= \vec\alpha_{J}^T \Omega^T \begin{pmatrix} -\Lambda\gamma_{i_1^{\mathcal A},o_1^{\mathcal A}}^T & \gamma_{i^{\mathcal B}_1,o^{\mathcal B}_1} \end{pmatrix} \Omega \begin{pmatrix} \vec\alpha_{A \setminus J} \\ \vec\alpha_{B \setminus J} \end{pmatrix},
\end{align}

allowing us to combine those terms. Therefore, evaluating the integral with respect to $\vec\alpha_J$, one obtains,

\begin{align}
\Upsilon_{\mathcal A} \star \Upsilon_{\mathcal B} &\propto \lim_{\mathbf r_J \to \infty} \int d\vec\alpha_{A \setminus J} \int d\vec\alpha_{B \setminus J} \, \hat D(\vec\alpha_{A \setminus J}) \otimes \hat D(\vec\alpha_{B \setminus J}) \nonumber \\
&\hspace{25pt} \exp\Bigg[-\frac{1}{4} \begin{pmatrix} \vec\alpha_{A \setminus J}^T& \vec\alpha_{B \setminus J}^T \end{pmatrix} \Omega^T  \begin{pmatrix} \gamma_{i_1^{\mathcal A}} & \mathbf 0 \\ \mathbf 0 & \gamma_{o_1^{\mathcal B}}  \end{pmatrix} \Omega \begin{pmatrix}\vec\alpha_{A \setminus J} \\ \vec\alpha_{B \setminus J} \end{pmatrix} \Bigg] \nonumber \\
&\hspace{16pt} \cdot \,\exp\Bigg[\frac{1}{4} \begin{pmatrix} \vec\alpha_{A \setminus J}^T& \vec\alpha_{B \setminus J}^T \end{pmatrix} \Omega^T \begin{pmatrix} -\gamma_{i_1^{\mathcal A},o_1^{\mathcal A}}\Lambda \\ \gamma^T_{i^{\mathcal B}_1,o^{\mathcal B}_1} \end{pmatrix} (\Lambda\gamma_{o_1^{\mathcal A}}\Lambda + \gamma_{i^{\mathcal B}_1})^{-1} \begin{pmatrix} -\Lambda\gamma_{i_1^{\mathcal A},o_1^{\mathcal A}}^T & \gamma_{i^{\mathcal B}_1,o^{\mathcal B}_1} \end{pmatrix} \Omega \begin{pmatrix} \vec\alpha_{A \setminus J} \\ \vec\alpha_{B \setminus J} \end{pmatrix} \Bigg] \nonumber \\
&\hspace{16pt} \cdot \exp \Big[{i(\vec\alpha_{A \setminus J}^T \Omega^T \vec d_{A \setminus J} + \vec\alpha_{B \setminus J} \Omega^T \vec d_{B \setminus J} )}\Big] \nonumber \\
&\hspace{16pt} \cdot \exp \Big[i\begin{pmatrix} \vec\alpha_{A \setminus J}^T& \vec\alpha_{B \setminus J}^T \end{pmatrix} \Omega^T  \begin{pmatrix} -\gamma_{i_1^{\mathcal A},o_1^{\mathcal A}}\Lambda \\ \gamma^T_{i^{\mathcal B}_1,o^{\mathcal B}_1} \end{pmatrix} (\Lambda\gamma_{o_1^{\mathcal A}}\Lambda + \gamma_{i^{\mathcal B}_1})^{-1}  (\Lambda \vec d_{J_\mathcal{A}} - \vec d_{J_\mathcal{B}}) \Big]
\label{eqn:link-prod-int-evaluated-channels}
\end{align}

Using $\Omega^T = \Omega^{-1}$. Notably, (\ref{eqn:link-prod-int-evaluated-channels}) is precisely of the form (\ref{eqn: gaussian-state-characteristic-func}) for an output state on modes $i_1^{\mathcal A},o_1^{\mathcal B}$, and its covariance matrix can be directly read off and verified to agree with (\ref{eqn: chanchan-gen-result}). We have therefore shown agreement between the evaluation via the integrals and the link product algorithm provided here.

\subsection{Correspondence with State-into-channel Evaluation}

Next, we wish to show that this resulting covariance matrix correctly represents the output state of feeding a single TMSV copy, $\Phi(r)$ through $\mathcal B \circ \mathcal A$. In this case where $\mathcal A, \mathcal B$ are both Gaussian channels, their Choi states are characterized by

\begin{align}
    \vec d_{\mathcal A} &= \begin{pmatrix}\vec{0} \\  \vec\nu_A \end{pmatrix} \\
    \vec d_{\mathcal B} &= \begin{pmatrix}\vec \nu_B \\ \vec{ 0}\end{pmatrix}
\end{align}

and

\begin{align}
\Gamma_{\mathcal A} (r) &= \begin{pmatrix}
        \gamma_{i_1^{\mathcal A}} & \gamma_{i_1^{\mathcal A},o_1^{\mathcal A}} \\
        \gamma_{i_1^{\mathcal A},o_1^{\mathcal A}}^T & \gamma_{o_1^{\mathcal A}}
    \end{pmatrix} = \begin{pmatrix}
\cosh r\mathbb{I} & \sinh r \Lambda X_A \\
\sinh r X_A^T\Lambda & \cosh r X_A^TX_A +Y_A
\end{pmatrix} \label{eqn:channels-cm-full-A} \\
\Gamma_{\mathcal B} (r') &= \begin{pmatrix}
        \gamma_{o_1^{\mathcal B}} & \gamma_{i_1^{\mathcal B},o_1^{\mathcal B}}^T \\
        \gamma_{i_1^{\mathcal B},o_1^{\mathcal B}} & \gamma_{i_1^{\mathcal B}}
    \end{pmatrix} = \begin{pmatrix}
\cosh r' X_B^TX_B +Y_B & \sinh r' X_B^T \Lambda  \\
\sinh r' \Lambda X_B &\cosh r'\mathbb{I}
\end{pmatrix}
\label{eqn:channels-cm-full-B}
\end{align}

where $(X_A,Y_A,\vec\nu_A),(X_B,Y_B,\vec\nu_B)$ are the transformation matrices corresponding to channels $\mathcal A, \mathcal B$ respectively. Here, the mode order for $\vec{d}_{\mathcal A}, \Gamma_{\mathcal A}$ is given by $(i_1^{\mathcal A}, o_1^{\mathcal A})$, while that for $\vec{d}_{\mathcal B}, \Gamma_{\mathcal B}$ is given by $(o_1^{\mathcal B}, i_1^{\mathcal B})$. Therefore, the covariance matrix of the TMSV state through $\mathcal B\circ\mathcal A$ is given by

\begin{align}
\Gamma_{\mathcal B \circ \mathcal A} = \begin{pmatrix}
\cosh r\mathbb{I} & \sinh r \Lambda X_A X_B \\
\sinh r X_B^TX_A^T\Lambda & X_B^T(\cosh r X_A^TX_A +Y_A)X_B+Y_B
\end{pmatrix}
\label{eqn:link-prod-channels-transformation-matrices}
\end{align}

Following the mode order of $(i_1^{\mathcal A}, o_1^{\mathcal B})$. We want to show that when substituting (\ref{eqn:channels-cm-full-A}) and (\ref{eqn:channels-cm-full-B}) into $\Gamma_\text{out}$ in (\ref{eqn: chanchan-gen-result}) it reduces to $\Gamma_{\mathcal B \circ \mathcal A}$ (\ref{eqn:link-prod-channels-transformation-matrices}).

Here, our $\set{r_j : j \in I_J} =  \set{r'}$. Thus, under $r' \to\infty$, we can use the convergence of a geometric series of matrices to take the following approximation:

\begin{align}
\lim_{r'\to\infty}(\Lambda\gamma_{o_1^{\mathcal A}}\Lambda+\gamma_{i_1^{\mathcal B}})^{-1} &= \lim_{r'\to\infty} [\Lambda (\cosh r' X_A^T X_A +Y_A )\Lambda + \cosh r' I_2]^{-1} \\
&\approx \lim_{r'\to\infty} \frac{1}{\cosh r'} \left[I_2 - \frac{\Lambda (\cosh r X_A^T X_A +Y_A )\Lambda}{\cosh r'}\right] \label{eqn:matrix-inversion-approx}
\end{align}

We can then substitute (\ref{eqn:matrix-inversion-approx}) into both conditions. Taking the limit of $r'\to\infty$ we find indeed that for the covariance matrix,

\begin{align}
\lim_{r'\to\infty} \gamma_{i_1^{\mathcal A}}-\gamma_{i_1^{\mathcal A},o_1^{\mathcal A}}\Lambda (\Lambda\gamma_{o_1^{\mathcal A}}\Lambda+\gamma_{i_1^{\mathcal B}})^{-1}\Lambda \gamma_{i_1^{\mathcal A},o_1^{\mathcal A}}^T &= \cosh r I_2, \\
\lim_{r'\to\infty} \gamma_{o_1^{\mathcal B}}-\gamma_{i_1^{\mathcal B},o_1^{\mathcal B}}^T (\Lambda\gamma_{o_1^{\mathcal A}}\Lambda+\gamma_{i_1^{\mathcal B}})^{-1} \gamma_{i_1^{\mathcal B},o_1^{\mathcal B}} \nonumber
&= X_B^T(\cosh r X_A^TX_A +Y_A)X_B+Y_B \\
\lim_{r'\to\infty} \quad\gamma_{i_1^{\mathcal A},o_1^{\mathcal A}} \Lambda (\Lambda\gamma_{o_1^{\mathcal A}}\Lambda+\gamma_{i_1^{\mathcal B}})^{-1} \gamma_{i_1^{\mathcal B},o_1^{\mathcal B}} &= \sinh r \Lambda X_A X_B, 
\end{align}

which are the desired results. We can also consider the vector of means. A TMSV with initial $\vec{d}_\text{s} = \vec 0$ transformed by $\mathcal B\circ\mathcal A$ has an output vector of means,

\begin{align}
    \vec {d}_{\mathcal B\circ\mathcal A} &= (\vec{0}, \, X_B^T\vec\nu_A + \vec\nu_B)^T \label{eqn:linkprod-chan-trans-vecofmeans},
\end{align}

The vector of means calculated from the link product, when substituting (\ref{eqn:channels-vectorofmeans}) into (\ref{eqn:link-prod-int-evaluated-channels}) yields

\begin{equation}\label{eqn:vecofmeans-linkprod-form}
    \begin{pmatrix}
        \vec{d}_{i_1^{\mathcal A}} -\gamma_{i_1^{\mathcal A},o_1^{\mathcal A}}\Lambda (\Lambda\gamma_{o_1^{\mathcal A}}\Lambda + \gamma_{i^{\mathcal B}_1})^{-1}(\Lambda \vec{d}_{o_1^{\mathcal A}}-\vec{d}_{i_1^{\mathcal B}}) \\
        \vec{d}_{o_1^{\mathcal B}} + \gamma_{i_1^{\mathcal B},o_1^{\mathcal B}}^T (\Lambda\gamma_{o_1^{\mathcal A}}\Lambda + \gamma_{i^{\mathcal B}_1})^{-1}(\Lambda \vec{d}_{o_1^{\mathcal A}}-\vec{d}_{i_1^{\mathcal B}})
    \end{pmatrix}
\end{equation}

We can show, using (I11)-(I12), that for the terms in (\ref{eqn:vecofmeans-linkprod-form}),

\begin{align}
    \lim_{r'\to\infty} -\gamma_{i_1^{\mathcal A},o_1^{\mathcal A}}\Lambda (\Lambda\gamma_{o_1^{\mathcal A}}\Lambda + \gamma_{i^{\mathcal B}_1})^{-1}(\Lambda \vec d_{o_1^{\mathcal A}}-\vec d_{i_1^{\mathcal B}}) = 0,
\end{align}

which means that first term of (\ref{eqn:vecofmeans-linkprod-form}) correctly reduces to the first term of (\ref{eqn:linkprod-chan-trans-vecofmeans}) to yield $\vec d_{i_1^{\mathcal A}} = 0$.

Given $\vec d_{o_1^{\mathcal B}} = X_B^T\vec d_{o_1^{\mathcal B}}+\vec\nu_B$, the second term of (\ref{eqn:vecofmeans-linkprod-form}):

\begin{align}
    \lim_{r'\to\infty} \vec d_{o_1^{\mathcal B}} + \gamma_{i_1^{\mathcal B},o_1^{\mathcal B}}^T (\Lambda\gamma_{o_1^{\mathcal A}}\Lambda + \gamma_{i^{\mathcal B}_1})^{-1}(\Lambda \vec d_{o_1^{\mathcal A}}-\vec d_{i_1^{\mathcal B}}) &= X_B^T(X_A^T \vec{d}_{o_1^{\mathcal A}}+\nu_A) +\nu_B + X_B^T\vec{d}_{o_1^{\mathcal B}} - \vec{d}_{i^{\mathcal B}_1}\\
    &= X_B^T(X_A^T \vec{d}_{o_1^{\mathcal A}}+\nu_A) +\nu_B \\
    &= X_B^T \nu_A +\nu_B
\end{align}

The additional term $X_B^T\vec{d}_{o_1^{\mathcal B}} - \vec{d}_{i^{\mathcal B}_1}$ from the link product evaluation goes to 0, by recalling that we use a $\Phi(r')$ centered on $\vec d_{\mathcal B} = \vec 0$ such that $X_B^T\vec{d}_{o_1^{\mathcal B}} - \vec{d}_{i^{\mathcal B}_1} = 0$. Similarly, $\vec d_{o_1^{\mathcal A}} = \vec 0$. Consequently, we can see that vector of means evaluated via the link product ultimately yields the correct vector of means (checked against the direct evaluation of $\vec d_{\mathcal B \circ \mathcal A}$ above).

\section{Example - Concatenate 2 single-mode channels (2 squeezing channels)}
\label{appendix:squeezing-chanchan-join}
In the previous section, we evaluated the concatenation of general channels $\mathcal A$ and $\mathcal B$. Here, we set $\mathcal A$ and $\mathcal B$ to be squeezing channels given by their symplectic transformation matrix $F_{s,\theta}=\cosh(s)I_2 - \sinh(s) S_\theta$ and

\begin{align}
S_\theta = \begin{pmatrix} \cos\theta & \sin\theta \\ \sin\theta & -\cos\theta
\end{pmatrix},
\end{align}

where $s$ is the squeezing magnitude and $\theta$ is the phase. In-text, we have set $\theta = \pi$, which corresponds to a squeezing along the $P$-quadrature.

Explicitly, we use \texttt{\pcm}$(\Gamma_{\mathcal A}(r)\oplus\Gamma_{\mathcal{B}}(r'), J=\set{(o_1^{\mathcal A},i_1^{\mathcal B})})$ with the following covariance matrices with $\Lambda = \sigma_z$:

\begin{align}
\Gamma_{\mathcal A}(r) = \Gamma_{\mathcal B}(r) = \left( \begin{array}{c|c}
       \cosh(r) I_2 & \sinh(r) \Lambda F_{s,\theta} \\
       \hline
       \sinh(r) F_{s,\theta} \Lambda   & \cosh(r) F_{s,\theta}^2
    \end{array} \right),
\end{align}

where the covariance matrix has mode ordering corresponding to $(i_1^{\mathcal A},o_1^{\mathcal A})$ for $\Gamma_{\mathcal A}(r)$ and $(i_1^{\mathcal B},o_1^{\mathcal B})$ for $\Gamma_{\mathcal B}(r')$. The subroutine will evaluate $\Gamma=\Gamma_{\mathcal A}(r) \oplus \Gamma_{\mathcal B}(r')$, reorder the modes and evaluate the effective mode-pair stitches such that we get

\begin{align}
    \gamma_\text{tot} &= \texttt{PrepCM}(\Gamma_{\mathcal A} \oplus\Gamma_{\mathcal{B}}, J) = \left( \begin{array}{cc|c}
        \cosh(r) I_2 & \mathbf 0 & \sinh(r)\Lambda F_{s,\theta} \\
        \mathbf 0 & \cosh(r') F_{s,\theta}^2 & -\sinh(r')F_{s,\theta}\Lambda^2 \\
        \hline
        \sinh(r) F_{s,\theta}\Lambda & -\sinh(r') \Lambda^2 F_{s,\theta} & \cosh(r) F_{s,\theta}^2 + \cosh(r') \Lambda^2
    \end{array} \right),
\end{align}

which can be simplified with $\Lambda^2 = I_2$. The output $\gamma_\text{tot}$ has the mode order of $(i_1^{\mathcal A}, o_1^{\mathcal B}, y)$, where $y$ corresponds to the stitched pair (i.e., contains the terms of the modes from $o_1^{\mathcal A}$ and $i_1^{\mathcal B}$ that are to be stitched with each other). As was done in the previous section, take the following approximation as $r'\to\infty$:

\begin{align}
&\lim_{r'\to\infty}{(\cosh(r')I_2 + \cosh(r)F_{s,\theta}^2)}^{-1} \\
&\quad\approx \lim_{r'\to\infty} \frac{1}{\cosh r'} \left[I_2 - \frac{\cosh(r)F^2_{s,\theta}}{\cosh r'}\right]
\end{align}

and evaluate the output covariance matrix, $\Gamma_\text{out}$, as per (I4) in the previous section. This output covariance matrix will be ordered corresponding to the mode order of $(i_1^{\mathcal A}, o_1^{\mathcal B})$. For the final form of $\Gamma_\text{out}$ in-text, we set $\theta=\pi$ for a P-squeezer, and switched our qumode order to $(o_1^{\mathcal B}, i_1^{\mathcal A})$ such that the corresponding $\Gamma_\text{out}$ written in the main text is ordered accordingly:

\begin{align}
    \Gamma_\text{out} &= \begin{pmatrix}
\cosh r F_{s}^4 & \sinh r F_{s}^2 \Lambda \\
 \sinh r \Lambda F_{s} ^2 &  \cosh r I_2
\end{pmatrix} \\
&= \begin{pmatrix}
\cosh r F_{2s}^2 & \sinh r F_{2s} \Lambda \\
 \sinh r \Lambda F_{2s}  &  \cosh r I_2
\end{pmatrix},
\end{align}

and in the last line, we use $F_{s}^2 = F_{2s}$ \cite{weedbrook_gaussian_2012} in the simplification.

\section{Example - Construct CV comb with non-Markovianity}
\label{appendix:bs-bs-process-join}
 Consider the process in Fig.~\ref{fig:process-example} consisting of two identical channels,  $\mathcal A$. Let us construct the process, where $\mathcal A$ is an arbitrary 2-mode Gaussian unitary with transformation matrix such that $\Gamma_\text{output} = T_{\mathcal A}^{\dagger}\Gamma_\text{input}T_{\mathcal A}$:

    \begin{align*}
        T_\mathcal A = \begin{pmatrix}
            \mathcal A_{11} &  \mathcal A_{12} \\
             \mathcal A_{21} &  \mathcal A_{22}
        \end{pmatrix},
    \end{align*}
    
where each block matrix in $T_\mathcal A$ is a $2\times2$ matrix. When $\mathcal A$ is a beam splitter, the result is a simple Gaussian collision model, and the transformation matrix $T_{\mathcal A}$ is given by the following four $2\times 2$ block matrices:

\begin{align*}
    T_{\mathcal A} = \begin{pmatrix}
        \cos\theta I_2 & -\sin\theta e^{i\phi} \\
        \sin\theta e^{-i\phi} & \cos\theta I_2
    \end{pmatrix}.
\end{align*}

As elaborated in text, this is process can be iteratively constructed with the $J_0$- and $J_1$-stitchings. Here, we will proceed assuming the $J_0$ stitching has already been evaluated to give the $\rho_0 \star_{J_0} \mathcal A$ circuit fragment (henceforth written as $\rho_0 \star \mathcal A$ for brevity), such that what follows will demonstrate only the $J_1$-stitching of that fragment with the second fragment, $\mathcal A$.

We label the modes of the first circuit fragment $\rho_0\star\mathcal A$ to be $o_E, o_1^{\mathcal A}, i_1^{\mathcal A}$ as per Fig.~\ref{fig:process-example}. The Choi state of this fragment has an input TMSV, $\Phi(r)$. The second circuit fragment, $\mathcal A$, has the modes $i_2^{\mathcal A}, o_2^{\mathcal A}$, corresponding to the arms of the TMSV state $\Phi(r'')$, and $o_3^{\mathcal A}, i_3^{\mathcal A}$, corresponding to $\Phi(r')$ (see Fig.~\ref{fig:interaction-example-channel}a below).

The stitch of interest is given by $J_1=\set{(o_E, i_3^{\mathcal A})}$, and we will therefore take $r'\to\infty$ later during the evaluation of the link product.

To begin, we take $\rho_0$ to be the vacuum state, such that its covariance matrix is given by $I_2$. The covariance matrix of the TMSV state, $\Gamma_{\Phi}(r)$ is given in (\ref{eqn: covmatrix-tmsv}). We can calculate the covariance matrix of the first circuit fragment, with $\Lambda = \sigma_z$:

\begin{align}
    \Gamma_{\rho_0\star\mathcal A} 
    &= (T_\mathcal{A}^\dagger \oplus I_2)(I_2 \oplus \Gamma_\Phi(r))(T_\mathcal{A} \oplus I_2)\\
    &= \begin{pmatrix}
\cosh(r) \mathcal A_{21}^\dagger \mathcal A_{21} + \mathcal A_{11}^\dagger\mathcal A_{11} & \cosh(r) \mathcal A_{21}^\dagger \mathcal A_{22} + \mathcal A_{11}^\dagger \mathcal A_{12} & \sinh(r) \mathcal A_{21}^\dagger\Lambda \\
\cosh(r) \mathcal A_{22}^\dagger \mathcal A_{21} + \mathcal A_{12}^\dagger \mathcal A_{11} & \cosh(r) \mathcal A_{22}^\dagger\mathcal A_{22} + \mathcal A_{12}^\dagger \mathcal A_{12} & \sinh(r) \mathcal A_{22}^\dagger\Lambda \\
\sinh(r) \Lambda \mathcal A_{21} & \sinh(r) \Lambda \mathcal A_{22} & \cosh(r) I_2
\label{eqn: vac-tmsv-bs}
\end{pmatrix}
\end{align}

where the order of the covariance matrix corresponds to the qumode ordering of $(o_E, o_1^{\mathcal A},i_1^{\mathcal A})$. While mode-reordering occurs during the subroutine \texttt{\pcm} for the algorithm, for the benefit of pedagogical clarity throughout this example, we will explicitly re-order the sub-matrices of $\Gamma_{\mathcal \rho_0\star \mathcal A}$, and use lines to visually separate blocks corresponding to modes that are participate in the $J$-stitch. We will reorder the qumodes according to \texttt{\pcm}, where the qumodes participating in the join will constitute the last terms (and correspond to the bottom-right blocks of the covariance matrix).

For $\Gamma_{\rho_0\star \mathcal A}$, its re-ordered matrix, $\Gamma_{\rho_0\star \mathcal A} '$, follows the qumode ordering of $(o_1^{\mathcal A},i_1^{\mathcal A},o_E)$ to yield:

\begin{align}
    \Gamma_{\rho_0\star \mathcal A} '
    &= \left(\begin{array}{cc|c}
    \cosh(r) \mathcal A_{22}^\dagger\mathcal A_{22} + \mathcal A_{12}^\dagger \mathcal A_{12} & \sinh(r) \mathcal A_{22}^\dagger\Lambda & \cosh(r) \mathcal A_{22}^\dagger \mathcal A_{21} + \mathcal A_{12}^\dagger \mathcal A_{11}  \\
    \sinh(r) \Lambda\mathcal A_{22} & \cosh(r) I_2 & \sinh(r) \Lambda \mathcal A_{21} \\
    \hline
    \cosh(r) \mathcal A_{21}^\dagger \mathcal A_{22} + \mathcal A_{11}^\dagger \mathcal A_{12} & \sinh(r) \mathcal A_{21}^\dagger\Lambda & \cosh(r) \mathcal A_{21}^\dagger \mathcal A_{21} + \mathcal A_{11}^\dagger\mathcal A_{11}\\
\end{array}\right) = \left( \begin{array}{c|c}
        \gamma_{A \setminus J} & \gamma_\text{A,c} \\
        \hline
        \gamma_\text{A,c}^T & \gamma_{J}
    \end{array} \right),
\end{align}

where the division and its corresponding sub-block has been explicitly noted. Since the mode $o_E$ is the mode participating in the stitch, it is positioned as the bottom-right submatrix. 

The second circuit fragment, $\mathcal A$ will have a covariance matrix, corresponding to the qumode order $(i_3^{\mathcal A}, o_3^{\mathcal A}, o_2^{\mathcal A}, i_2^{\mathcal A})$, given by:

\begin{align}
    \Gamma_{\mathcal A} 
    &= (I_2 \oplus T_\mathcal{A}^\dagger \oplus I_2)(\Gamma_\Phi(r') \oplus \Gamma_\Phi(r''))(I_2 \oplus  T_\mathcal{A} \oplus I_2) \nonumber \\
    &= \left(\begin{array}{cccc}
\cosh(r')I_2 & \sinh(r')\Lambda \mathcal A_{11} & \sinh(r') \Lambda \mathcal A_{12} & \mathbf 0 \\
\sinh(r') \mathcal A_{11}^\dagger \Lambda & \cosh(r') \mathcal A_{11}^\dagger \mathcal A_{11} + \cosh(r'') \mathcal A_{21}^\dagger\mathcal A_{21} & \cosh(r') \mathcal A_{11}^\dagger \mathcal A_{12} + \cosh(r'') \mathcal A_{21}^\dagger \mathcal A_{22} & \sinh (r'') \mathcal A_{21}^\dagger\Lambda \\
\sinh(r') \mathcal A_{12}^\dagger\Lambda & \cosh(r') \mathcal A_{12}^\dagger\mathcal A_{11} + \cosh(r'') \mathcal A_{22}^\dagger \mathcal A_{21} & \cosh(r') \mathcal A_{12}^\dagger\mathcal A_{12} + \cosh(r'') \mathcal A_{22}^\dagger\mathcal A_{22} & \sinh(r'') \mathcal A_{22}^\dagger\Lambda \\
\mathbf 0 & \sinh(r'') \Lambda \mathcal A_{21} & \sinh(r'') \Lambda \mathcal A_{22} & \cosh(r'')I_2
\end{array}\right).
\end{align}

Once again, we will explicitly reorder the matrix. The re-ordered matrix, $\Gamma'_{\mathcal A}$ corresponds to the new qumode order of $(o_3^{\mathcal A},o_2^{\mathcal A},i_2^{\mathcal A},i_3^{\mathcal A})$, and is of the form:

\begin{align}
    \Gamma'_{\mathcal A} 
    &= \left(\begin{array}{ccc|c}
 \cosh(r') \mathcal A_{11}^\dagger \mathcal A_{11} + \cosh(r'') \mathcal A_{21}^\dagger\mathcal A_{21} & \cosh(r') \mathcal A_{11}^\dagger \mathcal A_{12} + \cosh(r'') \mathcal A_{21}^\dagger \mathcal A_{22} & \sinh (r'') \mathcal A_{21}^\dagger\Lambda & \sinh(r') \mathcal A_{11}^\dagger \Lambda \\
\cosh(r') \mathcal A_{12}^\dagger\mathcal A_{11} + \cosh(r'') \mathcal A_{22}^\dagger \mathcal A_{21} & \cosh(r') \mathcal A_{12}^\dagger\mathcal A_{12} + \cosh(r'') \mathcal A_{22}^\dagger\mathcal A_{22} & \sinh(r'') \mathcal A_{22}^\dagger\Lambda & \sinh(r') \mathcal A_{12}^\dagger \Lambda \\
\sinh(r'') \Lambda \mathcal A_{21} & \sinh(r'') \Lambda \mathcal A_{22} & \cosh(r'')I_2 & \mathbf 0\\
\hline
\sinh(r') \Lambda \mathcal A_{11}  &  \sinh(r') \Lambda \mathcal A_{12} & \mathbf 0 & \cosh(r') I_2
\end{array}\right), \\
&= \left( \begin{array}{c|c}
        \gamma_{B \setminus J} & \gamma_\text{B,c} \\
        \hline
        \gamma_\text{B,c}^T & \gamma_{J}
    \end{array} \right),
\end{align}

where we have once again noted its division and sub-blocks. The explicit notation of the sub-blocks can be used in reference to the algorithm outlined in the main text to obtain the output of \texttt{\pcm}$(\Gamma_{\rho_0 \star \mathcal A} \oplus \Gamma_{\mathcal A}, J_1 =\set{(o_E,i_3^{\mathcal A})})$.

Then, iterating once in the algorithm for the single-mode stitch in $J_1$, we obtain the output covariance matrix of the form

\begin{align}
\tilde \Gamma_\text{out} = 
\left( \begin{array}{cc|c|cc}
\Gamma_{11} & \Gamma_{12} & \Gamma_{01}^\dagger & \Gamma_{13} & \mathbf 0 \\
\Gamma_{12}^{\dagger} & \Gamma_{22} & \Gamma_{02}^\dagger & \Gamma_{23} & \mathbf 0 \\
\hline
\Gamma_{01} & \Gamma_{02} & \Gamma_{00} & \Gamma_{03} & \Gamma_{04} \\
\hline
\Gamma_{13}^\dagger & \Gamma_{23}^\dagger & \Gamma_{03}^\dagger & \Gamma_{33} & \Gamma_{34} \\
 \mathbf 0 & \mathbf 0 & \Gamma_{04}^\dagger & \Gamma_{34}^\dagger & \Gamma_{44}
\end{array} \right )
\label{eqn: bs-bs-covmatrix-env},
\end{align}

corresponding to qumode ordering $(o_1^{\mathcal A}, i_1^{\mathcal A}, o_3^{\mathcal A}, o_2^{\mathcal A}, i_2^{\mathcal A})$. We have included the lines to denote the environment subsystem (qumode $o_3^{\mathcal A}$) and its corresponding block matrices ($\Gamma_{00}$, and its correlation submatrices $\Gamma_{01}, \Gamma_{02}, \Gamma_{03}, \Gamma_{04}$).

To find the Choi state of the resulting process (i.e., quantum comb), without the environment, we trace out the environment mode by deleting the third row and column of the covariance matrix corresponding to $o_3^{\mathcal A}$. The final covariance matrix of the process is thus given by

\begin{align}
\Gamma_{\mathcal T} = 
\begin{pmatrix}
\Gamma_{11} & \Gamma_{12} & \Gamma_{13} & \mathbf 0 \\
\Gamma_{12}^{\dagger} & \Gamma_{22} & \Gamma_{23} & \mathbf 0 \\
\Gamma_{13}^\dagger & \Gamma_{23}^\dagger & \Gamma_{33} & \Gamma_{34} \\
\mathbf 0 & \mathbf 0 & \Gamma_{34}^\dagger & \Gamma_{44}
\end{pmatrix}.
\end{align}

With $\Xi = \cosh(r)\Lambda \mathcal A_{21}^\dagger \mathcal A_{21}\Lambda + \cosh(r') I_2 +\Lambda \mathcal A_{11}^\dagger\mathcal A_{11}\Lambda$, each block matrix $\Gamma_{kl}$ in $\Gamma_\text{out}$ is a $2\times 2$ matrix with the forms:

\begin{align*}
    \Gamma_{11} &= \cosh(r) \mathcal A_{22}^\dagger \mathcal A_{22} + \mathcal A_{12}^\dagger\mathcal A_{12} - (\cosh(r) \mathcal A_{22}^\dagger \mathcal A_{21}+\mathcal A_{12}^\dagger \mathcal A_{11}) \Lambda \Xi^{-1} \Lambda(\cosh(r) \mathcal A_{21}^\dagger \mathcal A_{22} + \mathcal A_{11}^\dagger \mathcal A_{12}), \\
    \Gamma_{12} &= \sinh(r) \mathcal A_{22}^\dagger\Lambda - (\cosh(r)\mathcal A_{22}^\dagger \mathcal A_{21} + \mathcal A_{12}^\dagger \mathcal A_{11}) \Lambda \Xi^{-1}\Lambda\sinh(r) \mathcal A_{21}^\dagger\Lambda, \\
    \Gamma_{13} &= (\cosh(r) \mathcal A_{22}^\dagger \mathcal A_{21} + \mathcal A_{12}^\dagger \mathcal A_{11})\Lambda \Xi^{-1} \sinh(r') \Lambda \mathcal A_{12}, \\
    \Gamma_{22} &= \cosh(r) I_2 - \sinh^2(r) \Lambda \mathcal A_{21} \Lambda \Xi^{-1} \Lambda \mathcal A_{21}^\dagger \Lambda, \\
    \Gamma_{23} &= \sinh(r)\Lambda \mathcal A_{21} \Lambda \Xi^{-1} \sinh(r')  \Lambda \mathcal A_{12}, \\
    \Gamma_{33} &= \cosh(r'') \mathcal A_{22}^\dagger \mathcal A_{22} + \cosh(r') \mathcal A_{12}^\dagger \mathcal A_{12} - \sinh^2(r') \mathcal A_{12}^\dagger \Lambda \Xi^{-1} \Lambda \mathcal A_{12}, \\
    \Gamma_{34} &= \sinh(r'') \mathcal A_{22}^\dagger \Lambda, \\
    \Gamma_{44} &= \cosh(r'') I_2.
\end{align*}

For explicit evaluation, the terms corresponding to the deleted environment modes are given as follows:

\begin{align*}
\Gamma_{00} &= \cosh(r') \mathcal A_{11}^\dagger\mathcal A_{11} + \cosh(r'') \mathcal A_{21}^\dagger \mathcal A_{21} -\sinh^2(r') \mathcal A_{11}^\dagger\Lambda\Xi^{-1} \Lambda \mathcal A_{11} \\
\Gamma_{01} &= \sinh(r') \mathcal A_{11}^\dagger\Lambda \Xi^{-1} \Lambda (\cosh(r)\mathcal A_{21}^\dagger\mathcal A_{22} + \mathcal A_{11}^\dagger \mathcal A_{12}) \\
\Gamma_{02} &= \sinh(r) \sinh(r') \mathcal A_{11}^\dagger \Lambda \Xi^{-1} \Lambda \mathcal A_{21}^\dagger \Lambda \\
\Gamma_{03} &= \cosh(r') \mathcal A_{11}^\dagger \mathcal A_{12} + \cosh(r'') \mathcal A_{21}^\dagger \mathcal A_{22} - \sinh^2(r') \mathcal A_{11}^\dagger\Lambda \Xi^{-1} \Lambda \mathcal A_{12} \\
\Gamma_{04} &= \sinh(r'') \mathcal A_{21}^\dagger \Lambda 
\end{align*}

If $\mathcal A$ is a beam splitter, then the transformation matrix

\begin{align}
    T_{\mathcal A} =
        \displaystyle \left[\begin{array}{cc|cc}
        \cos{\left(\theta \right)} & 0 & - e^{i \phi} \sin{\left(\theta \right)} & 0\\
        0 & \cos{\left(\theta \right)} & 0 & - e^{- i \phi} \sin{\left(\theta \right)}\\
        \hline
        e^{- i \phi} \sin{\left(\theta \right)} & 0 & \cos{\left(\theta \right)} & 0\\
        0 & e^{i \phi} \sin{\left(\theta \right)} & 0 & \cos{\left(\theta \right)}
        \end{array}\right]
        = \left(\begin{array}{c|c}
            \mathcal A_{11} &  \mathcal A_{12} \\
            \hline
             \mathcal A_{21} &  \mathcal A_{22}
             \end{array}\right),
             \label{eqn: beam splitter transformation matrix}
\end{align}

where $\cos^2(\theta)$ is the transmittivity, and $\phi$ is the phase difference between the transmitted and reflected fields. For a real beam splitter, $\phi=0$. If $\mathcal A$ is a beam splitter with transmittance angle $\theta$, and we apply the limit $r'\to\infty$, we can recover, from $\Gamma_{\mathcal T}$, the following covariance matrix, following mode order $(o_1^{\mathcal A}, i_1^{\mathcal A}, o_2^{\mathcal A}, i_2^{\mathcal A})$:

$$
\Gamma_{\mathcal T} = 
\begin{pmatrix}
(\cosh r \cos^2\theta + \sin^2\theta) I_2 & \sinh r \cos\theta \Lambda & \sin^2\theta \cos\theta (1-\cosh r) I_2 & \mathbf 0 \\
\sinh r \cos\theta \Lambda & \cosh r I_2 & -\sinh r \sin^2 \theta \Lambda & \mathbf 0 \\
\sin^2\theta \cos\theta (1-\cosh r) I_2 & -\sinh r  \sin^2 \theta \Lambda & \cosh r'' \cos^2 \theta I_2 + \sin^2 \theta (\cosh r \sin^2 \theta + \cos^2\theta) I_2 & \sinh r'' \cos\theta \Lambda \\
\mathbf 0 & \mathbf 0 & \sinh r'' \cos\theta \Lambda & \cosh r'' I_2
\end{pmatrix}.
$$

Below, we briefly outline how the above is obtained. $\Gamma_{34},\Gamma_{44}$ both can be read off directly. For all other terms, we substituted the form of $\mathcal A_{ij}$ from (\ref{eqn: beam splitter transformation matrix}) into $\Xi$ and each $\Gamma_{kl}$. Since $\Xi = \cosh(r) \sin^2(\theta) I_2 + \cosh(r') I_2 + \cos^2(\theta)I_2$, under the limit $r'\to\infty$, we can use the following approximation

\begin{align}
    \Xi^{-1} &= \frac{1}{\cosh(r')} \left(1+\frac{\cosh(r) \sin^2\theta}{\cosh(r')}+\frac{\cos^2\theta}{\cosh(r')}\right)^{-1}I_2 \\
    &\approx \frac{1}{\cosh(r')} \left(1-\frac{\cosh(r) \sin^2\theta}{\cosh(r')}-\frac{\cos^2\theta}{\cosh(r')}\right)I_2
\end{align}

Insert this approximation to each $\Gamma_{kl}$, we can recover the appropriate terms. The trick is to use the approximate form of $\Xi^{-1}$ in either two ways: (1) to the first order if the other terms multiplied into $\Xi^{-1}$ contains no $r'$ dependence, such that the terms go to zero in the limit, and (2) to higher orders in order to `cancel out' any $\cosh r'$ or $\sinh r'$ factor (e.g., by using trigonometric identity of $\sinh r'/\cosh r' = \tanh r'$, or by using $\sinh r' = \cosh r'$ in the limit of $r'\to\infty$) such that they do not survive and blow up in the limit of $r'\to\infty$.

For example, in the first use case, we show this evaluation for $\Gamma_{22}$. We can show that $\cosh(r) I_2 - \sinh^2(r) \Lambda \mathcal A_{21} \Lambda \Xi^{-1} \Lambda \mathcal A_{21}^\dagger \Lambda$ will recover $\cosh r I_2$:

\begin{align}
    \Gamma_{22} &= \cosh(r) I_2 - \sinh^2(r) \Lambda \mathcal A_{21} \Lambda \Xi^{-1} \Lambda \mathcal A_{21}^\dagger \Lambda \\
    &\approx \cosh(r)I_2 - \frac{\sinh^2(r) \sin^2\theta}{\cosh(r')} \left(1-\frac{\cosh(r) \sin^2\theta}{\cosh(r')}-\frac{\cos^2\theta}{\cosh(r')}\right)I_2 \\
    &\approx \cosh(r)I_2 - \frac{\sinh^2(r) \sin^2\theta}{\cosh(r')} I_2 \\
    &=_{r'\to\infty} \cosh(r)I_2
\end{align}

In the limit of $r'\to\infty$, $\cosh(r')\to\infty$ and thus only $\cosh(r)I_2$ remains.

For the second use case, we can evaluate
\begin{align}
    \Gamma_{23} &= \sinh(r)\Lambda \mathcal A_{21} \Lambda \Xi^{-1} \sinh(r')  \Lambda \mathcal A_{12} \\
    &\approx \sinh(r)\Lambda \mathcal A_{21} \Lambda \frac{\sinh(r')}{\cosh(r')}   \Lambda \mathcal A_{12} \\
    &\approx \sinh(r) \tanh(r') \Lambda \mathcal A_{21}A_{12} = \sinh(r) \tanh(r') \Lambda (-\sin^2\theta) \\
    &=_{r'\to\infty} -\sinh(r) \sin^2\theta \Lambda
\end{align}

All the $\Gamma_{kl}$ may be evaluated in a similar manner. In order to simplify and make more compact the final expression of $\Gamma_{\mathcal T}$ in the main body text, we elected to set $r'' = r$, such that $u_{r''} = u_r$ and $v_{r''}=v_r$  - this corresponds to the case where both surviving TMSV states have the same squeezing.

Separately, the covariance matrix can be evaluated by applying two iterations of $\mathcal A$ as per Fig.~\ref{fig:process-example}. This corresponds to evaluating the covariance matrix of the state given by $\trace_E[\mathcal A^{(ES_2)}\circ \mathcal A^{(ES_1)}(\rho_0^{(E)} \otimes \Phi^{(S_1L_1)}(r)\otimes \Phi^{(S_2L_2)}  (r))]$, (where we explicitly denote the subsystems they act on for clarity). Doing so, we obtain a covariance matrix of the same form. This verifies that the result of our link product indeed corresponds to the covariance of the Choi matrix of the process pictured in Fig.~\ref{fig:process-example}a.

\section{Example - Interacting a comb with a channel}
\label{appendix: comb-channel-interact}

\begin{figure}[tp]
    \centering
    \includegraphics[width=0.95\linewidth]{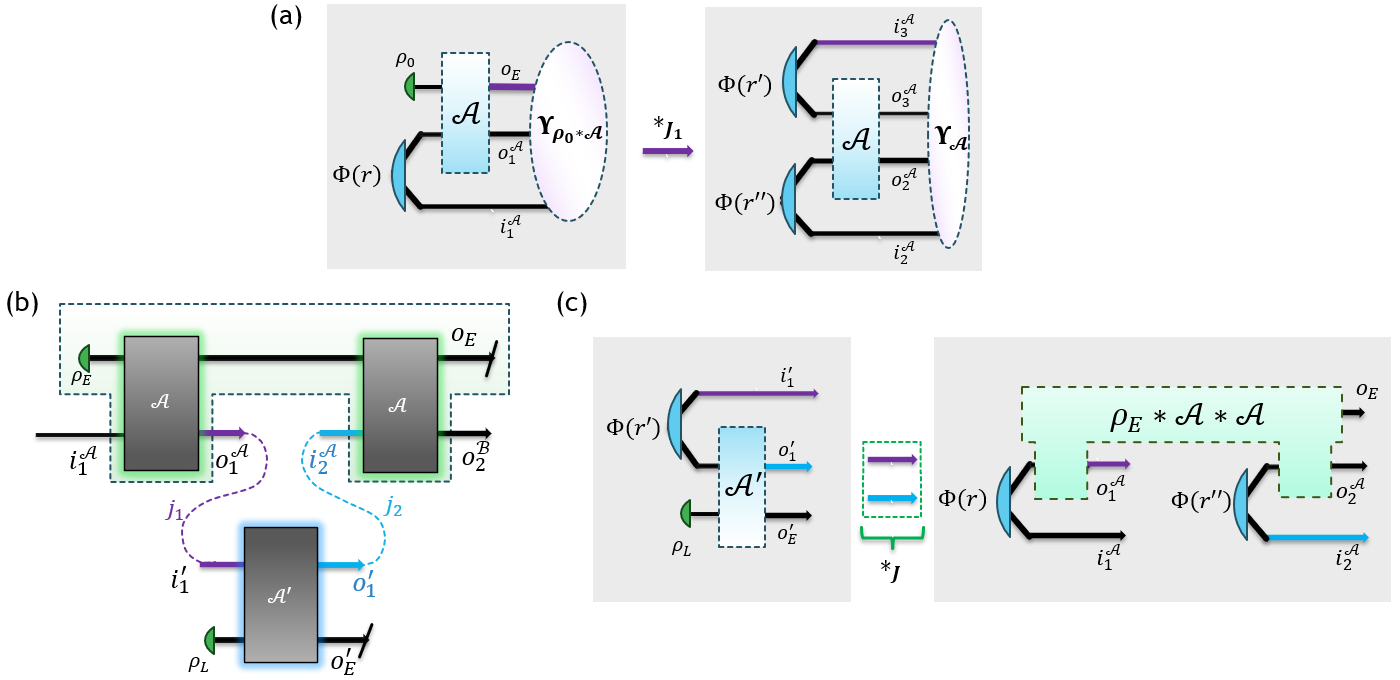}
    \caption{(a) \textbf{Building non-Markovianity.} In this section, we expound on how to evaluate the stitching of $o_E$ to $i_3^{\mathcal A}$. The mode labels and TMSV squeezing dependences are illustrated in the figure. (b) \textbf{Interacting a comb with a channel.} Here, we link the process from the previous example to a multimode channel via two mode-stitches, $j_1 = (o_1^{\mathcal A},i_1')$ and $j_2= (o_1',i_2^{\mathcal A})$. As per all the previous examples, for all information to pass through the join, $r',r''\to\infty$, corresponding to $I_J = \set{i_1', i_2^{\mathcal A}}$. The Choi states of the circuit fragments involved are given in (c). }
    \label{fig:interaction-example-channel}
\end{figure}

In this example, we demonstrate how two processes may be connected using the link product algorithm. We begin with the previous example with two connected beam splitters. Consider now that, as per Fig.~\ref{fig:interaction-example-channel}b, we wish to connect a beam splitter (i.e., a single-timestep process) $\mathcal A'$ with transmittance angle $\alpha$ (with input $\rho_L$) i.e. transmittivity $\cos^2\alpha$, to the process we just evaluated in the previous section. We maintain the same mode labels as the previous section for its circuit fragment in this example, except $o_3^{\mathcal A} = o_E$. For the second circuit fragment, $\rho_L \star \mathcal A'$, we will use the mode labels $o_E', o_1', i_1'$. For both these fragments, $\rho_0 = \rho_L = \rho_E$ are taken to be the vacuum state such that its covariance matrix is $I_2$. This circuit stitching is now {\it bidirectional}, with $J = \{(o_1^{\mathcal A}, i_1'), (o_1',i_2^{\mathcal A})\}$.

We start with the resulting covariance matrix of the beam splitter-beam splitter process calculated in the previous section, given by $\tilde\Gamma_\text{out}$ in (\ref{eqn: bs-bs-covmatrix-env}). In this example, we also elect to also explicitly reorder the covariance matrices (as would have been done within \texttt{\pcm}) for pedagogical clarity. We reorder $\Gamma_{\rho_E\star\mathcal A\star \mathcal A}$ to correspond to mode order $(o_E,i_1^{\mathcal A},o_2^{\mathcal A},i_2^{\mathcal A},o_1^{\mathcal A})$:

\begin{align}
\Gamma_{\rho_E\star\mathcal A\star \mathcal A} = 
\left( \begin{array}{ccc|cc}
\Gamma_{00} & \Gamma_{02} & \Gamma_{03} & \Gamma_{04} & \Gamma_{01} \\
\Gamma_{02}^\dagger & \Gamma_{22} & \Gamma_{23} & \mathbf 0 & \Gamma_{12}^\dagger \\
\Gamma_{03}^\dagger &\Gamma_{23}^{\dagger} & \Gamma_{33} & \Gamma_{34} & \Gamma_{13}^\dagger \\
\hline
\Gamma_{04}^\dagger & \mathbf 0 & \Gamma_{34}^\dagger & \Gamma_{44} & \mathbf 0 \\
\Gamma_{01}^\dagger & \Gamma_{12} & \Gamma_{13} & \mathbf 0 & \Gamma_{11}
\end{array} \right )
\end{align}

where the form of each term is given by the previous section. The lines represent the division of the covariance matrix according to the modes participating in the join. The submatrices of modes that participate in the join are in the bottom-right.

For the second process, the single beam splitter we wish to connect, we can use the form of $\Gamma_{\rho_0 \star \mathcal A}$ from (\ref{eqn: vac-tmsv-bs}). We denote the subdivision to yield the covariance matrix for our second fragment, $\rho_L \star \mathcal A'$, [following mode order $(o_E',o_1',i_1')$]:

\begin{align}
    \Gamma_{\rho_L\star \mathcal A'} 
    &= \left( \begin{array}{c|cc}
\cosh(r') \mathcal A_{21}'^\dagger \mathcal A_{21}' + \mathcal A_{11}'^\dagger\mathcal A_{11}' & \cosh(r') \mathcal A_{21}'^\dagger \mathcal A_{22}' + \mathcal A_{11}^\dagger \mathcal A_{12}' & \sinh(r') \mathcal A_{21}'^\dagger\Lambda \\
\hline
\cosh(r') \mathcal A_{22}'^\dagger \mathcal A_{21}' + \mathcal A_{12}'^\dagger \mathcal A_{11}' & \cosh(r') \mathcal A_{22}'^\dagger\mathcal A_{22}' + \mathcal A_{12}'^\dagger \mathcal A_{12}' & \sinh(r') \mathcal A_{22}'^\dagger\Lambda \\
\sinh(r') \Lambda \mathcal A_{21}' & \sinh(r') \Lambda \mathcal A_{22}' & \cosh(r') I_2
\end{array} \right) = \left( \begin{array}{c|cc}
\gamma_{11} & \gamma_{12} & \gamma_{13} \\
\hline
\gamma_{12}^\dagger & \gamma_{22} & \gamma_{23} \\
\gamma_{13}^\dagger & \gamma_{23}^\dagger & \gamma_{33} \\
\end{array} \right).
\end{align}

We use $\gamma_{ij}$ to simplify the notation in the further parts of this section. We also note that the order of the sub-matrices within the bottom-right block matrices of $\Gamma_{\rho_E\star \mathcal A \star \mathcal A}$ and $\Gamma_{\rho_L\star \mathcal A'}$ are arranged according to their stitch-pairs: for the penultimate diagonal elements in their respective matrices, $\Gamma_{44}$ and $\gamma_{22}$, the mode corresponding to $\Gamma_{44}$ ($i_2^{\mathcal A}$) is to be joined to the mode corresponding to $\gamma_{22}$ ($o_1'$). Similarly, for the last entries, $\Gamma_{11}$ (corresponding to mode $o_1^{\mathcal A}$) and $\gamma_{33}$ (corresponding to mode $i_1'$), their corresponding modes are to be joined to each other.

Using \texttt{\pcm}($\Gamma_{\rho_L\star \mathcal A'} \oplus \Gamma_{\rho_E\star\mathcal A\star \mathcal A}, J=\set{(o_1^{\mathcal A},i_1'),(o_1',i_2^{\mathcal A})}$), we obtain the covariance matrix of the form (with $\Lambda = \sigma_z$):

\begin{align}
\gamma_\text{tot} = 
\left( \begin{array}{cccc|cc}
\gamma_{11} & \mathbf 0 & \mathbf 0 & \mathbf 0 & -\gamma_{12}\Lambda & -\gamma_{13}\Lambda \\
\mathbf 0 & \Gamma_{00} & \Gamma_{02} & \Gamma_{03} & \Gamma_{04} & \Gamma_{01} \\
\mathbf 0 & \Gamma_{02}^\dagger & \Gamma_{22} & \Gamma_{23} & \mathbf 0 & \Gamma_{12}^\dagger \\
\mathbf 0 & \Gamma_{03}^\dagger &\Gamma_{23}^{\dagger} & \Gamma_{33} & \Gamma_{34} & \Gamma_{13}^\dagger \\
\hline
-\Lambda\gamma_{12}^\dagger & \Gamma_{04}^\dagger & \mathbf 0 & \Gamma_{34}^\dagger & \Gamma_{44} + \Lambda \gamma_{22} \Lambda & \Lambda \gamma_{23} \Lambda  \\
-\Lambda\gamma_{13}^\dagger & \Gamma_{01}^\dagger & \Gamma_{12} & \Gamma_{13} & \Lambda \gamma_{23}^\dagger \Lambda  & \Gamma_{11} + \Lambda \gamma_{33} \Lambda 
\end{array} \right )
\end{align}

Explicitly, this ordering follows the mode order $(o_E', o_E, i_1^{\mathcal A},o_2^{\mathcal A}, o_1'/i_2^{\mathcal A}, o_1^{\mathcal A}/i_1')$ - where we note the stitched modes together. We then follow the iterations of the link product algorithm. In the first iteration, for \texttt{\smc}, the matrix is subdivided as 

\begin{align}
\gamma_\text{tot} &= 
\left( \begin{array}{ccccc|c}
\gamma_{11} & \mathbf 0 & \mathbf 0 & \mathbf 0 & -\gamma_{12}\Lambda & -\gamma_{13}\Lambda \\
\mathbf 0 & \Gamma_{00} & \Gamma_{02} & \Gamma_{03} & \Gamma_{04} & \Gamma_{01} \\
\mathbf 0 & \Gamma_{02}^\dagger & \Gamma_{22} & \Gamma_{23} & \mathbf 0 & \Gamma_{12}^\dagger \\
\mathbf 0 & \Gamma_{03}^\dagger &\Gamma_{23}^{\dagger} & \Gamma_{33} & \Gamma_{34} & \Gamma_{13}^\dagger \\
-\Lambda\gamma_{12}^\dagger & \Gamma_{04}^\dagger & \mathbf 0 & \Gamma_{34}^\dagger & \Gamma_{44} + \Lambda \gamma_{22} \Lambda & \Lambda \gamma_{23} \Lambda  \\
\hline
-\Lambda\gamma_{13}^\dagger & \Gamma_{01}^\dagger & \Gamma_{12} & \Gamma_{13} & \Lambda \gamma_{23}^\dagger \Lambda  & \Gamma_{11} + \Lambda \gamma_{33} \Lambda 
\end{array} \right ) = \left( \begin{array}{c|c}
        \gamma_{\bar X}^{(1)} & \gamma_{\bar X,y=1} \\
        \hline
        \gamma_{\bar X,y=1}^{\dagger} & \gamma_{y=1}
    \end{array} \right).
\end{align}

such that the output after this first iteration of \texttt{\smc} is given by $ \gamma_\text{res}(y=1) = \gamma_{\bar X^{(1)}} - \gamma_{\bar X,y=1} (\gamma_{y=1})^{-1} \gamma_{\bar X,y=1}^\dagger$. This corresponds to the output after the first mode-stitching of modes $(o_1^{\mathcal A},i_1')$.

This resulting covariance matrix, $\gamma_\text{res}(y=1)$, will now be used as the new $\gamma_\text{tot}$ in the second iteration. Therefore, this new covariance matrix will be subdivided as

\begin{align}
\gamma_\text{res}(y=1) &= 
\left( \begin{array}{cccc|c}
\gamma_{11} & \mathbf 0 & \mathbf 0 & \mathbf 0 & -\gamma_{12}\Lambda\\
\mathbf 0 & \Gamma_{00} & \Gamma_{02} & \Gamma_{03} & \Gamma_{04} \\
\mathbf 0 & \Gamma_{02}^\dagger & \Gamma_{22} & \Gamma_{23} & \mathbf 0 \\
\mathbf 0 & \Gamma_{03}^\dagger &\Gamma_{23}^{\dagger} & \Gamma_{33} & \Gamma_{34} \\
\hline
-\Lambda\gamma_{12}^\dagger & \Gamma_{04}^\dagger & \mathbf 0 & \Gamma_{34}^\dagger & \Gamma_{44} + \Lambda \gamma_{22} \Lambda  \\
\end{array} \right ) \\
&\ -
\left( \begin{array}{cccc|c}
    \gamma_{13}\Lambda\Xi_{(1)}^{-1}\Lambda \gamma_{13}^\dagger & -\gamma_{13}\Lambda\Xi_{(1)}^{-1}\Gamma_{01}^\dagger & -\gamma_{13}\Lambda\Xi_{(1)}^{-1}\Gamma_{12} & -\gamma_{13}\Lambda\Xi_{(1)}^{-1}\Gamma_{13} &  \gamma_{13}\Lambda\Xi_{(1)}^{-1}\Lambda \gamma_{23}^\dagger \Lambda \\
    -\Gamma_{01}\Xi_{(1)}^{-1}\Lambda \gamma_{13}^\dagger & \Gamma_{01}\Xi_{(1)}^{-1}\Gamma_{01}^\dagger & \Gamma_{01}\Xi_{(1)}^{-1}\Gamma_{12} & \Gamma_{01}\Xi_{(1)}^{-1}\Gamma_{13} &  \Gamma_{01}\Xi_{(1)}^{-1}\Lambda \gamma_{23}^\dagger \Lambda \\
    -\Gamma_{12}^\dagger\Xi_{(1)}^{-1}\Lambda \gamma_{13}^\dagger & \Gamma_{12}^\dagger\Xi_{(1)}^{-1}\Gamma_{01}^\dagger & \Gamma_{12}^\dagger\Xi_{(1)}^{-1}\Gamma_{12} & \Gamma_{12}^\dagger\Xi_{(1)}^{-1}\Gamma_{13} &  \Gamma_{12}^\dagger \Xi_{(1)}^{-1}\Lambda \gamma_{23}^\dagger \Lambda \\
    -\Gamma_{13}^\dagger\Xi_{(1)}^{-1}\Lambda \gamma_{13}^\dagger & \Gamma_{13}^\dagger\Xi_{(1)}^{-1}\Gamma_{01}^\dagger & \Gamma_{13}^\dagger\Xi_{(1)}^{-1}\Gamma_{12} & \Gamma_{13}^\dagger\Xi_{(1)}^{-1}\Gamma_{13} &  \Gamma_{13}^\dagger \Xi_{(1)}^{-1}\Lambda \gamma_{23}^\dagger \Lambda \\
    \hline
    -\Lambda\gamma_{23}\Lambda \Xi_{(1)}^{-1}\Lambda \gamma_{13}^\dagger & \Lambda\gamma_{23}\Lambda \Xi_{(1)}^{-1}\Gamma_{01}^\dagger & \Lambda\gamma_{23}\Lambda \Xi_{(1)}^{-1}\Gamma_{12} & \Lambda\gamma_{23}\Lambda \Xi_{(1)}^{-1}\Gamma_{13} &  \Lambda\gamma_{23}\Lambda  \Xi_{(1)}^{-1}\Lambda \gamma_{23}^\dagger \Lambda
    \end{array} \right) \\
    &= \left( \begin{array}{c|c}
        \gamma_{\bar X}^{(2)} & \gamma_{\bar X,y=2} \\
        \hline
        \gamma_{\bar X,y=2}^{\dagger} & \gamma_{y=2}
    \end{array} \right).
\end{align}

where $\Xi_{(1)}=\Gamma_{11} + \Lambda \gamma_{33} \Lambda$. Thus, the output after the second iteration of \texttt{\smc} will be given by $\gamma_\text{res}(y=2) = \gamma_{\bar X^{(2)}} - \gamma_{\bar X,y=2} (\gamma_{y=2})^{-1} \gamma_{\bar X,y=2}^\dagger$. This will correspond to the second mode-stitching of the modes $(o_1',i_2^{\mathcal A})$, and $\gamma_\text{res}(y=2)$ as outlined here will have a form corresponding to the qumode order $(o_E', o_E, i_1^{\mathcal A},o_2^{\mathcal A})$.

Evaluating the second iteration of \texttt{\smc} to find $\gamma_\text{res}(y=2)$, and discarding the environment subsystems $o_E, o_E'$ (corresponding to the first two rows and first two columns), we obtain the final covariance matrix:

\begin{align}
    \Gamma_\text{tot} &= \begin{pmatrix}
        \Gamma_{22}-\Gamma_{12}^\dagger\Xi_{(1)}^{-1}\Gamma_{12} & \Gamma_{23} - \Gamma_{12}^\dagger\Xi_{(1)}^{-1}\Gamma_{13} \\
        \Gamma_{23}^\dagger - \Gamma_{13}^\dagger\Xi_{(1)}^{-1}\Gamma_{12} & \Gamma_{33} - \Gamma_{13}^\dagger\Xi_{(1)}^{-1}\Gamma_{13}
    \end{pmatrix} \\
    &- \begin{pmatrix}
        \Gamma_{12}^\dagger \Xi_{(1)}^{-1}\Lambda \gamma_{23}^\dagger \Lambda \Xi_{(2)}^{-1} \Lambda\gamma_{23}\Lambda \Xi_{(1)}^{-1}\Gamma_{12} & -\Gamma_{12}^\dagger \Xi_{(1)}^{-1}\Lambda \gamma_{23}^\dagger \Lambda \Xi_{(2)}^{-1} (\Gamma_{34}^\dagger-\Lambda\gamma_{23}\Lambda \Xi_{(1)}^{-1}\Gamma_{13}) \\
        -(\Gamma_{34}-\Gamma_{13}^\dagger \Xi_{(1)}^{-1}\Lambda \gamma_{23}^\dagger \Lambda) \Xi_{(2)}^{-1} \Lambda\gamma_{23}\Lambda \Xi_{(1)}^{-1}\Gamma_{12} & (\Gamma_{34}-\Gamma_{13}^\dagger \Xi_{(1)}^{-1}\Lambda \gamma_{23}^\dagger \Lambda) \Xi_{(2)}^{-1} (\Gamma_{34}^\dagger-\Lambda\gamma_{23}\Lambda \Xi_{(1)}^{-1}\Gamma_{13})
    \end{pmatrix}
\end{align}

where $\Xi_{(2)} = \Gamma_{44} + \Lambda \gamma_{22} \Lambda -\Lambda\gamma_{23}\Lambda  \Xi_{(1)}^{-1}\Lambda \gamma_{23}^\dagger \Lambda $.

To evaluate this final matrix, we have, for $\mathcal A'$ with transmission angle $\alpha$ such that for $\mathcal A'_{ij}$ in the terms $\gamma_{kl}$,

\begin{align}
    T_{\mathcal A'} =
        \displaystyle \left[\begin{array}{cc|cc}
        \cos{\left(\alpha \right)} & 0 & - e^{i \phi} \sin{\left(\alpha \right)} & 0\\
        0 & \cos{\left(\alpha \right)} & 0 & - e^{- i \phi} \sin{\left(\alpha \right)}\\
        \hline
        e^{- i \phi} \sin{\left(\alpha \right)} & 0 & \cos{\left(\alpha \right)} & 0\\
        0 & e^{i \phi} \sin{\left(\alpha \right)} & 0 & \cos{\left(\alpha \right)}
        \end{array}\right]
        = \left(\begin{array}{c|c}
            \mathcal A'_{11} &  \mathcal A'_{12} \\
            \hline
             \mathcal A'_{21} &  \mathcal A'_{22}
             \end{array}\right),
\end{align}

in order to obtain:

\begin{align}
    \Xi_{(1)} &= (\cosh r \cos^2\theta + \sin^2\theta + \cosh r') I_2 \\
    \Xi_{(2)} &\approx ( \cos^2\alpha [\cosh r\cos^2\theta + \sin^2\theta] + \cosh r'' + \sin^2\alpha) I_2
    \end{align}

To obtain $\Xi_{(2)}$, we evaluated $\Lambda \gamma_{22} \Lambda -\Lambda\gamma_{23}\Lambda  \Xi_{(1)}^{-1}\Lambda \gamma_{23}^\dagger \Lambda$ by apply the approximation of $\Xi_{(1)}^{-1}$ in the limit $r'\to\infty$. Under $r'\to\infty$, we have the approximation that
\begin{align}
    \Xi_{(1)}^{-1} &\approx \frac{1}{\cosh(r')} \left[1-\frac{\cosh r \cos^2\theta + \sin^2\theta}{\cosh(r')}\right]I_2,
\end{align}

and under $r''\to\infty$, we have the approximation that
\begin{align}
    \Xi_{(2)}^{-1} &\approx \frac{1}{\cosh r''} \left[1 - \frac{\cos^2\alpha [\cosh r\cos^2\theta + \sin^2\theta] + \sin^2\alpha}{\cosh r''} \right] I_2
\end{align}

Using these approximations, we can show that 

\begin{align}
    \lim_{r',r''\to\infty} \Gamma_\text{tot} =
    \begin{pmatrix}
        \cosh r I_2 & \sinh r (\cos^2\theta \cos\alpha -\sin^2\theta ) \Lambda \\
        \sinh r (\cos^2\theta \cos\alpha -\sin^2\theta ) \Lambda  & D I_2
    \end{pmatrix}
    \label{eqn: process-process-output}
\end{align}

corresponding to qumode order $(i_1^{\mathcal A}, o_2^{\mathcal A})$, where $D$ is given by

\begin{align*}
    D &= \cosh r\sin^4\theta + \sin^2\theta\cos^2\theta + \cosh r \cos^2\alpha \cos^4\theta + \cos^2\theta\sin^2\theta\cos^2\alpha + \cos^2\theta\sin^2\alpha \\
    &+ 2\cos^2\theta\sin^2\theta\cos\alpha - 2\cosh r \cos^2\theta\sin^2\theta\cos\alpha.
\end{align*}

Moreover, by directly evaluating the process by evaluating $(\mathcal A^{(ES_1)} \circ \mathcal A'^{(S_1 L)} \circ \mathcal A^{(ES_1)})[\rho_0^{(E)}\otimes \Phi^{(S_1S_2)}(r) \otimes \rho_{0}^{(L)}]$, (where we explicitly denote the subsystems they act on) we obtain the final form of $\Gamma_\text{tot}$ above, as expected. This verifies that through the link product algorithm, we do indeed obtain the correct covariance matrix.

In the case where $\mathcal A'$ is taken to be the identity channel, the result simplifies to the case of $(\mathcal A\otimes I_2)\circ (\mathcal A\otimes I_2)[\rho_\text{vac} \otimes \Phi(r)]$:

\begin{align}
    \Gamma_\text{out} = \begin{pmatrix}
        [Y^2\cosh r+4\sin^2\theta\cos^2\theta]I_2 & Y\sinh r\Lambda \\
        Y\sinh r\Lambda & \cosh rI_2
    \end{pmatrix}
\end{align}

where $Y=\cos^2 \theta- \sin^2\theta$, which is what we expect - directly calculating $(\mathcal A\otimes I_2)\circ (\mathcal A\otimes I_2)[\rho_\text{vac} \otimes \Phi(r)]$, we get the same covariance matrix.

\subsection{Different Ordering}
What if we had used a different order of $J$-stitching? Previously, we evaluated the mode-stitching of $(o_1^{\mathcal A},i_1')$ first, followed by that of $(o_1',i_2^{\mathcal A})$. If instead, we wanted to evaluate the J-stitching according to $J=\set{(o_1',i_2^{\mathcal A}), (o_1^{\mathcal A},i_1')}$ we will obtain from \texttt{\pcm} a covariance matrix of the form:

\begin{align}
\gamma_\text{tot} = 
\left( \begin{array}{cccc|cc}
\gamma_{11} & \mathbf 0 & \mathbf 0 & \mathbf 0 & -\gamma_{13}\Lambda & -\gamma_{12}\Lambda \\
\mathbf 0 & \Gamma_{00} & \Gamma_{02} & \Gamma_{03} & \Gamma_{01} & \Gamma_{04} \\
\mathbf 0 & \Gamma_{02}^\dagger & \Gamma_{22} & \Gamma_{23} & \Gamma_{12}^\dagger & \mathbf 0 \\
\mathbf 0 & \Gamma_{03}^\dagger &\Gamma_{23}^{\dagger} & \Gamma_{33} & \Gamma_{13}^\dagger & \Gamma_{34} \\
\hline
-\Lambda\gamma_{13}^\dagger & \Gamma_{01}^\dagger & \Gamma_{12} & \Gamma_{13} & \Gamma_{11} + \Lambda \gamma_{33}\Lambda & \Lambda \gamma_{23}^\dagger \Lambda  \\
-\Lambda\gamma_{12}^\dagger & \Gamma_{04}^\dagger & \mathbf 0 & \Gamma_{34}^\dagger & \Lambda \gamma_{23} \Lambda  & \Gamma_{44} + \Lambda \gamma_{22} \Lambda 
\end{array} \right )
\end{align}

Explicitly, this ordering follows the mode order $(o_E', o_E, i_1^{\mathcal A},o_2^{\mathcal A}, o_1^{\mathcal A}/i_1', o_1'/i_2^{\mathcal A})$ instead. This corresponds to a switch between the last two modes in the order of modes. Applying \texttt{\smc} for the first loop, we resolve the stitching for the $o_1^{\mathcal A}/i_1'$ mode first to get

\begin{align}
\gamma_\text{res}(y=1) &= 
\left( \begin{array}{cccc|c}
\gamma_{11} & \mathbf 0 & \mathbf 0 & \mathbf 0 & -\gamma_{13}\Lambda  \\
\mathbf 0 & \Gamma_{00} & \Gamma_{02} & \Gamma_{03} & \Gamma_{01} \\
\mathbf 0 & \Gamma_{02}^\dagger & \Gamma_{22} & \Gamma_{23} & \Gamma_{12}^\dagger \\
\mathbf 0 & \Gamma_{03}^\dagger &\Gamma_{23}^{\dagger} & \Gamma_{33} & \Gamma_{13}^\dagger \\
\hline
-\Lambda\gamma_{13}^\dagger & \Gamma_{01}^\dagger & \Gamma_{12} & \Gamma_{13} & \Gamma_{11} + \Lambda \gamma_{33}\Lambda
\end{array} \right ) \\
&- \left( \begin{array}{cccc|c}
\gamma_{12}\Lambda \Xi_{(1)}^{-1} \Lambda\gamma_{12} & -\gamma_{12}\Lambda \Xi_{(1)}^{-1} \Gamma_{04}^\dagger & \mathbf 0 & -\gamma_{12}\Lambda \Xi_{(1)}^{-1} \Gamma_{34}^\dagger & -\gamma_{12}\Lambda \Xi_{(1)}^{-1} \Lambda\gamma_{23}\Lambda \\
-\Gamma_{04} \Xi_{(1)}^{-1} \Lambda\gamma_{12} & \Gamma_{04} \Xi_{(1)}^{-1} \Gamma_{04}^\dagger & \mathbf 0 & \Gamma_{04} \Xi_{(1)}^{-1} \Gamma_{34}^\dagger & \Gamma_{04} \Xi_{(1)}^{-1} \Lambda\gamma_{23}\Lambda \\
\mathbf 0 & \mathbf 0 & \mathbf 0 & \mathbf 0 & \mathbf 0 \\
-\Gamma_{34} \Xi_{(1)}^{-1} \Lambda\gamma_{12} & \Gamma_{34} \Xi_{(1)}^{-1} \Gamma_{04}^\dagger & \mathbf 0 & \Gamma_{34} \Xi_{(1)}^{-1} \Gamma_{34}^\dagger & \Gamma_{34} \Xi_{(1)}^{-1} \Lambda\gamma_{23}\Lambda \\
\hline
-\Lambda\gamma_{23}^\dagger\Lambda \Xi_{(1)}^{-1} \Lambda\gamma_{12} & \Lambda\gamma_{23}^\dagger\Lambda \Xi_{(1)}^{-1} \Gamma_{04}^\dagger & \mathbf 0 & \Lambda\gamma_{23}^\dagger\Lambda \Xi_{(1)}^{-1} \Gamma_{34}^\dagger & \Lambda\gamma_{23}^\dagger\Lambda \Xi_{(1)}^{-1} \Lambda\gamma_{23}\Lambda \\
\end{array} \right )
\end{align}

where here, $\Xi_{(1)} = \Gamma_{44} + \Lambda \gamma_{22} \Lambda $ instead. Evaluating the form of $\gamma_\text{res}(y=2)$, now corresponding to resolving the stitch of $o_1'/i_2^{\mathcal A}$, and tracing out the environment subsystems, the final covariance matrix:

\begin{align}
    \Gamma_\text{tot} &= 
    \begin{pmatrix}
        \Gamma_{22} & \Gamma_{23} \\
        \Gamma_{23}^\dagger & \Gamma_{33}  - \Gamma_{34} \Xi_{(1)}^{-1} \Gamma_{34}^\dagger
    \end{pmatrix} \\
    &- \begin{pmatrix}
        \Gamma_{12}^\dagger \Xi_{(2)}^{-1} \Gamma_{12} & \Gamma_{12}^\dagger \Xi_{(2)}^{-1} (\Gamma_{13}-\Lambda \gamma_{23}^\dagger \Lambda \Xi_{(1)}^{-1} \Gamma_{34}^\dagger)   \\
         (\Gamma_{13}^\dagger - \Gamma_{34}\Xi_{(1)}^{-1} \Lambda \gamma_{23} \Lambda) \Xi_{(2)}^{-1} \Gamma_{12} & (\Gamma_{13}^\dagger - \Gamma_{34}\Xi_{(1)}^{-1} \Lambda \gamma_{23} \Lambda) \Xi_{(2)}^{-1} (\Gamma_{13}-\Lambda \gamma_{23}^\dagger \Lambda \Xi_{(1)}^{-1} \Gamma_{34}^\dagger)
    \end{pmatrix}
\end{align}

where $\Xi_{(2)} = \Gamma_{11}+\Lambda \gamma_{33}\Lambda - \Lambda\gamma_{23}^\dagger\Lambda \Xi_{(1)}^{-1} \Lambda\gamma_{23}\Lambda$ instead. Taking the approximation $r'' \to\infty$, we get

\begin{align}
    \Xi_{(1)}^{-1} &\approx \frac{1}{\cosh(r'')} \left[1-\frac{\cosh r' \cos^2\alpha + \sin^2\alpha}{\cosh(r'')}\right]I_2,
\end{align}

and under the approximation of $r'\to\infty$:

\begin{align}
    \Xi_{(2)}^{-1} &\approx \frac{1}{\cosh(r')} \left[1-\frac{\cosh r \cos^2\theta + \sin^2\theta}{\cosh(r')}\right]I_2,
\end{align}

Evaluating all the terms, we see that

\begin{align}
    \Gamma_{22} - \Gamma_{12}^\dagger \Xi_{(2)}^{-1} \Gamma_{12} &\approx_{r',r'' \to\infty} \Gamma_{22} = \cosh r I_2 \\
    \Gamma_{23} - \Gamma_{12}^\dagger \Xi_{(2)}^{-1} (\Gamma_{13}-\Lambda \gamma_{23}^\dagger \Lambda \Xi_{(1)}^{-1} \Gamma_{34}^\dagger) &\approx_{r',r'' \to\infty} \Gamma_{23} + \Gamma_{12}^\dagger \Xi_{(2)}^{-1} \Lambda \gamma_{23}^\dagger \Lambda \Xi_{(1)}^{-1} \Gamma_{34}^\dagger \\
    & = \sinh r (\cos^2\theta \cos\alpha - \sin^2\theta)\Lambda \\
    \Gamma_{33}  - \Gamma_{34} \Xi_{(1)}^{-1} \Gamma_{34}^\dagger - (\Gamma_{13}^\dagger - \Gamma_{34}\Xi_{(1)}^{-1} \Lambda \gamma_{23} \Lambda) \Xi_{(2)}^{-1} (\Gamma_{13}-\Lambda \gamma_{23}^\dagger \Lambda \Xi_{(1)}^{-1} \Gamma_{34}^\dagger) &\approx_{r',r'' \to\infty} DI_2
\end{align}

i.e., we will get the exact same result even though we had done the ordering differently - the order of our stitching does not matter, and the order elected in the algorithm is purely for bookkeeping.

\section{Example - Interacting two combs}
In this example, we use the link product to stitch two copies of the process evaluated in Appendix \ref{appendix:bs-bs-process-join} as per Fig.~\ref{fig:interaction-example-combs}: this evaluates the interaction between two combs. In this case, both $\rho_E$ and $\rho_L$ are taken to be the vacuum state. For the first copy of this process, which we label $\mathcal K$, we restate our previous result for the covariance matrix this process (corresponding to qumode order $(o_1^{\mathcal K}, i_1^{\mathcal K}, o_2^{\mathcal K}, i_2^{\mathcal K})$):

\begin{align}
    \Gamma_{\Upsilon_{\mathcal K}} &= 
\begin{pmatrix}
\Gamma_{11} & \Gamma_{12} & \Gamma_{13} & \mathbf 0 \\
\Gamma_{12}^{\dagger} & \Gamma_{22} & \Gamma_{23} & \mathbf 0 \\
\Gamma_{13}^\dagger & \Gamma_{23}^\dagger & \Gamma_{33} & \Gamma_{34} \\
\mathbf 0 & \mathbf 0 & \Gamma_{34}^\dagger & \Gamma_{44}
\end{pmatrix} \\
&= \begin{pmatrix}
(\cosh r \cos^2\theta + \sin^2\theta) I_2 & \sinh r \cos\theta \Lambda & \sin^2\theta \cos\theta (1-\cosh r) I_2 &\mathbf 0  \\
\sinh r \cos\theta \Lambda & \cosh r I_2 & -\sinh r \sin^2 \theta \Lambda & \mathbf 0 \\
\sin^2\theta \cos\theta (1-\cosh r) I_2 & -\sinh r  \sin^2 \theta \Lambda & \cosh r' \cos^2 \theta I_2 + \sin^2 \theta (\cosh r \sin^2 \theta + \cos^2\theta) I_2 & \sinh r' \cos\theta \Lambda \\
\mathbf 0 & \mathbf 0 & \sinh r' \cos\theta \Lambda & \cosh r' I_2
\end{pmatrix}
\end{align}

For the second circuit fragment $\mathcal L$, we can restate the same result but with replaced variables - we change $\theta\to\alpha$ and $r\to s$. The covariance matrix $(o_1^{\mathcal L}, i_1^{\mathcal L}, o_2^{\mathcal L}, i_2^{\mathcal L})$:

\begin{align}
    \Gamma_{\Upsilon_{\mathcal L}} &= 
\begin{pmatrix}
\hat\Gamma_{11} & \hat\Gamma_{12} & \hat\Gamma_{13} & \mathbf 0 \\
\hat\Gamma_{12}^{\dagger} & \hat\Gamma_{22} & \hat\Gamma_{23} & \mathbf 0 \\
\hat\Gamma_{13}^\dagger & \hat\Gamma_{23}^\dagger & \hat\Gamma_{33} & \hat\Gamma_{34} \\
\mathbf 0 & \mathbf 0 & \hat\Gamma_{34}^\dagger & \hat\Gamma_{44}
\end{pmatrix} \\
&= \begin{pmatrix}
(\cosh s \cos^2\alpha + \sin^2\alpha) I_2 & \sinh s \cos\alpha \Lambda & \sin^2\alpha \cos\alpha (1-\cosh s) I_2 & \mathbf 0 \\
\sinh s \cos\alpha \Lambda & \cosh s I_2 & -\sinh s \sin^2 \alpha \Lambda & \mathbf 0 \\
\sin^2\alpha \cos\alpha (1-\cosh s) I_2 & -\sinh s  \sin^2 \alpha \Lambda & \cosh s' \cos^2 \alpha I_2 + \sin^2 \alpha (\cosh s \sin^2 \alpha + \cos^2\alpha) I_2 & \sinh s' \cos\alpha \Lambda \\
\mathbf 0 & \mathbf 0 & \sinh s' \cos\alpha \Lambda & \cosh s' I_2
\end{pmatrix}
\end{align}

\label{appendix: two-combs-interact}
In this $J$-stitching, we have three mode-stitches with $J=\set{(o_1^{\mathcal K}, i_1^{\mathcal L}), (o_1^{\mathcal L}, i_2^{\mathcal K}), (o_2^{\mathcal K}, i_2^{\mathcal L})} = \set{j_1, j_2, j_3}$. The stitch $j_1$ will correspond with taking the limit $s\to\infty$, $j_2$ with $r'\to\infty$ and $j_3$ with $s'\to\infty$. We directly apply the algorithm for three loops (one for each of the 3 stitches) and obtain

\begin{align}
    \Gamma_{\mathcal C} = \lim_{s,r',s'\to\infty} \displaystyle \left[\begin{matrix} \Gamma_{\mathcal C}^{(11)} & \Gamma_{\mathcal C}^{(12)} \\ \Gamma_{\mathcal C}^{(12)T} & \Gamma_{\mathcal C}^{(22)} \end{matrix}\right]
\end{align}

corresponding to mode order ($i_1^{\mathcal K}, o_2^{\mathcal L}$) with $2\times2$ block matrices of the form

\begin{align}
    \Gamma_{\mathcal C}^{(11)} &= -\Gamma_{12}^\dagger\Lambda \Xi_{(1)}^{-1}\Lambda\Gamma_{12} - \Gamma_{12}^\dagger\Lambda \Xi_{(1)}^{-1} \hat\Gamma_{12}^\dagger \Xi_{(2)}^{-1} \hat\Gamma_{12} \Xi_{(1)}^{-1}\Lambda\Gamma_{12} + \Gamma_{22} \nonumber \\
    &- \left(\Gamma_{12}^\dagger\Lambda \Xi_{(1)}^{-1} \Lambda \Gamma_{13} \Lambda -\Gamma_{12}^\dagger\Lambda \Xi_{(1)}^{-1} \hat\Gamma_{12}^\dagger \Xi_{(2)}^{-1} \left(\Lambda \Gamma_{34}^\dagger \Lambda + \hat\Gamma_{12} \Xi_{(1)}^{-1} \Lambda \Gamma_{13} \Lambda \right) - \Gamma_{23} \Lambda\right) \Xi_{(3)}^{-1} \nonumber \\
    &\left(\Lambda \Gamma_{13}^\dagger \Lambda \Xi_{(1)}^{-1} \Lambda\Gamma_{12} - \Lambda \Gamma_{23}^\dagger - \left(-\Lambda \Gamma_{13}^\dagger \Lambda \Xi_{(1)}^{-1} \hat\Gamma_{12}^\dagger + \Lambda \Gamma_{34} \Lambda\right) \Xi_{(2)}^{-1} \hat\Gamma_{12} \Xi_{(1)}^{-1} \Lambda\Gamma_{12}\right) \\
    \Gamma_{\mathcal C}^{(22)} &= - \hat\Gamma_{23}^\dagger \Xi_{(1)}^{-1} \hat\Gamma_{23} + \hat\Gamma_{33} - \left(\hat\Gamma_{13}^\dagger - \hat\Gamma_{23}^\dagger \Xi_{(1)}^{-1} \hat\Gamma_{12}^\dagger\right) \Xi_{(2)}^{-1} \left(- \hat\Gamma_{12} \Xi_{(1)}^{-1} \hat\Gamma_{23} + \hat\Gamma_{13}\right) \nonumber \\
    &- \left(-\hat\Gamma_{23}^\dagger \Xi_{(1)}^{-1} \Lambda \Gamma_{13} \Lambda + \hat\Gamma_{34}^\dagger - \left(\hat\Gamma_{13}^\dagger - \hat\Gamma_{23}^\dagger \Xi_{(1)}^{-1} \hat\Gamma_{12}^\dagger\right) \Xi_{(2)}^{-1} \left(\Lambda \Gamma_{34}^\dagger \Lambda - \hat\Gamma_{12} \Xi_{(1)}^{-1} \Lambda \Gamma_{13} \Lambda\right)\right) \Xi_{(3)}^{-1} \nonumber \\
    &\left(-\Lambda \Gamma_{13}^\dagger \Lambda \Xi_{(1)}^{-1} \hat\Gamma_{23} + \hat\Gamma_{34} - \left(-\Lambda \Gamma_{13}^\dagger \Lambda \Xi_{(1)}^{-1} \hat\Gamma_{12}^\dagger + \Lambda \Gamma_{34} \Lambda\right) \Xi_{(2)}^{-1} \left(- \hat\Gamma_{12} \Xi_{(1)}^{-1} \hat\Gamma_{23} + \hat\Gamma_{13}\right)\right) \\
    \Gamma_{\mathcal C}^{(12)} &= -\Gamma_{12}^\dagger\Lambda \Xi_{(1)}^{-1} \hat\Gamma_{12}^\dagger \Xi_{(2)}^{-1} \left(- \hat\Gamma_{12} \Xi_{(1)}^{-1} \hat\Gamma_{23} + \hat\Gamma_{13}\right) + \Gamma_{12}^\dagger\Lambda \Xi_{(1)}^{-1} \hat\Gamma_{23} \nonumber \\ 
    &- \left(\Gamma_{12}^\dagger\Lambda \Xi_{(1)}^{-1} \Lambda \Gamma_{13} \Lambda -\Gamma_{12}^\dagger\Lambda \Xi_{(1)}^{-1} \hat\Gamma_{12}^\dagger \Xi_{(2)}^{-1} \left(\Lambda \Gamma_{34}^\dagger \Lambda - \hat\Gamma_{12} \Xi_{(1)}^{-1} \Lambda \Gamma_{13} \Lambda\right) - \Gamma_{23} \Lambda\right) \Xi_{(3)}^{-1} \nonumber \\
    &\left(-\Lambda \Gamma_{13}^\dagger \Lambda \Xi_{(1)}^{-1} \hat\Gamma_{23} + \hat\Gamma_{34} - \left(-\Lambda \Gamma_{13}^\dagger \Lambda \Xi_{(1)}^{-1} \hat\Gamma_{12}^\dagger + \Lambda \Gamma_{34} \Lambda\right) \Xi_{(2)}^{-1} \left(- \hat\Gamma_{12} \Xi_{(1)}^{-1} \hat\Gamma_{23} + \hat\Gamma_{13}\right)\right),
\end{align}

with the inverse terms of

\begin{align}
\Xi_{(1)}^{-1} &= \left(\Lambda\Gamma_{11}\Lambda + \hat\Gamma_{22}\right)^{-1}  \\
&\approx \frac{1}{\cosh s} \left[I_2 - \frac{\cosh r \cos^2\theta + \sin^2 \theta}{\cosh s} I_2\right] \\
\Xi_{(2)}^{-1} &= \left(\Lambda\Gamma_{44}\Lambda + \hat\Gamma_{11} - \hat\Gamma_{12} \Xi_{(1)}^{-1} \hat\Gamma_{12}^\dagger\right)^{-1} \\
& \approx \frac{1}{\cosh r'} \left[I_2 - \frac{\sin^2\alpha + \cos^2\alpha (\cosh r \cos^2\theta + \sin^2\theta)}{\cosh r'} I_2 \right] \\
\Xi_{(3)}^{-1} &= (- \Lambda \Gamma_{13}^\dagger \Lambda \Xi_{(1)}^{-1} \Lambda \Gamma_{13} \Lambda + \Lambda \Gamma_{13}^\dagger \Lambda \Xi_{(1)}^{-1} \hat\Gamma_{12}^\dagger \Xi_{(2)}^{-1} \Lambda \Gamma_{34}^\dagger \Lambda - \Lambda \Gamma_{13}^\dagger \Lambda \Xi_{(1)}^{-1} \hat\Gamma_{12}^\dagger \Xi_{(2)}^{-1} \hat\Gamma_{12} \Xi_{(1)}^{-1} \Lambda \Gamma_{13} \Lambda + \Lambda \Gamma_{33} \Lambda \nonumber \\ 
&- \Lambda \Gamma_{34} \Lambda \Xi_{(2)}^{-1} \Lambda \Gamma_{34}^\dagger \Lambda + \Lambda \Gamma_{34} \Lambda \Xi_{(2)}^{-1} \hat\Gamma_{12} \Xi_{(1)}^{-1} \Lambda \Gamma_{13} \Lambda + \hat\Gamma_{44})^{-1} \\
&= \left(\Lambda\Gamma_{33}\Lambda + \hat\Gamma_{44} + 2\Lambda \Gamma_{13}^\dagger \Lambda \Xi_{(1)}^{-1} \hat\Gamma_{12}^\dagger \Xi_{(2)}^{-1} \Lambda \Gamma_{34}^\dagger \Lambda - \Lambda \Gamma_{34} \Lambda \Xi_{(2)}^{-1} \Lambda \Gamma_{34}^\dagger \Lambda \right)^{-1} \\
&\approx \frac{1}{\cosh s'} \left[I_2 - \frac{s^2_\theta(u_r s^2_\theta + c^2_\theta)+2c_\alpha s^2_\theta c^2_\theta (1- u_r)+ c^2_\theta(s^2_\alpha + c^2_\alpha(u_r c^2_\theta + s^2_\theta))}{\cosh s'} I_2\right],
\end{align}

where we have used the shorthand of $c_\theta = \cos\theta, s_\theta =\sin\theta, u_r = \cosh r, v_r=\sinh r$ in the expression for $\Xi_{(3)}^{-1}$.

While the block matrices look unwieldy, calculations can be greatly simplified by considering and keeping only the terms that will survive taking the limits of $s,r', s' \to \infty$. This can be done by considering the variable dependencies present in $\Gamma_{kl}$ and $\hat\Gamma_{kl}$ in relation to the variable(s) that is (are) taken $\to\infty$ in the corresponding $\Xi_{(j)}^{-1}$ they are multiplied with: if the variables {\it do not} overlap, then the term will not survive. For example, in $\Gamma_{12}^\dagger \Lambda \Xi_{(1)}^{-1} \Lambda \Gamma_{12}$, $\Xi_{(1)}^{-1}$ corresponds to the first join, $j_1$, and thus corresponds to $s\to\infty$. However, $\Gamma_{12}$ only contains an $r$ dependence, and will thus not survive after taking the limit of $s\to\infty$.

From this, we simplify all the above expressions to

\begin{align}
    \Gamma_{\mathcal C}^{(11)} &= \Gamma_{22} \\
    \Gamma_{\mathcal C}^{(22)} &= \hat\Gamma_{23}^\dagger \Xi_{(1)}^{-1} \hat\Gamma_{23} + \hat \Gamma_{33} +2 [\hat \Gamma_{23}^\dagger \Xi_{(1)}^{-1} \Lambda \Gamma_{13} \Lambda - \frac{\hat \Gamma_{34}}{2}+ (\hat \Gamma_{13}-\hat\Gamma_{23}^\dagger \Xi_{(1)}^{-1}\hat\Gamma_{12}^\dagger)\Xi_{(2)}^{-1}\Lambda \Gamma_{34}^\dagger \Lambda] \Xi_{(3)}^{-1} \hat \Gamma_{34} \\
    \Gamma_{\mathcal C}^{(12)} &= \Gamma_{12}^\dagger\Lambda \Xi_{(1)}^{-1} \hat\Gamma_{23} + \left( \Gamma_{12}^\dagger\Lambda \Xi_{(1)}^{-1} \hat\Gamma_{12}^\dagger \Xi_{(2)}^{-1} (-\Lambda \Gamma_{34}^\dagger \Lambda) + \Gamma_{23} \Lambda\right) \Xi_{(3)}^{-1} \hat\Gamma_{34},
\end{align}

Substituting all the terms and taking the appropriate limits for $s,r', s' \to \infty$, we should obtain

\begin{align}
\Gamma_{\mathcal C}^{(11)} &= \cosh (r) I_2 \\
    \Gamma_{\mathcal C}^{(22)} &= [s_\alpha^4c_\theta^2+c_\alpha^2 s_\theta^4 + c_\alpha^4 c_\theta^4 + 2(c_\alpha s_\alpha^2 c_\theta s_\theta^2 - c_\alpha^3 c_\theta^2 s_\theta^2 - c_\alpha ^2 s_\alpha^2 c_\theta^3)] u_r I_2 \nonumber \\
    &+ c_\alpha^2 (c_\alpha^2 c_\theta^2 s_\theta^2 + s_\alpha^2 + c_\theta^2 s_\theta^2 + s_\alpha^2 c_\theta^2 +2 s_\alpha^2 c_\theta -2 s_\alpha^2 c_\theta s_\theta^2 + 2 c_\alpha s_\theta^2 c_\theta^2) I_2 \nonumber \\
    &+s_\alpha^2 (s_\alpha^2 s_\theta^2 - 2 c_\alpha c_\theta s_\theta^2) I_2 \\
    \Gamma_{\mathcal C}^{(12)} &= \left(-\sin^{2}{\left(\alpha \right)} \cos{\left(\theta \right)} + \cos^{2}{\left(\alpha \right)} \cos^2{\left(\theta \right)} - \sin^{2}{\left(\theta \right)} \cos{\left(\alpha \right)}\right) \sinh{\left(r \right)} \Lambda
\end{align}

where we have used the shorthand of $c_\theta = \cos\theta, s_\theta =\sin\theta, u_r = \cosh r, v_r=\sinh r$. For uniformity with how we have expressed all our other results in the main text, for its presentation in the main text we re-arranged the mode order to be $(o_2^{\mathcal L}, i_1^{\mathcal K})$ such that the covariance matrix has the form

\begin{align}
    \Gamma_{\mathcal C} = \lim_{s,r',s'\to\infty} \displaystyle \left[\begin{matrix} \Gamma_{\mathcal C}^{(22)} & \Gamma_{\mathcal C}^{(12)T} \\ \Gamma_{\mathcal C}^{(12)} & \Gamma_{\mathcal C}^{(11)} \end{matrix}\right].
\end{align}

From this general form, we can consider some simplified cases as a quick sanity check. 

For our first case, if we take $\alpha=0$, i.e., $\mathcal A'$ is the identity channel, this is equivalent to the case of $(\mathcal A\otimes I_2)\circ (\mathcal A\otimes I_2)[\rho_E \otimes \Phi(r)]$ where $\rho_E$ is the vacuum state:

\begin{align}
    \Gamma_{\mathcal C} = \begin{pmatrix}
        [Y^2\cosh r+4\sin^2\theta\cos^2\theta]I_2 & Y\sinh r\Lambda \\
        Y\sinh r\Lambda & \cosh rI_2
    \end{pmatrix}
\end{align}

where $Y=\cos^2 \theta- \sin^2\theta$. This will match precisely with the resulting covariance matrix from calculating  $(\mathcal A\otimes I_2)\circ (\mathcal A\otimes I_2)[\rho_E \otimes \Phi(r)]$ directly. This simplification is therefore what we would expect, and is the same simplified result as the previous example.

For our second case, we can let $\mathcal A' = \mathcal A$. The resulting covariance matrix is simplified significantly, and we obtain:

\begin{align}
    \Gamma_{\mathcal C}^{(11)} &= u_r I_2 \\
    \Gamma_{\mathcal C}^{(22)} &= [(4 c_\theta^2 s_\theta^4 + c_\theta^8 - 4c_\theta^5 s_\theta^2)u_r + c_\theta^2s_\theta^2(1+c_\theta^4 + 4 c_\theta^3 + 2c_\theta^2 ) + s_\theta^4(s_\theta^2-2c_\theta^2) ] I_2  \\
    \Gamma_{\mathcal C}^{(12)} &= (c_\theta^4-2s_\theta^2 c_\theta) v_r \Lambda
\end{align}

which is precisely the covariance matrix if we had simply made a direct calculation applying $\mathcal A$ four times to obtain the covariance matrix. In such a case, simplifying further as per the main text, we see that this corresponds to an effective attenuating channel on the system.

\end{appendix}
\end{document}